\renewcommand*\libertine@figurestyle{LF}
\renewcommand*\libertine@figurestyle{OsF}
\definecolor{green}{RGB}{0,127,0}
\definecolor{red}{RGB}{191,0,0}
\theoremstyle{plain}
\newtheorem{lemma}{Lemma}[section]
\newaliascnt{theorem}{lemma}
\newtheorem{theorem}[theorem]{Theorem}
\newaliascnt{theoremdefinition}{lemma}
\newtheorem{theoremdefinition}[theoremdefinition]{Theorem-Definition}
\newaliascnt{corollary}{lemma}
\newtheorem{corollary}[corollary]{Corollary}
\newaliascnt{proposition}{lemma}
\newaliascnt{problem}{lemma}
\newaliascnt{question}{lemma}
\newaliascnt{conjecture}{lemma}
\newaliascnt{definition}{lemma}
\newtheorem{definition}[definition]{Definition}
\newaliascnt{assumption}{lemma}
\newaliascnt{claim}{lemma}
\theoremstyle{remark}
\newtheorem{remark}{Remark}
\newtheorem{example}{Example}
\crefname{theorem}{theorem}{theorems}
\Crefname{theorem}{Theorem}{Theorems}
\crefname{lemma}{lemma}{lemmas}
\Crefname{lemma}{Lemma}{Lemmas}
\crefname{corollary}{corollary}{corollaries}
\Crefname{corollary}{Corollary}{Corollaries}
\crefname{proposition}{proposition}{propositions}
\Crefname{proposition}{Proposition}{Propositions}
\crefname{problem}{problem}{problems}
\Crefname{problem}{Problem}{Problems}
\crefname{question}{question}{questions}
\Crefname{question}{Question}{Questions}
\crefname{conjecture}{conjecture}{conjectures}
\Crefname{conjecture}{Conjecture}{Conjectures}
\crefname{definition}{definition}{definitions}
\Crefname{definition}{Definition}{Definitions}
\crefname{assumption}{assumption}{assumptions}
\Crefname{assumption}{Assumption}{Assumptions}
\crefname{claim}{claim}{claims}
\Crefname{claim}{Claim}{Claims}
\crefname{theoremdefinition}{theorem-definition}{theorem-definitions}
\Crefname{theoremdefinition}{Theorem-Definition}{Theorem-Definitions}
\crefname{remark}{remark}{remarks}
\Crefname{remark}{Remark}{Remarks}
\crefname{example}{example}{examples}
\Crefname{example}{Example}{Examples}
\DeclareMathOperator{\Sym}{Sym}
\DeclareMathOperator{\Alpha}{Alpha}
\DeclareMathOperator{\Beta}{Beta}
\DeclareMathOperator{\Planch}{Planch}
\DeclareMathOperator{\dd}{d\!}
\DeclareMathOperator{\sgn}{sgn}
\DeclareMathOperator{\spec}{spec}
\newcommand{\Lapformal}{\widetilde{\mathcal{L}}}
\newcommand{\Lap}{\mathcal{L}}
\newcommand{\C}{\mathbb{C}}
\newcommand{\R}{\mathbb{R}}
\newcommand{\Q}{\mathbb{Q}}
\newcommand{\Z}{\mathbb{Z}}
\newcommand{\N}{\mathbb{N}}
\newcommand{\M}{\mathcal{M}}
\newcommand{\PP}{\mathbb{P}}
\newcommand{\MM}{\mathcal{M}}
\newcommand{\E}{\mathbb{E}}
\newcommand{\Y}{\mathbb{Y}}
\newcommand{\LL}{\mathscr{L}}
\newcommand{\xx}{\mathbf{x}}
\newcommand{\yy}{\mathbf{y}}
\newcommand{\Luk}{\mathbf{L}}
\newcommand{\la}{\lambda}
\newcommand{\Prob}{\mathbb P}
\newcommand{\Sy}[1]{\mathfrak{S}_{#1}}
\title[Crystallization of discrete $N$-particle systems at high temperature]{Crystallization of discrete $N$-particle systems at high temperature}
\author[C.~Cuenca]{Cesar Cuenca}
\address{
Department of Mathematics,
The Ohio State University,
231 West 18th Avenue,
Columbus, OH 43210, USA.
}
\email{cesar.a.cuenk@gmail.com}
\author[M.~Dołęga]{Maciej Dołęga}
\address{
Institute of Mathematics, 
Polish Academy of Sciences, 
ul. Śniadeckich 8, 
00-956 Warszawa, Poland.
}
\email{mdolega@impan.pl}
\thanks{CC was partially supported by the NSF grant DMS-2348139 and by the Simons Foundation's Travel Support for Mathematicians grant MP-TSM-00006777. This research was funded in whole or in part by {\it Narodowe Centrum Nauki}, grant 2021/42/E/ST1/00162. For the purpose of Open Access, the author has applied a CC-BY public copyright licence to any Author Accepted Manuscript (AAM) version arising from this submission.}
\begin{document}
\emergencystretch 3em

\begin{abstract}
This is the second paper in a series studying the global asymptotics of discrete $N$-particle systems with inverse temperature parameter $\theta$ in the high temperature regime. In the first paper, we established necessary and sufficient conditions for the Law of Large Numbers at high temperature in terms of Jack generating functions. In this paper, we derive a functional equation for the moment generating function of the limiting measure, which enables its analysis using analytic tools. We apply this functional equation to compute the densities of the high temperature limits of the \emph{pure Jack measures}. As a special case, we obtain the high temperature limit of the large fixed-time distribution of the discrete-space $\beta$-Dyson Brownian motion of Gorin--Shkolnikov.

Two special cases of our densities are the high temperature limits of discrete versions of the G$\beta$E, computed in \cite{AllezBouchaudGuionnet2012}, and L$\beta$E, computed in \cite{AllezBouchaudMajumdarVivo2013}. Moreover, we prove the following crystallization phenomenon of the particles in the high temperature limit: the limiting measures are uniformly supported on disjoint intervals with unit gaps and their locations correspond to the zeros of explicit special functions with all roots located in the real line. We also show that these zeros correspond to the spectra of certain unbounded Jacobi operators.
\end{abstract}

\maketitle

\tableofcontents

\section{Introduction}

\subsection{Overview}

The study of random $N$-particle systems on the real line, inspired by models from Statistical Physics and Random Matrix Theory, has long been a central topic in probability. For specific values of the inverse temperature parameter $\beta$ --- notably $\beta=1,2,4$ --- these systems correspond to eigenvalue distributions of classical random matrices, and Wigner’s semicircle law~\cite{Wigner1958} is the foundational result for their large-$N$ behavior.

\smallskip

A remarkable feature is that the global limiting distribution of the aforementioned systems, known as $\beta$-ensembles or one-dimensional log-gas systems, is universal in the sense that, as the size of the system tends to infinity, it depends only on the potential and not on the inverse temperature $\beta$, provided that $\beta N\to \infty$~\cite{Forrester2010}. By contrast, in the high temperature regime, where $\beta$ decreases with $N$ so that $\beta N \to \gamma > 0$, the limiting law becomes a one-parameter family determined by $\gamma$. This transition, which bridges the gap between independent particles and classical ensembles, has been rigorously analyzed for several models, including the Hermite and Laguerre $\beta$-ensembles~\cite{AllezBouchaudGuionnet2012,AllezBouchaudMajumdarVivo2013,DuyShirai2015,TrinhTrinh2021}, as well as in more general settings~\cite{Benaych-GeorgesCuencaGorin2022,MergnyPotters2022}, where precise criteria for the Law of Large Numbers (LLN) in this regime have been established.

\smallskip

A similar phenomenon was expected to occur in the discrete counterparts of $\beta$-ensembles.
And indeed, it has been confirmed that, for various models, the global limiting distribution is universal in the fixed temperature regime~\cite{DolegaFeray2016,BorodinGorinGuionnet2017,DolegaSniady2019,Huang2021,Moll2023,CuencaDolegaMoll2023,DimitrovGaoGuNiedernhofer2025}. The high temperature regime has also been analyzed recently for three different models~\cite{DolegaSniady2019,CuencaDolegaMoll2023,CuencaDolega2025} through the powerful method of moments.
The proofs naturally led to various moment-cumulant type formulae, always expressed as weighted generating functions of certain lattice paths, known as \emph{Łukasiewicz paths}, which generalize well-known moment-cumulant type formulae from classical and free probability.

\smallskip

In our recent paper~\cite{CuencaDolega2025}, we established necessary and sufficient conditions for the LLN of discrete $N$-particle systems with inverse temperature parameter $\beta$, when the inverse temperature decays proportionally to the number of particles, in terms of Jack generating functions with parameter $\theta$, where $\beta = 2\theta$. The Jack generating functions can be informally interpreted as a version of the Fourier transform that is well-suited for the study of many concrete discrete models.
An important outcome of our moment method approach is the aforementioned explicit combinatorial description of the moments of the limiting measure, expressed in terms of the \emph{quantized $\gamma$-cumulants} that are limits of Taylor coefficients of the Jack generating functions. Moreover, we showed that the limiting measure is uniquely determined by the moments. Nevertheless, several natural questions remained open, e.g.: What do the limiting measures look like? Do they admit densities? Can we understand our combinatorial moment-cumulant formula in terms of an analytic framework?

\smallskip

This is the second paper in a series studying the global asymptotics of discrete $N$-particle systems with inverse temperature parameter $\beta$ in the high temperature regime, and its purpose is to address the problems raised above, among others. The last question is a natural path to answer the previous ones and also, potentially, to extend our analysis to models that do not have all moments finite. Inspired by the success of the analytic approach to Free Probability theory --- where, in that generality, instead of working with the sequence of cumulants, one works with the $R$-transform and proves its existence as an analytic function for a large class of probability measures by means of a functional equation between the $R$-transform and the Cauchy transform --- we develop a similar approach for the discrete $\beta$-ensembles in the high temperature regime. We prove a functional equation between the \emph{$\gamma$-quantized $R$-transform}, which is the generating function of the quantized $\gamma$-cumulants, and the moment generating function. Remarkably, this functional equation differs drastically from the combinatorial transform in~\cite{CuencaDolega2025} and from the analytic formula obtained also from the high temperature regime, but for continuous models~\cite{Benaych-GeorgesCuencaGorin2022}.

\smallskip

The functional equation discussed above enables us to analyze the limiting measures using analytic tools, and we apply it to compute the densities of the high temperature limits of the \emph{pure Jack measures}, which are discrete $N$-particle systems that depend on the Jack parameter $\theta>0$ and arise from Jack-positive specializations of the ring of symmetric functions.
As a special case, we obtain the high temperature limit of the large fixed-time distribution of the discrete-space $\beta$-Dyson Brownian motion of Gorin--Shkolnikov~\cite{GorinShkolnikov2015}; the parameters are related by $\beta=2\theta$.
In addition, two special cases of our densities are discrete analogues of the high temperature limits of the G$\beta$E, computed in \cite{AllezBouchaudGuionnet2012}, and the L$\beta$E, computed in \cite{AllezBouchaudMajumdarVivo2013}.

\smallskip

Although the structure of the combinatorial moment formulas for the limiting measures in the continuous and discrete settings are similar, our simulations in~\cite{CuencaDolega2025} suggested that the densities in the discrete setting are qualitatively different; see~\cref{fig:Markov_Simulation}. Indeed, we prove that the following crystallization phenomenon of the particles occurs in the high temperature limit: the limiting measures are uniformly supported on disjoint intervals with gaps of unit length.
The right endpoints of these intervals turn out to be determined by the zeros of explicit special functions involving Gauss and confluent hypergeometric functions; see \cref{fig:Comparison1/2,fig:Comparison1}.
We will show that these functions are real-rooted by relating their zeros to the spectra of certain unbounded Jacobi operators.
It is worth mentioning that a similar crystallization phenomenon was previously observed in~\cite{DolegaSniady2019} for the limit of Jack--Plancherel measures in the high temperature regime, and it was proved in~\cite{CuencaDolegaMoll2023} that the locations of the particles in the limit correspond to the zeros of yet another special function, the Bessel function of the first kind.
However, the nature of the crystallization in that case was a direct consequence of the scaling of the model, while in the present situation, the fact that the particles are supported on disjoint intervals with unit gaps comes as a surprising consequence of the detailed analysis of the functional equation.

\smallskip

After this overview, we proceed to state our main results more precisely. We start by recalling the combinatorial transform from~\cite{CuencaDolega2025} that expresses the moments of the limiting measure in terms of the \emph{quantized $\gamma$-cumulants}. Then we state our main theorem, which is a functional equation between the $\gamma$-quantized $R$-transform and the moment generating function. Finally, we discuss some applications for computing densities of limiting measures for discrete $N$-particle systems in the high temperature regime. 

\subsection{LLN for discrete N-particle systems at high temperature and the gamma-quantized R-transform}

Let $\{\PP_N\}_{N\ge 1}$ be a sequence, where each $\PP_N$ is a probability measure on the set $\Y(N)$ of integer partitions $\la_1\ge\cdots\ge\la_N\ge 0$, possibly depending on the Jack parameter $\theta > 0$.
We assume that all $\PP_N$ have small tails, so that their \emph{Jack generating functions}, defined by
\begin{equation}
G_{N,\theta}(x_1,\dots,x_N) := \sum_{\la\in\Y(N)}\PP_N(\la)\frac{P_\la(x_1,\dots,x_N;\theta)}{P_\la(1^N;\theta)},
\end{equation}
are analytic in disks of radii larger than $1$; above, $P_\la(x_1,\dots,x_N;\theta)$ are the Jack polynomials. In our previous paper~\cite{CuencaDolega2025} we gave a characterization of the LLN for discrete $N$-particle systems $\LL_1 > \cdots > \LL_N$, with $\LL_i := \la_i - \theta (i-1)$, in the \emph{high temperature regime}
\begin{equation}\label{eqn:high_intro}
N\to\infty,\quad \theta\to 0,\quad N\theta\to\gamma,
\end{equation}
where $\gamma\in(0,\infty)$ is a fixed constant. We proved that there exist $m_1,m_2,\dots\in\R$ such that, for all $s\in\Z_{\ge 1}$ and $k_1,\dots,k_s\in\Z_{\ge 1}$, we have the following convergence of mixed moments:
\begin{equation}
\lim_{\substack{N\to\infty,\\ N\theta \to \gamma}}\frac{1}{N^s}\,\E_{\PP_N} \left[ \prod_{j=1}^s{ \sum_{i=1}^N{\LL_i^{k_j}} } \right]
= \prod_{j=1}^s{m_{k_j}}
\end{equation}
if and only if the following two conditions are satisfied:
\begin{enumerate}
	\item $\displaystyle\lim_{\substack{N\to\infty\\N\theta\to\gamma}} \frac{1}{(n-1)!}\cdot\frac{\partial^n}{\partial x_1^n} \,\ln(G_{N,\theta})\,\Big|_{(x_1,\dots,x_N) = (1^N)} = \kappa_n^\gamma$ exists and is finite, for all $n \in \Z_{\geq 1}$,

	\item $\displaystyle\lim_{\substack{N\to\infty\\N\theta\to\gamma}} \frac{\partial^r}{\partial x_{i_1} \cdots \partial x_{i_r}} \,\ln(G_{N,\theta})\,\Big|_{(x_1,\dots,x_N) = (1^N)} = 0$, for all $r\ge 2$ and all $i_1,\dots,i_r\in\Z_{\geq 1}$ with at least two distinct indices among $i_1,\dots,i_r$.
\end{enumerate}
Consequently, if $m_1,m_2,\dots$ is the moment sequence of a unique probability measure $\mu^\gamma$, then $\mu^\gamma$ is the weak limit of the empirical measures $\mu^{(N)} := \frac{1}{N}\sum_{i=1}^N \delta_{\LL_i}$, in the regime~\eqref{eqn:high_intro}.

\smallskip

The values $\kappa_1^\gamma, \kappa_2^\gamma,\cdots$ will be called the \emph{quantized $\gamma$-cumulants} of the limiting measure $\mu^\gamma$. This terminology is used because, upon certain normalization followed by the limit $\gamma\to\infty$, the quantities $\kappa_n^\gamma$ turn into the coefficients of the quantized $R$-transform from~\cite{BufetovGorin2015b}, as explained in \cite[Remark 8]{CuencaDolega2025}.
We showed in~\cite{CuencaDolega2025} that they determine and, at the same time, are uniquely determined by the moments $m_1,m_2,\dots$ of $\mu^\gamma$, according to the following recursive relations
\begin{multline}\label{eq:intro_moms_cums}
m_\ell =  \sum_{\Gamma\in\Luk(\ell)}
\frac{\Delta_\gamma\left(x^{1+\text{\#\,horizontal steps at height $0$ in $\Gamma$}}\right)\big|_{x=\kappa_1^\gamma}}{1+\text{\#\,horizontal steps at height $0$ in $\Gamma$}}
\,\prod_{i\ge 1}\big(\kappa_1^\gamma + i\big)^{\text{\#\,horizontal steps at height $i$ in $\Gamma$}}\\
\cdot\prod_{j\ge 1}\big(\kappa_j^\gamma + \kappa_{j+1}^\gamma\big)^{\text{\#\,steps $(1,j)$ in $\Gamma$}}(j+\gamma)^{\text{\#\,down steps from height $j$ in $\Gamma$}},\quad\text{ for all }\ell\in\Z_{\ge 1},
\end{multline}
where $\Luk(\ell)$ is the set of Łukasiewicz paths of length $\ell$ (certain lattice paths in $\Z^2$ whose definition is recalled in \cref{sec:recall_LLN}), and $\Delta_\gamma(f)(x) := \frac{1}{\gamma}(f(x)-f(x-\gamma))$.

The generating series
\begin{equation*}
  R^\gamma_\mu(z) := \sum_{n\ge 1}{ \frac{\kappa_n^\gamma z^n}{n} }
\end{equation*}
for the quantized $\gamma$-cumulants of $\mu$ will be called the \emph{$\gamma$-quantized $R$-transform of $\mu$}. Equation~\ref{eq:intro_moms_cums} provides a combinatorial recipe for calculating this transform for measures with finite moments of all orders.
Then a natural question is whether the definition of $\gamma$-quantized $R$-transform can be extended to larger classes of measures, possibly by treating it as a complex analytic function.
In a similar vein, but for free cumulants instead of quantized $\gamma$-cumulants, the theory of Free Probability defines the $R$-transform as an analytic function that satisfies a certain functional equation involving the Cauchy transform.
The main theorem of this paper is a similar functional equation between the $\gamma$-quantized $R$-transform and the moment generating function of any probability measure with all finite moments.

\begin{theorem}[\cref{thm:r_transform} in the text]\label{thm:intro_3}
Assume that the sequences $(m_n)_{n\ge 1}$ and $\big(\kappa^\gamma_n\big)_{n\ge 1}$ satisfy the relations~\eqref{eq:intro_moms_cums}.
Then there exists a unique sequence $\big(c^\gamma_n\big)_{n\ge 1}$ such that we have the following equalities of formal power series in $z$ and $z^{-1}$, respectively,
\begin{equation}\label{eq:intro_display_formalLap}
\begin{aligned}
\exp\left(\sum_{n\geq 1}\frac{\kappa_n^\gamma z^n}{n}\right) &= 
1+\sum_{n\ge 1}\frac{c_n^\gamma}{\gamma^{\uparrow n}}z^n,\\
1+\sum_{n\ge 1}\frac{(-1)^n c_n^\gamma}{z^{\uparrow n}} &=
\exp\left(\gamma\cdot\Lapformal\left\{\frac{t}{1-e^{-t}}\sum_{n\geq 1}\left(\frac{(-1)^n m_n}{n!}-\frac{\gamma^n}{(n+1)!}\right)t^{n-1}\right\}(z)\right),
\end{aligned}
\end{equation}
where $z^{\uparrow n} := \prod_{i=0}^{n-1}{(z+i)}$ and $\Lapformal\left\{ \sum_{n\ge 0}{\frac{a_n}{n!} t^n} \right\}(z) := \sum_{n\ge 0}{a_n z^{-n-1}}$.
\end{theorem}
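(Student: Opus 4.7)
The first equation of~\eqref{eq:intro_display_formalLap} uniquely determines each $c_n^\gamma$ from $\kappa_1^\gamma,\ldots,\kappa_n^\gamma$ by reading off the coefficient of $z^n$ on both sides: since $\gamma>0$, the rising factorial $\gamma^{\uparrow n}$ is nonzero, so every $c_n^\gamma$ is a universal polynomial in the $\kappa_j^\gamma$ and in $\gamma$. The genuine content of the theorem is the second identity, which I must verify under the assumption that $(m_n)$ is expressed in terms of $(\kappa_n^\gamma)$ via the Łukasiewicz-path formula~\eqref{eq:intro_moms_cums}. My plan is to compare both sides as formal power series in $z^{-1}$, working in the ring of formal power series in $\kappa_1^\gamma,\kappa_2^\gamma,\ldots$ over $\Q(\gamma)$.

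The algebraic glue between the two sides is the basis change from ordinary reciprocal powers $1/z^{n+1}$ to Pochhammer reciprocals $1/z^{\uparrow n+1}$. Expanding $\frac{t}{1-e^{-t}}=\sum_{k\ge 0}\beta_k\, t^k/k!$ as a power series (with $\beta_k$ closely related to Bernoulli numbers) and applying $\Lapformal$ termwise produces combinations of $1/z^{\uparrow k}$'s; concretely, the multiplier $\frac{t}{1-e^{-t}}$ is exactly what converts Laplace images from the monomial basis $\{1/z^{n+1}\}$ into the Pochhammer basis $\{1/z^{\uparrow n+1}\}$ in which $\widetilde{C}(z):=1+\sum_{n\ge 1}(-1)^n c_n^\gamma/z^{\uparrow n}$ is naturally expressed. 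To bring the combinatorial formula into play, I would reorganize the sum in~\eqref{eq:intro_moms_cums} via a first-return decomposition of Łukasiewicz paths: each nontrivial path either begins with a horizontal step at height $0$ or with an up-step launching an excursion that ends with a down-step returning to height $0$, after which another (shorter) path continues. This produces a self-referential generating-function identity for the bracketed series inside $\Lapformal$, in which the ``shift of base level by one'' is implemented by the weight factors $(j+\gamma)$ at each descent and $(\kappa_1^\gamma+i)$ at each horizontal step at height $i$. Iterating this recursion and applying $\Lapformal$ with the Bernoulli multiplier then yields a closed-form Pochhammer series that can be compared coefficient-by-coefficient against $\gamma^{-1}\log\widetilde{C}(z)$ extracted from the first equation of~\eqref{eq:intro_display_formalLap}.

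The principal obstacle is the simultaneous bookkeeping of the $\gamma^{\uparrow n}$-factors on the left-hand side (entering through the first equation) and the $z^{\uparrow n}$-factors on the right-hand side (entering through $\widetilde{C}(z)$), compounded by the difference operator $\Delta_\gamma$ that acts on $x=\kappa_1^\gamma$ inside the weight of horizontal steps at height $0$. A useful first step is the degenerate case $\kappa_j^\gamma=0$ for all $j\ge 2$, in which $c_n^\gamma$ reduces to a rising factorial in $\kappa_1^\gamma$ and the combinatorial sum collapses to a one-variable expression; the counterterm $-\gamma^n/(n+1)!$ inside $\Lapformal$ is, in this light, precisely the subtraction of the $\kappa_1^\gamma=0$ contribution, ensuring the Laplace image is well defined and matches the left side. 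Once the identity is verified in this baseline case, I would extend to the general case by a polynomial-identity argument, since both sides of the second equation are polynomials of bounded degree in the $\kappa_j^\gamma$ at each order in $z^{-1}$.
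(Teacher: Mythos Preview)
Your proposal sketches a direct combinatorial attack, and this is genuinely different from what the paper does---but as written it is a research plan, not a proof. None of the key steps (the first-return recursion on \L ukasiewicz paths, the basis-change claim for $t/(1-e^{-t})$, the verification of the degenerate case $\kappa_j^\gamma=0$ for $j\ge 2$) is actually carried out; you state them as things you ``would'' do. The obstacle you flag---handling the $\Delta_\gamma$-operator on height-$0$ steps together with the simultaneous $\gamma^{\uparrow n}$ and $z^{\uparrow n}$ bookkeeping---is real, and in fact the paper remarks explicitly (just before its proof of Theorem~\ref{thm:r_transform}) that finding an elementary combinatorial argument of this type is left as an \emph{open problem}.

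The paper's own proof bypasses the \L ukasiewicz formula entirely. It fixes a deterministic sequence of partitions $\lambda^{(N)}\in\Y(N)$ whose empirical measures converge, and uses two identities of Okounkov--Olshanski for shifted Jack polynomials $Q^*_{(k)}$. The binomial formula gives the one-variable Jack generating function as $\sum_{k\ge 0} Q^*_{(k)}(\lambda^{(N)};\theta)\,z^k/(N\theta)^{\uparrow k}$, so that $c_k=\lim_{N\theta\to\gamma}Q^*_{(k)}(\lambda^{(N)};\theta)$ and the first equation follows from HT-appropriateness. A second identity expresses $\sum_{k\ge 0}(-1)^k Q^*_{(k)}(\lambda^{(N)};\theta)/z^{\uparrow k}$ as an explicit product of Gamma functions in $z$. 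Taking logarithms, inserting the asymptotic expansion of $\ln\Gamma(z)$ (this is where $t/(1-e^{-t})$ enters, via Bernoulli numbers), and passing to the high-temperature limit produces the second equation for these particular moment sequences. A Zariski-density argument---both sides being polynomial in $m_1,\dots,m_n$ over $\C(\gamma)$ at each order---then extends the identity to arbitrary sequences. The combinatorial structure of~\eqref{eq:intro_moms_cums} is invoked only through the LLN (Theorem~\ref{theo:main1}), never unpacked directly.

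In short: your strategy is plausible and would be an interesting alternative, but the proposal does not execute the hard steps, and the paper's machinery (shifted Jack polynomials plus Gamma asymptotics) is precisely what circumvents them.
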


Although the combinatorial transform \eqref{eq:intro_moms_cums} can sometimes be used to show the uniqueness of $\mu$, it is not clear how to use it to study other properties of $\mu$.
Some special cases of combinatorial transforms coming from Viennot's construction~\cite{Viennot1985} are amenable to analysis through the associated orthogonal polynomials, but~\eqref{eq:intro_moms_cums} does not fit into this framework, therefore we needed to find another approach. Our \cref{thm:intro_3} enables us to analyze the limiting measure using analytic tools, and we apply it to study the high temperature limits of the \emph{pure Jack measures} to be discussed later. Indeed, \cref{thm:intro_3} will provide a precise formula for the characteristic function of the limiting measure and will lead to a formula for its density, which displays a remarkable crystallization phenomenon. Additionally, we hope that \cref{thm:intro_3} will become a key tool to extend the $\gamma$-quantized $R$-transform to measures with heavy tails in future research.

\subsection{Crystallization phenomenon for non-intersecting random walks at high temperature}

An important class of examples that can be understood through our $\gamma$-quantized $R$-transform are certain ensembles of discrete $N$-particle random point configurations with log-gas type interaction that depend on the inverse temperature parameter $\beta>0$.
One such motivating example derives from the continuous-time, discrete-space, integrable Markov process introduced and studied by Gorin--Shkolnikov~\cite{GorinShkolnikov2015}, which has a diffusive limit to the $\beta$-Dyson Brownian motion.
This process can be regarded as a one-parameter deformation, non-intersecting $N$-particle version of the Poisson random walk on the integer lattice; see \cref{fig:Markov_chain} for three simulations of this process, with different values of $\theta$.

\begin{figure}[htbp]
  \centering
  \begin{subfigure}[b]{0.325\textwidth}
    \centering
    \includegraphics[width=\textwidth]{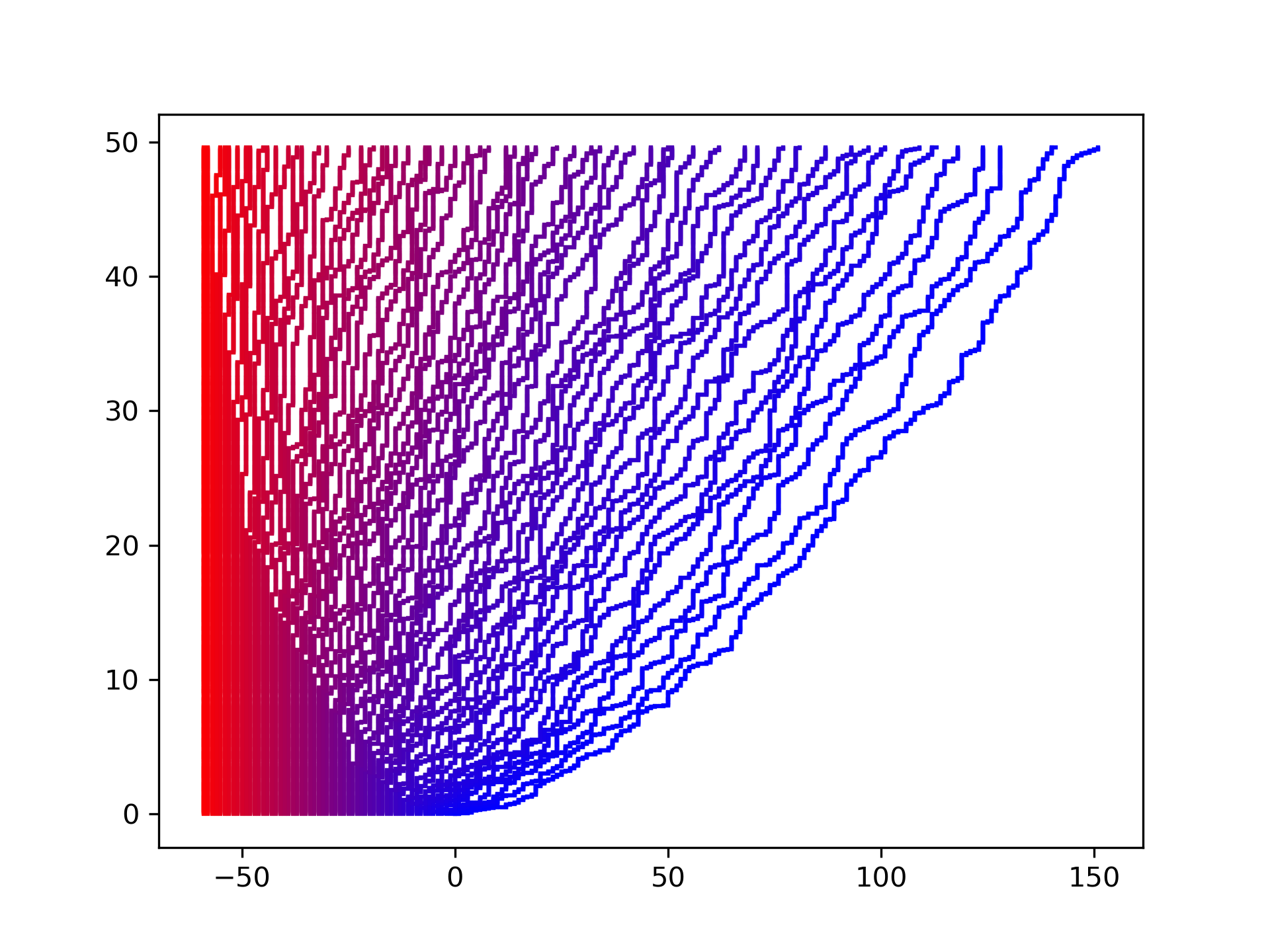}
    \caption{$N=60,\,\theta=1$}
    \label{fig:left}
  \end{subfigure}
  \hfill
  \begin{subfigure}[b]{0.325\textwidth}
    \centering
    \includegraphics[width=\textwidth]{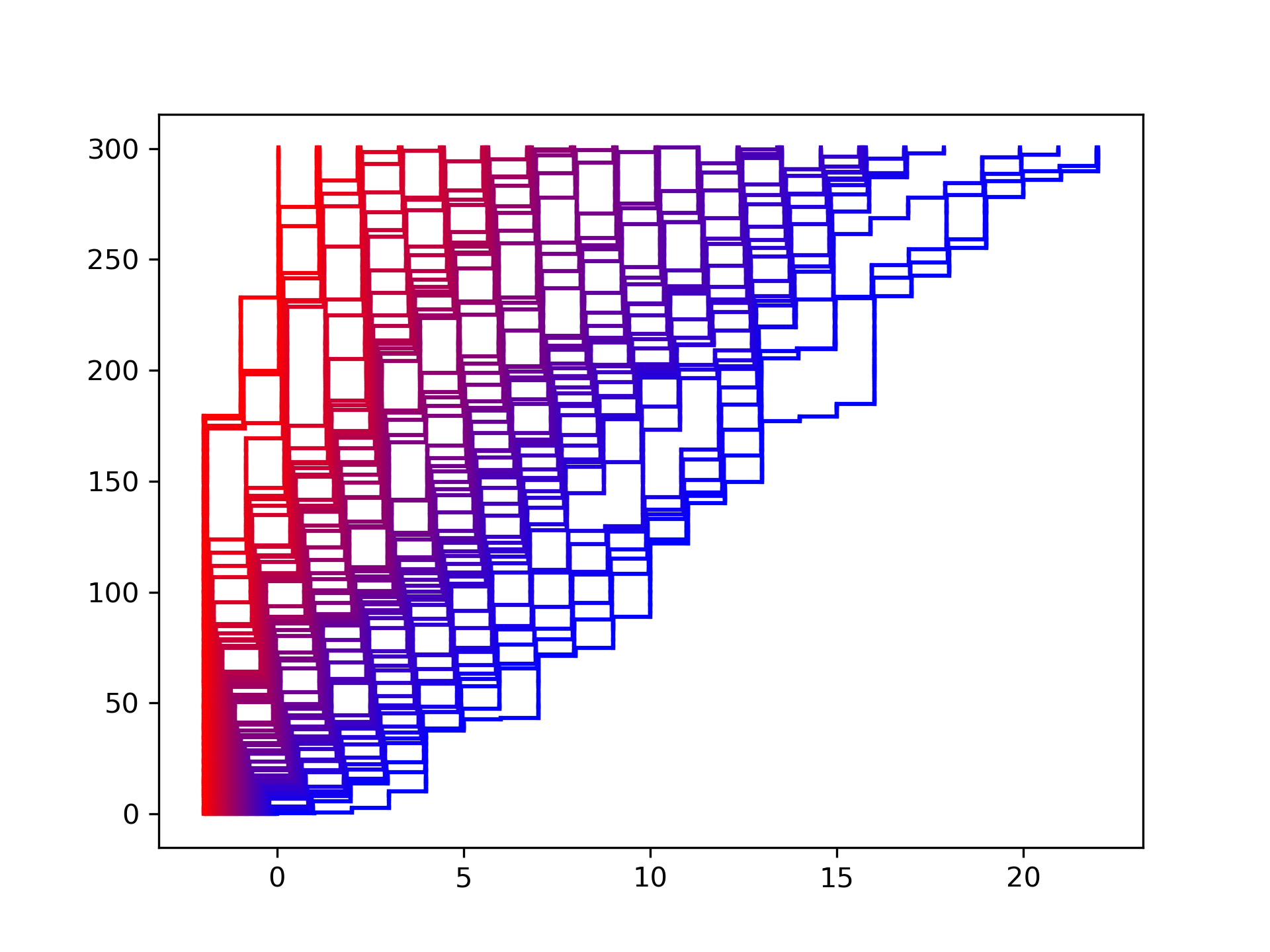}
    \caption{$N=60,\,\theta=\frac{2}{N}$}
    \label{fig:center}
  \end{subfigure}
  \hfill
  \begin{subfigure}[b]{0.325\textwidth}
    \centering
    \includegraphics[width=\textwidth]{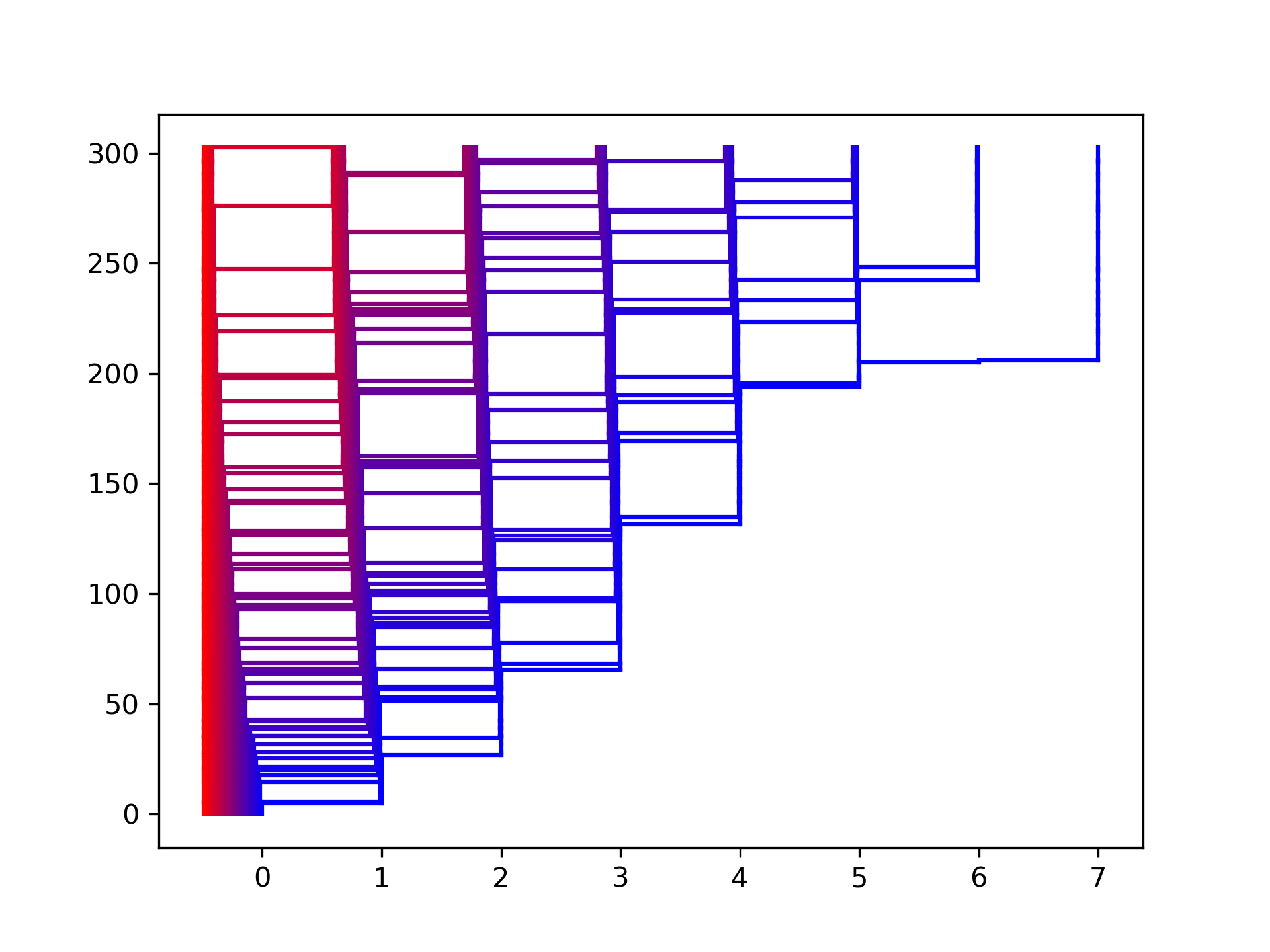}
    \caption{$N=60,\,\theta=\frac{1}{2N}$}
    \label{fig:right}
  \end{subfigure}
  \caption{Three simulations of the Gorin--Shkolnikov process with $N=60$ particles and trivial initial configuration. The simulations show the asymptotic behavior in the fixed temperature regime on the left with $\theta=1$, and in the high temperature regime with $\gamma = 2$ in the middle, and $\gamma = \frac{1}{2}$ on the right. The $x$-axis represents the space and the $y$-axis represents the time.}
  \label{fig:Markov_chain}
\end{figure}

The construction of~\cite{GorinShkolnikov2015} was based on the Plancherel specialization of the ring of symmetric functions.
Later, Huang~\cite{Huang2021} defined two other variations that result from replacing the Plancherel by the pure alpha and pure beta specializations.
They are non-intersecting $N$-particle versions of the geometric and Bernoulli random walks on $\Z$, respectively.
The LLN was proved in \cite{Huang2021,CuencaDolega2025} for \emph{fixed temperature $\theta^{-1}$}, general initial conditions, and appropriately chosen sequences of times; this can be appreciated in the left panel of \cref{fig:Markov_chain}.

\smallskip

In \cite{CuencaDolega2025}, we investigated the \emph{high temperature limits} of the fixed-time distributions of the previous three families of Markov processes. We proved that for general initial conditions, the answer depends on two ingredients: one is a new operation of probability measures, called the \emph{quantized $\gamma$-convolution}, explained in \cite{CuencaDolega2025}, and secondly, on the limiting measure when the Markov process is started at the trivial initial condition.
Moreover, we also proved that the \emph{pure Jack measures} (to be discussed in the next subsection) are the fixed-time distributions of the Markov processes started at the trivial initial conditions, therefore it remains to find the limiting measures of pure Jack measures. We proved that these limiting measures are uniquely determined by their moments, but solving the moment problem was beyond our reach.
Looking at the simulations in the middle and right panels of \cref{fig:Markov_chain}, however, we observed that these limiting measures have the following \emph{crystallization phenomenon}:~the large fixed-time distributions of the Markov processes appear to converge to a mixture of uniform distributions supported on the union of countably many intervals. By running a simulation with a larger number of particles and looking at the position of particles at specific times, this phenomenon becomes more visible; see \cref{fig:Markov_Simulation} below.
It seems that the support of the limiting measure has gaps of length $1$, so it is determined by the right endpoints of these intervals. In this paper, we prove that this is indeed true, and we fully describe the crystallization phenomenon for all three cases of pure Jack measures.

\begin{figure}[htbp]
  \centering
    \includegraphics[width=\textwidth]{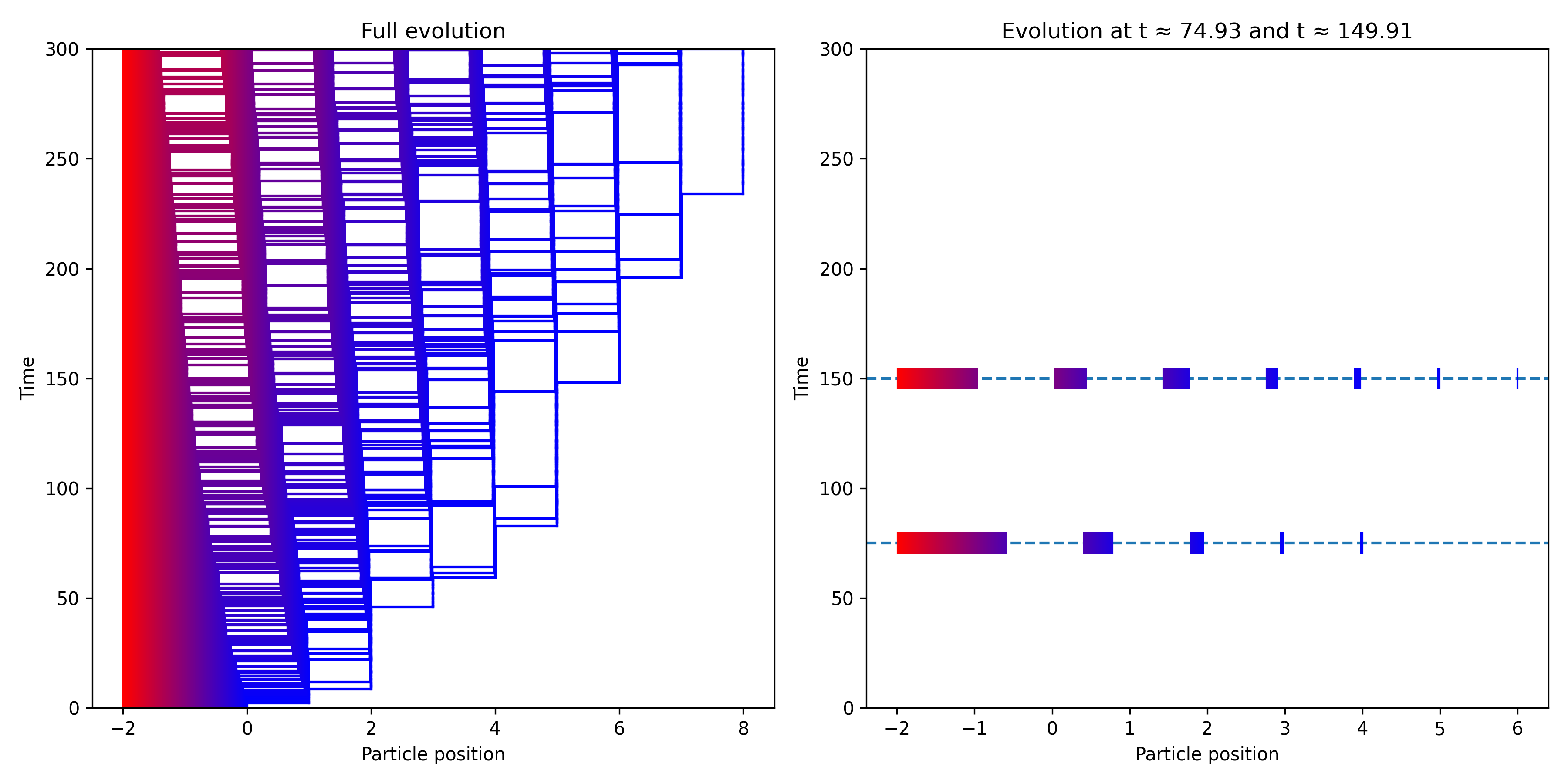}
  \caption{A simulation of the Gorin--Shkolnikov process with $N=300$ particles and trivial initial configuration $\LL^{(0)}_i = \theta(1-i)$, for $1 \leq i \leq N$, in the high temperature regime with $\theta = \frac{2}{N}$. Note that the initial configuration converges to the uniform distribution on $[-2,0]$. On the right-hand side, we see the distributions of particles at times $t\approx\frac{\eta}{\theta}$, for $\eta = 1/2$ and $\eta = 1$, which seem to be supported on the union of disjoint intervals with gaps between consecutive ones having length close to $1$.}
  \label{fig:Markov_Simulation}
\end{figure}

\begin{figure}[htbp]
  \centering
    \includegraphics[width=0.8\textwidth]{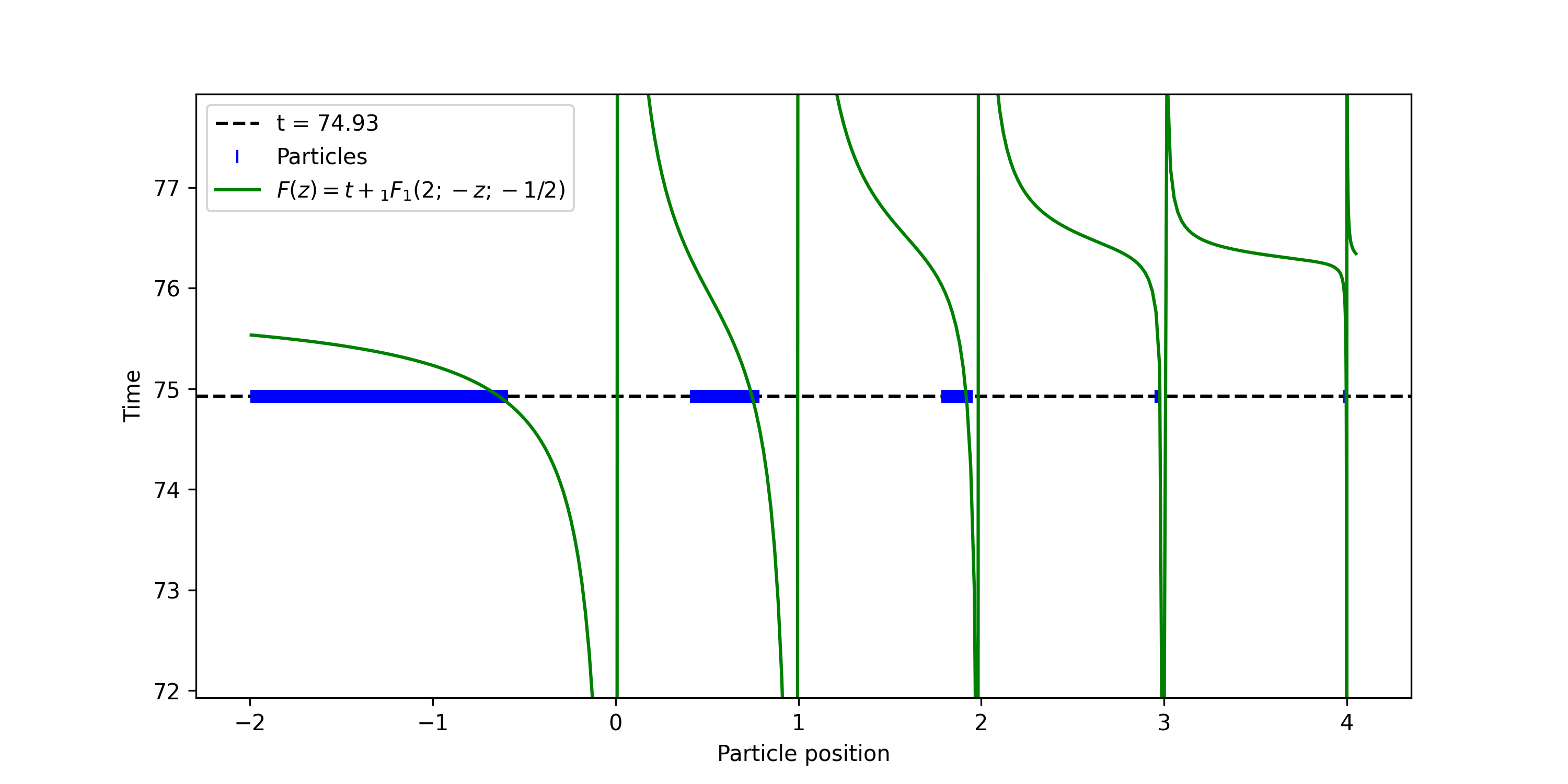}
  \caption{Juxtaposition of the simulation (in blue) from \cref{fig:Markov_Simulation} at time $t\approx\frac{\eta}{\theta}$ and the graph (in green) of the confluent hypergeometric function ${}_1F_1(\gamma;z;-\eta)$, where $\eta = 1/2$, in the real variable $z$. \cref{thm:application_plancherel} predicts that the right endpoints of the intervals on the picture approximate the zeroes of ${}_1F_1(\gamma;z;-\eta)$, which are all real.}
  \label{fig:Comparison1/2}
\end{figure}

\begin{figure}[htbp]
  \centering
    \includegraphics[width=0.8\textwidth]{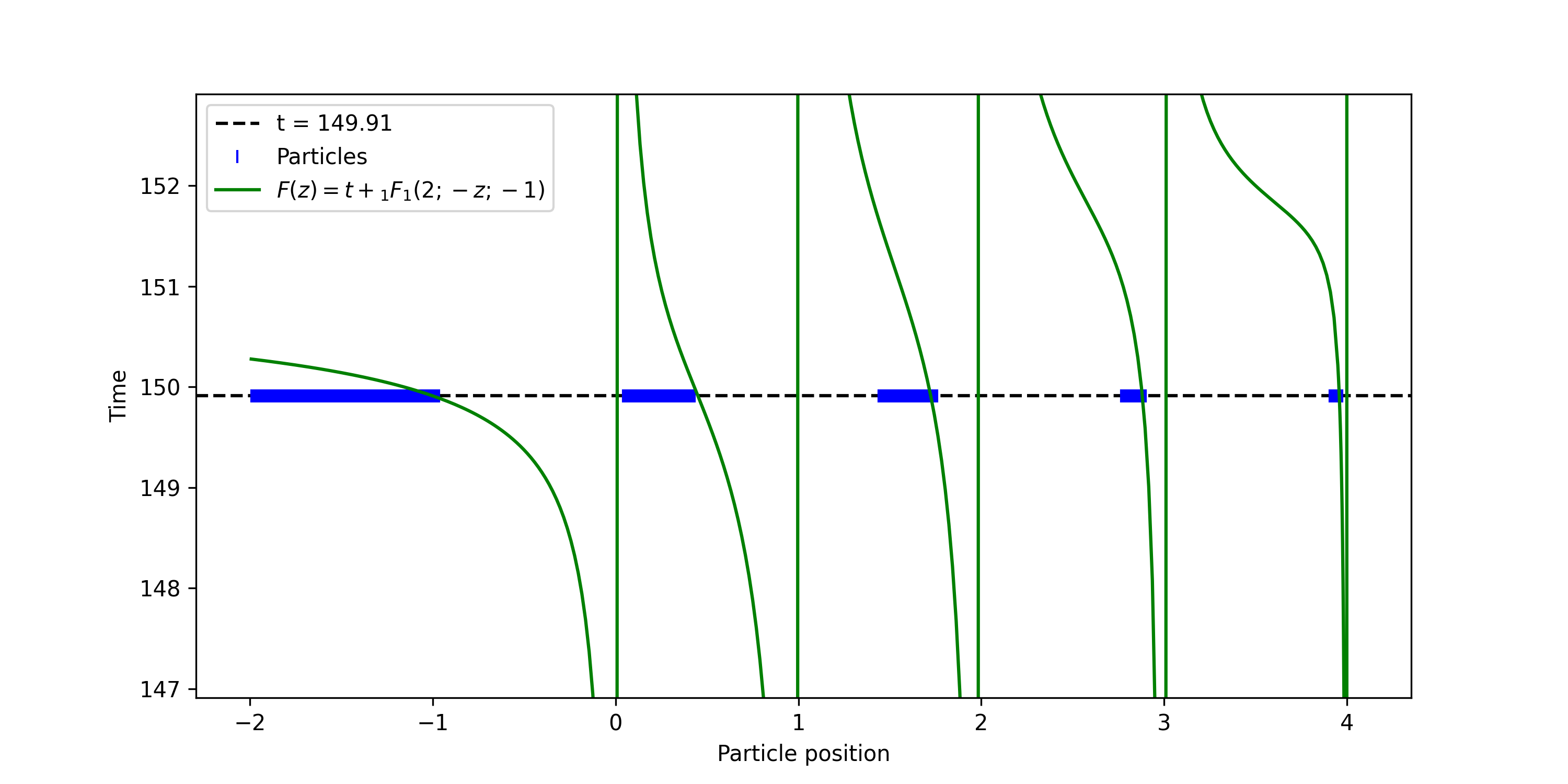}
  \caption{The same comparison as in \cref{fig:Comparison1/2} for $\eta = 1$.}
  \label{fig:Comparison1}
\end{figure}

\subsubsection{Pure Jack measures}

Denote the Jack symmetric functions by $P_\la(;\theta)$, and their $Q$-normalizations by $Q_\la(;\theta)$.
Given a Jack-positive specialization of the ring of symmetric functions $\rho\colon\Sym\to\R$ and $N\in\Z_{\ge 1}$, the formula
\[
\PP_{\rho;N}^{(\theta)}(\la) \propto Q_\la(\rho;\theta)P_\la(1^N;\theta)
\]
defines a probability measure on the set $\Y(N)$ of partitions of length at most $N$.
Given the classification result for Jack-positive specializations, there are three natural choices of $\rho$, namely, the pure alpha, pure beta, and Plancherel specializations.\footnote{The necessary definitions are recalled in \cref{sec:jacks}.}
The resulting probability measures on partitions will be called the \emph{pure Jack measures}.

\smallskip

We are interested in the empirical measures of sequences $\big\{\PP_{\rho_N;\,N}^{(\theta_N)}\big\}_{N\ge 1}$ in the high temperature regime
\[
\theta_N\to 0,\quad N\theta_N\to\gamma,\quad\text{as }\, N\to\infty,\ \text{for some }\gamma\in(0,\infty),
\]
where all $\rho_N$ remain within the pure alpha, pure beta, or Plancherel families of Jack-positive specializations. In our previous paper \cite{CuencaDolega2025}, we proved that the corresponding limiting measures $\mu^{\gamma;c;\eta}_{\Alpha}, \mu^{\gamma;c;M}_{\Beta}$ and $\mu^{\gamma;\eta}_{\Planch}$, are uniquely determined by their moments, and we gave combinatorial formulas for those moments. In this paper, we go a step further and employ our main \cref{thm:intro_3} to calculate the densities of the limiting measures, thereby proving the aforementioned crystallization phenomenon. The pure Plancherel Jack measure should be regarded as the discrete version of the Gaussian $\beta$-ensemble.
The papers \cite{DuyShirai2015,AllezBouchaudGuionnet2012} obtain a density for the high temperature limit of the G$\beta$E; see also \cite[Sec.~4.4]{Benaych-GeorgesCuencaGorin2022}.
Likewise, the pure alpha Jack measures are discrete versions of the Laguerre $\beta$-ensembles, and for the latter ones, the density was computed in \cite{AllezBouchaudMajumdarVivo2013}; see also \cite[Sec.~4.1]{Benaych-GeorgesCuencaGorin2022}.
Our \cref{thm:application_plancherel,thm:application_alpha} are discrete analogues, though the densities are qualitatively different.
Finally, we note that the pure beta Jack measure does not have a continuous analogue, therefore \cref{thm:application_beta} also does not.
This is because the classification result in \cref{thm:KOO} contains the pure beta specializations, while the continuous classification theorem \cite{OlshanskiVershik1996,AssiotisNajnudel2021} that is related to the Multivariate Bessel Functions of type A does not have these parameters. It is worth mentioning that the combinatorial formulas for the moments of the limiting measures in the continuous and discrete settings are similar, but the limiting measures are very different, and in the discrete case they always admit the observed crystalized form as given in \cref{thm:application_plancherel,thm:application_alpha,thm:application_beta}.

\smallskip

Finally, let us discuss one case in detail, namely, the limiting empirical measure $\mu^{\gamma;\eta}_{\Planch}$ of pure Plancherel measures, which also describes the limiting distribution in the Gorin--Shkolnikov process at time $\approx\frac{\eta}{\theta_N}$ in the regime $N\theta_N \to \gamma, N\to \infty$.
For $\gamma,\eta>0$, let $J^{(\text{planch})}_{\gamma,\eta}$ be the Jacobi operator on $\ell^2(\Z_{\ge 1})$ whose matrix has diagonal entries $\la_n = 1-\eta-n$, and off-diagonal entries $w_n = -\sqrt{\eta(\gamma+n)}$, for all $n\in\Z_{\ge 1}$.
Then we prove in \cref{sec:Jacobi,sec:analytic,sec:densities} that:

\begin{enumerate}
\item[\it{(i)}] the spectrum of $J^{(\text{planch})}_{\gamma,\eta}$ is discrete and consists of simple eigenvalues that can be labeled $\ell_1^{(\gamma;\eta)}>\ell_2^{(\gamma;\eta)}>\cdots$;

\item[\it{(ii)}] the zeroes of the entire function $F^{\text{planch}}_{\gamma,\eta}(z)=\frac{1}{\Gamma(z)}\cdot{}_1F_1(\gamma;z;-\eta)$ are all real, coincide with the spectrum of $J^{(\text{planch})}_{\gamma,\eta}$, and satisfy the inequalities
\[
-\gamma\le -\ell_1^{(\gamma;\eta)},\quad 1-\ell_k^{(\gamma;\eta)}\le -\ell_{k+1}^{(\gamma;\eta)}, \text{ for all }k\ge 1;
\]

\item[\it{(iii)}] a density for the limiting measure of (empirical measures of) the pure Plancherel Jack measures is
\[
\frac{\dd\mu^{\gamma;\eta}_{\Planch}(x)}{\dd x} = \frac{1}{\gamma}\left\{ \mathbf{1}_{\big[-\gamma,\,-\ell_1^{(\gamma;\eta)}\big]}(x) + \sum_{k=1}^\infty{ \mathbf{1}_{\big[1-\ell_k^{(\gamma;\eta)},\, -\ell_{k+1}^{(\gamma;\eta)}\big]}(x) } \right\}.
\]
\end{enumerate}

\subsection*{Acknowledgements}

CC is grateful to the Institute of Mathematics of the Polish Academy of Sciences (IMPAN) for hosting him during his research visits. MD would like to thank to Sławomir Dinew for stimulating discussions.

\section{Jack polynomials and Jack generating functions}\label{sec:jacks}

In this section, we briefly recall some useful material from \cite[Chapters~I \& VI.10]{Macdonald1995}, \cite{Stanley1989} and \cite{KerovOkounkovOlshanski1998}.
We largely follow the presentation of \cite[Sec.~2]{CuencaDolega2025}, and refer the reader to this article for further details.

\subsection{Partitions}

A partition is a sequence $\la=(\la_1\ge\la_2\ge\cdots)\in(\Z_{\ge 0})^\infty$ such that $\la_i=0$, for all large $i$. The size of $\la$ is $|\la|:=\sum_{i\ge 1}{\la_i}$, while its length is $\ell(\la):=\max\{ i\ge 1 \colon \la_i\ne 0 \}$.
The length of the empty partition $\emptyset=(0,0,0,\cdots)$ is zero, by convention. For a partition $\la$ with $\ell(\la)\le N$, we will write $\la=(\la_1,\dots,\la_N)$, with the understanding that infinitely many zeroes are being omitted.
For any $N\in\Z_{\ge 0}$, denote
\begin{equation*}
\Y(N) := \{ \la\in\Y \colon \ell(\la)\le N \},
\end{equation*}
and $\MM(\Y(N))$ the simplex of probability measures on $\Y(N)$.
We also let $\Y=\bigcup_{N\ge 0}{\Y(N)}$ be the set of all partitions.
We will often identify partitions with their Young diagrams.

\subsection{Algebra of symmetric functions}

We let $\Sym$ be the real algebra of symmetric functions in the variables $\xx=(x_1,x_2,\dots)$, i.e., this is the polynomial algebra $\Sym=\R[p_1,p_2,\dots]$ in the power sum variables $p_k=\sum_{i\ge 1}{x_i^k}$.

Upon the specialization $x_{N+1}=x_{N+2}=\cdots=0$, any $f\in\Sym$ becomes a symmetric polynomial $f(x_1,\dots,x_N)\in\R[x_1,\dots,x_N]^{\Sy{N}}$ and, conversely, for any sequence $\big\{f_N(x_1,\dots,x_N)\in\R[x_1,\dots,x_N]^{\Sy{N}}\big\}_{N\ge 1}$ of symmetric polynomials such that
\begin{equation}\label{eq:stability_property}
f_{N+1}(x_1,\dots,x_N,0)=f_N(x_1,\dots,x_N),\quad\text{for all }N\ge 1,
\end{equation}
their projective limit leads to a symmetric function $f(\xx)=\varprojlim{f_N(x_1,\dots,x_N)}\in\Sym$.

\subsection{Jack symmetric functions}

For any $N\in\Z_{\ge 1}$, the \emph{Jack polynomials} $P_\la(x_1,\dots,x_N;\theta)$, for $\la\in\Y(N)$, are defined uniquely by certain properties of triangularity (w.r.t. the monomial symmetric functions) and orthogonality (w.r.t. an explicit weight function); see~\cite[Ch.~VI.10]{Macdonald1995} for details.
They satisfy the stability property~\eqref{eq:stability_property}, which shows the existence of the projective limits $P_\la(\xx;\theta)=\varprojlim{P_\la(x_1,\dots,x_N;\theta)}$, called the \emph{Jack symmetric functions}.
Sometimes we denote the Jack symmetric functions by $P_\la(;\theta)$, without specifying the alphabet $\xx=(x_1,x_2,\dots)$.
It is known that $\{P_\la(;\theta)\}_{\la\in\Y}$ is a basis of $\Sym$.

The Jack symmetric functions satisfy the following \emph{Cauchy identity}
\begin{equation}\label{eqn:cauchy}
\sum_{\la\in\Y}P_\la(\xx;\theta)Q_\la(\yy;\theta) = \exp\Bigg\{ \theta\sum_{k\ge 1}{\frac{p_k(\xx)p_k(\yy)}{k}} \Bigg\} =: H_\theta(\xx;\yy),
\end{equation}
where $\xx=(x_1,x_2,\dots)$, $\yy=(y_1,y_2,\dots)$, and
\begin{equation}\label{j_lambda}
Q_\la(;\theta) := \prod_{(i,j)\in\la}\frac{(\la_i-j)+\theta(\la_j'-i)+\theta}{(\la_i-j)+\theta(\la_j'-i)+1}\cdot P_\la(;\theta)
\end{equation}
is the $Q$-normalization of Jack symmetric functions.
In~\cref{j_lambda}, we denoted $\la'$ the conjugate partition to $\la$, so that $\la_j'$ is the length of the $j$-th largest column of the Young diagram of $\la$.


\subsection{Jack generating functions}

For any $N\in\Z_{\ge 1}$ and $\PP_N\in\M(\Y(N))$, define the \emph{Jack generating function} of $\PP_N$ as the following formal power series in $N$ variables:
\begin{equation}\label{eq:JackGeneratFunction}
G_{\PP_N,\theta}(x_1,\dots,x_N) := \sum_{\la\in\Y(N)}\PP_N(\la)\frac{P_\la(x_1,\dots,x_N;\theta)}{P_\la(1^N;\theta)}.
\end{equation}

\begin{definition}\label{def:small_tails}
We say that $\PP_N\in\M(\Y(N))$ has \textbf{small tails} if
\begin{equation}\label{eqn:laurent_series}
\sum_{\la\in\Y(N)}{ \PP_N(\la)\frac{P_\la(z_1,\dots,z_N;\theta)}{P_\la(1^N;\theta)} }
\end{equation}
converges absolutely on an $N$-dimensional disk of the form
\begin{equation}\label{eqn:annulus}
D_{N;R} := \{(z_1,\dots,z_N)\in\C^N\colon |z_i|<R,\text{ for all }i=1,\dots,N\},
\end{equation}
for some $R>1$.
\end{definition}

For example, if $\PP_N$ is supported on a finite subset of $\Y(N)$, then $\PP_N$ has small tails.
Note that if $\PP_N$ has small tails, then the formal power series~\eqref{eqn:laurent_series} defining $G_{\PP_N,\theta}(x_1,\dots,x_N)$ defines an analytic function in~\eqref{eqn:annulus}, and moreover $G_{\PP_N,\theta}(1^N) = 1$.

\subsection{Jack-positive specializations}\label{sec:jack_specs}

\begin{definition}
A \textbf{specialization} of $\Sym$ is a unital algebra homomorphism $\rho\colon\Sym\to\R$.
For any $f\in\Sym$, denote the image of $f$ under $\rho$ by $f(\rho)$.
The specialization $\rho\colon\Sym\to\R$ is said to be \textbf{Jack-positive} if
\begin{equation*}
P_\la(\rho;\theta) \ge 0,\quad\text{for all }\la\in\Y.
\end{equation*}
Note that this definition depends on the value of $\theta$, which is usually understood from the context.
\end{definition}

\begin{theoremdefinition}[\cite{KerovOkounkovOlshanski1998}]\label{thm:KOO}
Given $\theta>0$ fixed, the set of Jack-positive specializations is in bijection with points of the following Thoma cone:
\begin{multline*}
\Omega := \Big\{ (\alpha, \beta, \delta)\in(\R_{\ge 0})^\infty\times(\R_{\ge 0})^\infty\times\R_{\ge 0} \quad \Big| \quad \alpha=(\alpha_1,\alpha_2,\dots),\ \beta=(\beta_1,\beta_2,\dots),\\
\alpha_1\ge\alpha_2\ge\dots\ge 0,\quad \beta_1\ge\beta_2\ge\dots\ge 0,\quad \sum_{i=1}^\infty(\alpha_i+\beta_i) \le \delta \Big\}.
\end{multline*}
For any $\omega=(\alpha,\beta,\delta)\in\Omega$, the corresponding Jack-positive specialization $\rho_\omega\colon\Sym\to\R$ is
\begin{equation*}
p_1(\rho_\omega)=\delta,\qquad p_k(\rho_\omega)=\sum_{i=1}^\infty\Big( \alpha_i^k + (-\theta)^{k-1}\beta_i^k \Big),\quad k\ge 2.
\end{equation*}
If $\omega=(\alpha,\beta,\delta)\in\Omega$, we say that $\alpha_i, \beta_i,\delta$ are the \textbf{Thoma parameters} of $\rho_\omega$.
\end{theoremdefinition}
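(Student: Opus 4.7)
The plan is to prove the bijection by producing a Jack-positive specialization $\rho_\omega$ from each $\omega\in\Omega$ (sufficiency) and by classifying every Jack-positive specialization (exhaustiveness). For sufficiency, I would decompose $\rho_\omega$ as a (possibly infinite) ``union'' of three fundamental types of specializations: the pure alpha specialization $p_k\mapsto a^k$ for a single $a=\alpha_i$; the pure beta specialization $p_k\mapsto (-\theta)^{k-1}b^k$ for a single $b=\beta_i$; and the Plancherel specialization $p_1\mapsto \delta_0:=\delta-\sum_i(\alpha_i+\beta_i)$ with $p_k\mapsto 0$ for $k\ge 2$, where \emph{union} means the specialization whose power-sum values are summed. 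Each building block is Jack-positive: pure alpha from the explicit principal-specialization formula for $P_\la(a,0,0,\dots;\theta)$; pure beta from Jack duality $\omega_\theta(P_\la)=Q_{\la'}$ applied to the pure alpha case; and Plancherel from the known positivity of the coefficients of $P_\la$ in $\exp(\theta\delta_0\, p_1)$. Closure of Jack-positivity under union follows from the skew Cauchy identity $Q_\la(\rho_1\cup\rho_2)=\sum_\mu Q_{\la/\mu}(\rho_2)\,Q_\mu(\rho_1)$ combined with nonnegativity of skew $Q$-Jack evaluations on Jack-positive specializations, itself a direct consequence of Cauchy. The constraint $\sum_i(\alpha_i+\beta_i)\le\delta$ guarantees convergence of the union at the level of power sums.

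For exhaustiveness, I would use the Jack analogue of the Vershik--Kerov ergodic method. Introduce the \emph{Jack-weighted Young graph}, whose vertex set is $\Y$ and whose edges go from $\mu$ to $\la$ whenever $\la$ covers $\mu$ (written $\mu\nearrow\la$), weighted by the Pieri coefficients $c^\la_\mu(\theta)$ appearing in $p_1\cdot P_\mu=\sum_{\mu\nearrow\la} c^\la_\mu(\theta)\, P_\la$. Given a Jack-positive $\rho$ with $\delta:=p_1(\rho)$, define $\varphi_\rho(\la):=P_\la(\rho;\theta)/H_\la(\theta)$ for a suitably normalizing Jack hook product $H_\la(\theta)$. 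Applying $\rho$ to the Pieri rule shows that $\varphi_\rho$ is a nonnegative harmonic function on the graph, and conversely every such nonnegative harmonic function comes from a unique normalized Jack-positive specialization. By a Choquet-type argument, every $\varphi_\rho$ is an integral mixture of extreme harmonic functions, so it remains to classify the extreme ones.

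To identify extreme harmonic functions, construct the associated Doob-transformed Markov chain on $\Y$ whose transition probabilities are proportional to $c^\la_\mu(\theta)\,\varphi(\la)\,H_\la(\theta)$ for $\mu\nearrow\la$. For $\varphi$ extreme, this chain is ergodic on a suitable tail $\sigma$-algebra, and one shows that for almost every sample path $\emptyset=\la^{(0)}\nearrow\la^{(1)}\nearrow\cdots$ the scaled row and column lengths $\la^{(N)}_i/N\to\alpha_i$ and $(\la^{(N)})'_j/N\to\beta_j$ converge with $\sum_i(\alpha_i+\beta_i)\le\delta$, and the normalized Jack ratios $P_{\la^{(N)}}(x_1,\dots,x_N;\theta)/P_{\la^{(N)}}(1^N;\theta)$ converge to a multiplicative symmetric-function-valued limit whose power-sum values are exactly $\sum_i(\alpha_i^k+(-\theta)^{k-1}\beta_i^k)$ for $k\ge 2$, together with $p_1\mapsto\delta$. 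This identifies every extreme point as some $\rho_\omega$.

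The main obstacle is the asymptotic analysis of Jack polynomials underlying this ergodic argument: one must establish sharp, uniform asymptotics of $P_\la(x_1,\dots,x_N;\theta)$ and $P_\la(1^N;\theta)$ along arbitrary sequences of partitions whose rows and columns scale linearly in $N$. Classically ($\theta=1$), this is Thoma's theorem for Schur functions; for general $\theta$ the required estimates rest on contour-integral or generalized binomial-coefficient representations for Jack polynomials, together with the explicit combinatorial form of $c^\la_\mu(\theta)$ and $H_\la(\theta)$. Once the extreme points are classified, injectivity of $\omega\mapsto\rho_\omega$ is immediate, since $\alpha_i,\beta_j,\delta$ can be recovered from $\rho$ via the decay rates of $p_k(\rho)$ as $k\to\infty$, thereby completing the bijection.
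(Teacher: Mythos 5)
The paper does not prove this statement at all: it is quoted verbatim from \cite{KerovOkounkovOlshanski1998}, so your proposal has to be measured against that classification theorem rather than against an argument in the text. Your overall plan (Jack-positivity of the three elementary specializations plus closure under unions for the ``if'' direction; a boundary/harmonic-function analysis of the Jack graph for the ``only if'' direction) is indeed the route the literature takes, but as written it has a concrete gap in each half. In the sufficiency half, the claim that nonnegativity of skew evaluations $Q_{\lambda/\mu}(\rho)$ at an \emph{arbitrary} Jack-positive specialization is ``a direct consequence of Cauchy'' is unjustified: the skew functions expand in the $Q_\nu$ with Littlewood--Richardson-type coefficients whose nonnegativity for general $\theta>0$ is not available, and in practice skew positivity for abstract Jack-positive specializations is deduced \emph{from} the classification, not used to prove it. What your union argument actually needs, and what is true and easy, is skew positivity for the three explicit building blocks only: $P_{\lambda/\mu}$ at a single alpha variable is an explicitly nonnegative Pieri coefficient times $a^{|\lambda|-|\mu|}$ supported on horizontal strips, the beta case follows by the duality automorphism, and the Plancherel case by a limit of pure alpha specializations; the closure argument should be run with those, iterated and passed to the limit using $\sum_i(\alpha_i+\beta_i)\le\delta$.

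In the exhaustiveness half, the Choquet/ergodic skeleton you describe is standard, but the entire content of the theorem sits in the step you yourself flag as ``the main obstacle'': uniform asymptotics of $P_{\lambda^{(N)}}(x_1,\dots,x_N;\theta)/P_{\lambda^{(N)}}(1^N;\theta)$ along arbitrary sequences with linearly scaled rows and columns, i.e.\ the Jack analogue of the Vershik--Kerov ring theorem. In \cite{KerovOkounkovOlshanski1998} this is carried out through shifted Jack polynomials and the binomial formula (the same objects $Q^*_{\mu}$ that this paper uses in \cref{sec:R_transform}), which convert the boundary problem into controlling normalized limits of $Q^*_\mu\big(\lambda^{(N)};\theta\big)$; without these estimates, including the argument that every limit point is multiplicative with power sums of the stated form and satisfies $\sum_i(\alpha_i+\beta_i)\le\delta$, the proposal is a roadmap rather than a proof. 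Two smaller points: harmonicity of $\varphi_\rho$ on the Jack graph requires the normalization $p_1(\rho)=1$ (general $\delta\ge 0$ is then recovered by rescaling, and $\delta=0$ forces the trivial specialization), and recovering $(\alpha,\beta,\delta)$ from $\rho$ is cleanest via the pole structure of the generating series $\sum_{k\ge 2}p_k(\rho)z^{k-1}$ rather than via ``decay rates'' of $p_k(\rho)$, since the alpha and beta contributions enter with opposite signs for even $k$ and can nearly cancel.
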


\begin{example}
\label{exam:pure_alpha}
If all Thoma parameters of $\omega=(\alpha,\beta,\delta)\in\Omega$ vanish, except possibly $\alpha_1,\dots,\alpha_n$ and $\delta = \sum_{i=1}^n\alpha_i$, then we say that $\rho_\omega$ is a \emph{pure alpha specialization} and we denote it by $\Alpha(\alpha_1,\dots,\alpha_n)$.
For any $f=f(x_1,x_2,\dots)\in\Sym$, we have
\begin{equation*}
f\big(\Alpha(\alpha_1,\dots,\alpha_n)\big) = f(\alpha_1,\dots,\alpha_n),
\end{equation*}
i.e. $\Alpha(\alpha_1,\dots,\alpha_n)$ specializes $n$ variables $x_i\mapsto\alpha_i$ and the rest of $x_i$'s are set to zero.
\end{example}

\begin{example}\label{exam:pure_beta}
If the only possibly nonzero Thoma parameters of $\omega=(\alpha,\beta,\delta)\in\Omega$ are $\beta_1,\dots,\beta_n$ and $\delta = \sum_{i=1}^n\beta_i$, then we say that $\rho_\omega$ is a \emph{pure beta specialization} and we denote it by $\Beta(\beta_1,\dots,\beta_n)$.
For any $f\in\Sym$, we have
\begin{equation*}
f\big(\Beta(\beta_1,\dots,\beta_n)\big) = (\omega_{\theta^{-1}}f)(\theta\beta_1,\dots,\theta\beta_n),
\end{equation*}
where $\omega_{\theta^{-1}}$ is the automorphism of $\Sym$ defined by $\omega_{\theta^{-1}}(p_r):=(-1)^{r-1}\theta^{-1}p_r$, for all $r\ge 1$.
This means that $\Beta(\beta_1,\dots,\beta_n)$ is the composition of $\omega_{\theta^{-1}}$ and the pure alpha specialization $\Alpha(\theta\beta_1,\dots,\theta\beta_n)$.
\end{example}

\begin{example}\label{exam:plancherel}
If all Thoma parameters of $\omega=(\alpha,\beta,\delta)\in\Omega$ vanish, except possibly $\delta$, then we say that $\rho_\omega$ is a \emph{Plancherel specialization} and we denote it by $\Planch(\delta)$.
Alternatively, it is uniquely defined by $p_k(\Planch(\delta)) = \delta\cdot\mathbf{1}_{\{k=1\}}$.
\end{example}


\section{Law of Large Numbers at high temperature for discrete N-particle ensembles}

We recall the main result and applications obtained in~\cite{CuencaDolega2025}.

\subsection{The statement of the LLN}\label{sec:recall_LLN}

We consider sequences of probability measures\footnote{In \cite{CuencaDolega2025}, our main theorem was more general, as it considered finite signed measures, instead of probability measures, on the sets of $N$-signatures, instead of partitions of length $\le N$. For simplicity of exposition, we restrict our setting in this paper.} $\{\PP_N\in\MM(\Y(N))\}_{N \geq 1}$ and study the \emph{high temperature regime}
\begin{equation}\label{eq:HTRegime}
N\to\infty,\quad \theta\to 0,\quad N\theta\to\gamma,
\end{equation}
where $\gamma\in(0,\infty)$ is a fixed constant.
We assume that all $\PP_N$ have small tails, so that their Jack generating functions, denoted
\begin{equation}\label{eqn:jack_G}
G_{N,\theta}(x_1,\dots,x_N) := \sum_{\la\in\Y(N)}\PP_N(\la)\frac{P_\la(x_1,\dots,x_N;\theta)}{P_\la(1^N;\theta)},
\end{equation}
are analytic in disks of radii larger than $1$.
In particular, it follows that the logarithms
\begin{equation}\label{eqn:jack_F}
F_{N,\theta}(x_1,\dots,x_N) := \ln(G_{N,\theta}(x_1,\dots,x_N))
\end{equation}
are also analytic on disks of radii greater than $1$.
Moreover, $F_{N,\theta}(1^N)=0$.

\begin{definition}\label{def:appropriate}
Let $\{\PP_N\in\MM(\Y(N))\}_{N \geq 1}$ be a sequence of probability measures with small tails and let $F_{N,\theta}=F_{N,\theta}(x_1,\dots,x_N)$ be the logarithms of their Jack generating functions, defined by \cref{eqn:jack_G,eqn:jack_F}.
We say that $\{\PP_N\}_{N \geq 1}$ is \textbf{HT-appropriate} if 

\begin{enumerate}
	\item $\displaystyle\lim_{\substack{N\to\infty\\N\theta\to\gamma}} \frac{1}{(n-1)!}\cdot\frac{\partial^n}{\partial x_1^n}\, F_{N,\theta}\big|_{(x_1,\dots,x_N) = (1^N)} = \kappa_n$ exists and is finite, for all $n \in \Z_{\geq 1}$,

	\item $\displaystyle\lim_{\substack{N\to\infty\\N\theta\to\gamma}} \frac{\partial^r}{\partial x_{i_1} \cdots \partial x_{i_r}}\, F_{N,\theta}\big|_{(x_1,\dots,x_N) = (1^N)} = 0$, for all $r\ge 2$ and $i_1,\dots,i_r\in\Z_{\geq 1}$ with at least two distinct indices among $i_1,\dots,i_r$.
\end{enumerate}
\end{definition}

We will sometimes regard measures $\PP_N$ on $\Y(N)$ also as measures on real $N$-tuples $(\LL_1>\dots>\LL_N)$, such that $\LL_i+(i-1)\theta\in\Z$, for all $i=1,\dots,N$, by using the shifted coordinates
\[
\LL_i := \la_i - (i-1)\theta, \ \text{ for all }i=1,\dots,N.
\]

\begin{definition}\label{def:LLN}
Let $\{\PP_N\in\MM(\Y(N))\}_{N \geq 1}$ be a sequence of probability measures with small tails.
We say that $\{\PP_N\}_{N \geq 1}$ \textbf{satisfies the LLN} if  there exist $m_1,m_2,\dots\in\R$ such that, for all $s\in\Z_{\ge 1}$ and $k_1,\dots,k_s\in\Z_{\ge 1}$, we have
\begin{equation}\label{eq:DefConvMoments}
\lim_{\substack{N\to\infty,\\ N\theta \to \gamma}}\frac{1}{N^s}\,\E_{\PP_N} \left[ \prod_{j=1}^s{ \sum_{i=1}^N{\LL_i^{k_j}} } \right]
= \prod_{j=1}^s{m_{k_j}},
\end{equation}
where, in the LHS, the $N$-tuples $(\LL_1>\dots>\LL_N)$ are $\PP_N$-distributed.
\end{definition}

The LHS of~\eqref{eq:DefConvMoments} can be interpreted in terms of the \emph{empirical measures} $\mu_N$ of $\PP_N$, which by definition are the (random) probability measures
\begin{equation}\label{eqn:empirical_measures}
\mu_N := \frac{1}{N}\sum_{i=1}^N\delta_{\LL_i},\text{ where }(\LL_1>\dots>\LL_N)\text{ is }\PP_N\text{--distributed}.
\end{equation}
Indeed, the moments of the empirical measures are
\begin{equation*}
m_k(\mu_N) := \int_{\R}x^k\mu_N(\text{d}x) = \frac{1}{N}\sum_{i=1}^N{\LL_i^k},
\end{equation*}
therefore~\eqref{eq:DefConvMoments} is equivalent to:
\begin{equation*}
\lim_{\substack{N\to\infty\\N\theta\to\gamma}} \E_{\PP_N} \left[ \prod_{j=1}^s{ m_{k_j}(\mu_N) } \right]
= \prod_{j=1}^s{m_{k_j}}.
\end{equation*}
As a result, LLN-satisfaction is equivalent to the convergence $\mu_N\to\mu$ in the sense of moments, in probability, to a probability measure $\mu$ with moments $m_1,m_2,\cdots$.
If $\mu$ is uniquely determined by its moments, it would also follow that $\mu_N\to\mu$ weakly, in probability.

\smallskip

A \emph{Łukasiewicz path $\Gamma$ of length $\ell$} is a lattice path on $\Z^2$ that starts at $(0,0)$, ends at $(\ell,0)$, stays above the $x$-axis, with steps $(1,0)$ (horizontal steps), $(1,-1)$ (down steps), and steps $(1, j)$, for some $j\in\Z_{\ge 1}$ (up steps).
We denote by $\Luk(\ell)$ the set of all Łukasiewicz paths of length $\ell$.
For example, there are exactly five Łukasiewicz paths of length $3$; see \cref{fig:PathsEx}.

\begin{definition}\label{def:mk}
Given $\gamma\in (0,\infty)$ and $\vec{\kappa}=(\kappa_n)_{n\ge 1}$, define $\vec{m}=(m_n)_{n\ge 1}$ by
\begin{multline}\label{eq:Moments}
m_\ell := \sum_{\Gamma\in\Luk(\ell)}
\frac{\Delta_\gamma\left(x^{1+\text{\#\,horizontal steps at height $0$ in $\Gamma$}}\right)(\kappa_1)}{1+\text{\#\,horizontal steps at height $0$ in $\Gamma$}}
\cdot\prod_{i\ge 1}(\kappa_1+i)^{\text{\#\,horizontal steps at height $i$ in $\Gamma$}}\\
\cdot\prod_{j\ge 1}(\kappa_j+\kappa_{j+1})^{\text{\#\,steps $(1,j)$ in $\Gamma$}}(j+\gamma)^{\text{\#\,down steps from height $j$ in $\Gamma$}},\quad\text{for all $\ell\in\Z_{\ge 1}$},
\end{multline}
where the term $\Delta_\gamma$ is the divided difference operator, defined by $\Delta_\gamma(f)(x) := \frac{1}{\gamma}(f(x)-f(x-\gamma))$, followed by evaluation at $x = \kappa_1$.
We denote the map $\vec{\kappa}\mapsto\vec{m}$ by $\mathcal{J}_\gamma^{\kappa\mapsto m}\colon\R^\infty\to\R^\infty$.
\end{definition}

In general, \cref{eq:Moments} has the form 
\[
m_\ell = (\gamma+1)^{\uparrow(\ell-1)}\cdot\kappa_\ell\,+ \text{ polynomial over } \Q[\gamma] \text{ in }\kappa_1,\cdots,\kappa_{\ell-1},
\]
where
\[
g^{\uparrow 0} := 1,\qquad g^{\uparrow n} := \prod_{i=1}^{n}{(g+i-1)},\quad n\in\Z_{\ge 1},
\]
is the \emph{raising factorial}. As a result, we deduce the equations
\begin{equation}\label{eq:kappa-top}
\kappa_\ell = \frac{1}{(\gamma+1)^{\uparrow(\ell-1)}}\cdot m_\ell\,+ \text{ polynomial over } \Q(\gamma) \text{ in }m_1,\cdots,m_{\ell-1},
\end{equation}
i.e., $\mathcal{J}_\gamma^{\kappa\mapsto m}$ is invertible, and the sequences $(\kappa_n)_{n\ge 1}$, $(m_n)_{n\ge 1}$ uniquely determine each other.

\begin{example}\label{ex:mk}
Let $\vec{m}=\mathcal{J}_\gamma^{\kappa\mapsto m}(\vec{\kappa})$.
From \cref{fig:PathsEx}, the third moment $m_3$ is equal to
\begin{equation*}
m_3 = (\gamma+1)(\gamma+2)\kappa_3 + 3(\gamma+1)\kappa_2\kappa_1+\kappa_1^3+3(\gamma+1)\kappa_2+\frac{3}{2}(\gamma+2)\kappa_1^2+\kappa_1-\frac{\gamma^3}{4}.
\end{equation*}
\end{example}

\begin{figure}
	\centering
	\includegraphics[width=\textwidth]{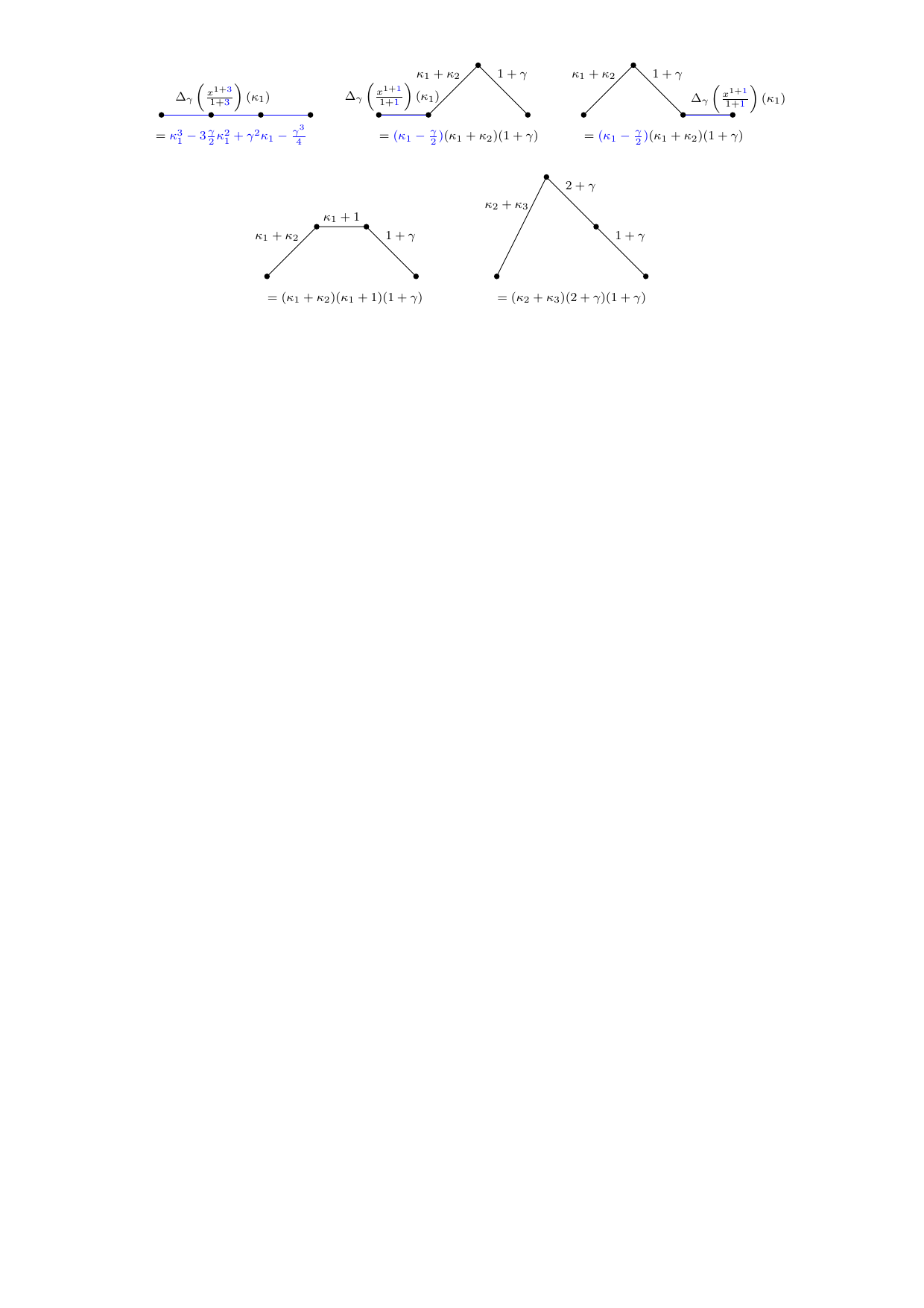}
	\caption{The five Łukasiewicz paths of length $3$, the weights associated to each of their steps, and the overall contribution after applying the divided difference operator $\Delta_\gamma$. The horizontal steps at height $0$ and their weight after applying the divided difference operator are marked in blue.}
	\label{fig:PathsEx}
\end{figure}

\begin{theorem}[Thm.~3.6 in \cite{CuencaDolega2025}]\label{theo:main1}
Let $\{\PP_N\in\MM(\Y(N))\}_{N \geq 1}$ be a sequence of probability measures with small tails.
Then $\{\PP_N\}_{N\ge 1}$ is HT-appropriate if and only if it satisfies the LLN.
If these equivalent conditions hold, then the sequences $\vec{\kappa}=(\kappa_n)_{n\ge 1}$ and $\vec{m}=(m_n)_{n\ge 1}$ are related to each other by $\vec{m}=\mathcal{J}_\gamma^{\kappa\mapsto m}(\vec{\kappa})$.
Furthermore, if there exists $C>0$ such that $|\kappa_n|\le C^n$, for all $n\geq 1$, then there exists a unique probability measure $\mu^\gamma$ such that
\begin{equation*}
m_\ell = \int_{\R}x^\ell \mu^\gamma(\text{d}x),\quad\text{for all }\ell\in\Z_{\ge 1}.
\end{equation*}
Finally, the empirical measures $\mu_N$ of $\PP_N$ converge to $\mu^\gamma$ weakly, in probability.
\end{theorem}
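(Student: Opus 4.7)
The plan is to express each mixed moment $\frac{1}{N^s}\,\E_{\PP_N}[\prod_{j=1}^s\sum_{i=1}^N\LL_i^{k_j}]$ as an explicit polynomial in the partial derivatives of $F_{N,\theta}=\ln G_{N,\theta}$ evaluated at $(1^N)$, and then extract the leading-order asymptotics in the regime \eqref{eq:HTRegime}. The algebraic input is a family of differential operators on $\R[x_1,\dots,x_N]^{\Sy{N}}$ whose eigenvalues on the Jack polynomials $P_\la(x_1,\dots,x_N;\theta)$ are symmetric functions of the shifted coordinates $\LL_i = \la_i - (i-1)\theta$. Applying such operators to $G_{N,\theta}$ and evaluating at $(1^N)$ converts the desired expectations into polynomial combinations of partial derivatives of $G_{N,\theta}$, and writing $G_{N,\theta}=e^{F_{N,\theta}}$ translates everything into polynomial combinations of partial derivatives of $F_{N,\theta}$ at $(1^N)$.

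For the forward direction (HT-appropriate $\Rightarrow$ LLN), condition~(ii) of \cref{def:appropriate} eliminates all mixed partial derivatives in the limit, so only purely diagonal derivatives $\frac{1}{(n-1)!}\partial_1^n F_{N,\theta}|_{(1^N)} \to \kappa_n$ contribute. The surviving terms admit a natural indexing by lattice paths that one identifies with the Łukasiewicz paths of $\Luk(\ell)$: horizontal, up, and down steps correspond to the three distinct types of recursive contributions arising from iterating the differential operator against $e^{F_{N,\theta}}$ and resolving the shift between $\la_i$ and $\LL_i$. The precise weights in~\eqref{eq:Moments} --- the up-step weight $\kappa_j+\kappa_{j+1}$, the down-step weight $j+\gamma$ (in which $\gamma$ appears through the scaling $N\theta\to\gamma$), the horizontal-step weight $\kappa_1+i$ at height $i\ge 1$, and the divided difference $\Delta_\gamma$ at height $0$ --- should emerge from this bookkeeping. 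The factorization of higher mixed moments into $\prod_j m_{k_j}$ follows because any path that crosses between different factors in $\prod_{j=1}^s\sum_i\LL_i^{k_j}$ contributes at least one extra power of $1/N$ relative to the $N^{-s}$ normalization.

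For the reverse direction (LLN $\Rightarrow$ HT-appropriate), one exploits the triangular structure~\eqref{eq:kappa-top}: the leading coefficient of $\kappa_\ell$ in $m_\ell$ is the raising factorial $(\gamma+1)^{\uparrow(\ell-1)}\ne 0$, so the map $\mathcal{J}_\gamma^{\kappa\mapsto m}$ is invertible and any limiting moment sequence $(m_\ell)$ uniquely determines candidate values $(\kappa_\ell)$. An induction on $n$, comparing the forward expansion order by order and using that the LLN controls \emph{all} mixed moments in \eqref{eq:DefConvMoments} (not only expectations of a single power sum), then forces $\frac{1}{(n-1)!}\partial_1^n F_{N,\theta}|_{(1^N)} \to \kappa_n$ and forces the mixed derivatives of order $\ge 2$ to vanish in the limit, establishing conditions~(i)--(ii) of \cref{def:appropriate}.

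For uniqueness and weak convergence: a direct estimate of~\eqref{eq:Moments} shows that $|\kappa_n|\le C^n$ implies $|m_\ell|\le D^\ell\,\ell!$ for some $D>0$ (the $\ell!$ comes from accumulating the down-step weights $(1+\gamma)(2+\gamma)\cdots(\ell+\gamma)$), which is sufficient for Carleman's condition. Hence the moment problem for $\mu^\gamma$ is determinate, giving uniqueness. Moment convergence in probability from the LLN together with determinacy then upgrades, by the standard method of moments, to weak convergence of $\mu_N$ to $\mu^\gamma$ in probability. The main obstacle is the precise combinatorial identification of the derivative-based expansion with the weighted Łukasiewicz sum~\eqref{eq:Moments}: tracking how the shift between $\la_i$ and $\LL_i$ propagates through the Jack eigenvalue formulas, and how the scaling $N\theta\to\gamma$ reorganizes into the specific combination of $(j+\gamma)$ factors and the $\Delta_\gamma$ divided difference, is the delicate technical core.
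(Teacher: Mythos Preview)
This theorem is not proved in the present paper: it is quoted as Theorem~3.6 of the authors' earlier work~\cite{CuencaDolega2025} and used here as a black-box input (most notably in the proof of \cref{thm:r_transform}). The only hint the present paper gives about the original argument is the remark, just before Section~4.2, that \cref{theo:main1} ``was proved by employing Jack generating functions and Cherednik operators.''

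Your outline is consistent with that description. The family of commuting differential operators diagonalized by Jack polynomials, with eigenvalues that are symmetric functions of the shifted coordinates $\LL_i$, is precisely the Cherednik/Sekiguchi--Debiard machinery, and converting the action of such operators on $G_{N,\theta}=e^{F_{N,\theta}}$ into polynomial expressions in the partial derivatives of $F_{N,\theta}$ at $(1^N)$ is the standard opening move. Your treatment of uniqueness via the growth bound $|m_\ell|\le D^\ell\,\ell!$ and Carleman's criterion is also the expected argument, and your use of the triangularity \eqref{eq:kappa-top} for the converse direction is the natural mechanism.

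That said, what you have written is a strategic outline rather than a proof. You yourself flag the ``delicate technical core'': the combinatorial identification of the derivative expansion with the weighted Łukasiewicz-path sum \eqref{eq:Moments}, including the precise emergence of the up-step weights $\kappa_j+\kappa_{j+1}$, the down-step weights $j+\gamma$, the height-$i$ horizontal weights $\kappa_1+i$, and the divided difference $\Delta_\gamma$ at height~$0$. That bijection and weight-matching is the entire substance of the theorem; without it the proposal is a plausible roadmap but not a proof. Similarly, in the reverse direction, the assertion that LLN control of all mixed moments forces the mixed partials in condition~(ii) to vanish requires an actual argument (one must show that no cancellation among nonvanishing mixed derivatives can mimic the factorized limits), which you have not supplied.
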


\subsection{Application to pure Jack measures}

\begin{definition}[\cite{BorodinOlshanski2005}]
For any Jack-positive specializations $\rho_1,\rho_2\colon\Sym\to\R$ satisfying
\begin{equation}\label{eq:finiteness}
\sum_{\la\in\Y}{Q_\la(\rho_1;\theta)P_\la(\rho_2;\theta)} < \infty,
\end{equation}
the \textbf{Jack measure} $\,\PP^{(\theta)}_{\rho_1;\rho_2}(\cdot)$ is the probability measure on the set of all partitions, defined by
\[
\PP^{(\theta)}_{\rho_1;\rho_2}(\la) := \frac{1}{H_\theta(\rho_1;\rho_2)}\cdot Q_\la(\rho_1;\theta)P_\la(\rho_2;\theta),\quad\la\in\Y,
\]
where $H_\theta(\rho_1;\rho_2)$ and $Q_\la(;\theta)$ are defined by \cref{eqn:cauchy,j_lambda}, respectively.
\end{definition}

We are interested in the Jack measures with $\rho_2=\Alpha(1^N)$ is the pure alpha specialization that sets $N$ variables equal to $1$ and the rest to zero (recall \cref{exam:pure_alpha}). In this case, the condition~\eqref{eq:finiteness}, for $\rho_1=:\rho$ and $\rho_2=\Alpha(1^N)$, is satisfied if the series
\begin{equation}\label{eq:stable_series}
\sum_{k\ge 1}\frac{|p_k(\rho)|}{k}z^k
\end{equation}
is absolutely convergent on some disk of radius larger than $1$.
Then the Jack measure $\PP^{(\theta)}_{\rho;\Alpha(1^N)}$ is well-defined and depends on one specialization $\rho$ and $N\in\Z_{\ge 1}$; moreover, it is supported on the subset $\Y(N)\subset\Y$, since $P_\la(1^N;\theta)\ne 0$, only if $\la\in\Y(N)$.
Let us denote this Jack measure by $\PP_{\rho;N}^{(\theta)}$, i.e.
\begin{equation}\label{eq:pure_jack_formula}
\PP_{\rho;N}^{(\theta)}(\la) = \exp\left( -\theta N\sum_{k\ge 1}{\frac{p_k(\rho)}{k}} \right) Q_\la(\rho;\theta)P_\la(1^N;\theta),\quad\la\in\Y(N),
\end{equation}
and denote its Jack generating function by $G_{\rho;N,\theta}(x_1,\dots,x_N)$.
As an application of the Cauchy identity \eqref{eqn:cauchy}, we deduce
\begin{equation}\label{eq:JGF_pure}
G_{\rho;N,\theta}(x_1,\dots,x_N) = \prod_{i=1}^N \exp\left\{ \theta\sum_{k\ge 1}\frac{p_k(\rho)}{k}\,(x_i^k-1)\right \}.
\end{equation}
If $\rho$ is a pure alpha, pure beta, or Plancherel specialization, we call $\PP_{\rho;N}^{(\theta)}$ a \emph{pure Jack measure}.
Formula~\eqref{eq:JGF_pure} allows us to prove the LLN, via \cref{theo:main1}, for the three families of pure Jack measures (see \cite[Sec.~5.2]{CuencaDolega2025}); in all cases, $\theta=\theta_N$ and $\rho=\rho_N$ depend on $N$.
Let us recall these results.

\subsubsection{Pure alpha Jack measures}

In this subsection, $c\in (0,1)$, $\eta>0$ are fixed.
The relevant pure alpha specialization is $\rho=\Alpha\big( c^{\lfloor\eta/\theta\rfloor} \big)$ (if $\theta$ depends on $N$, so does $\rho$), namely
\[
p_k\big( \Alpha\big( c^{\lfloor \eta/\theta \rfloor} \big) \big) = \lfloor \eta/\theta \rfloor\cdot c^k,\quad\text{for all }k\ge 1.
\]
Then the formula~\eqref{eq:pure_jack_formula} gives a probability measure on the set of partitions of length at most $\min\{N,\lfloor\eta/\theta\rfloor\}$, given explicitly by
\begin{multline*}
\PP^{(\theta)}_{\Alpha(c^{\lfloor\eta/\theta\rfloor});\,N}(\la) = (1-c)^{N\theta\lfloor\eta/\theta\rfloor} c^{|\la|}\\
\times\prod_{(i,j)\in\la}{ \frac{ (\lfloor \eta/\theta\rfloor\theta + (j-1) - \theta(i-1)) (N\theta + (j-1) - \theta(i-1)) }
{ ((\la_i-j)+\theta(\la_j'-i)+\theta)((\la_i-j)+\theta(\la_j'-i)+1) } },
\end{multline*}
for all $\la\in\Y\big(\min\{N, \lfloor\eta/\theta\rfloor\}\big)$. These measures will be called \emph{pure alpha Jack measures}.\footnote{The pure alpha Jack measures in \cite{CuencaDolega2025} slightly differ from those here, but can be recovered by the change of variables $\eta\mapsto\eta N\theta$.}

\begin{theorem}[Thm.~5.11 in \cite{CuencaDolega2025}]\label{thm:application1}
    Let $\{\theta_N\}_{N\ge 1}$ be a sequence of positive real numbers such that $N\theta_N\to\gamma$, as $N\to\infty$. Then the sequence of pure alpha Jack measures $\big\{\Prob^{(\theta_N)}_{\Alpha(c^{\lfloor\eta/\theta_N\rfloor});\,N}\big\}_{N\ge 1}$ satisfies the LLN and the corresponding empirical measures converge weakly, in probability, to the probability measure $\mu^{\gamma;c;\eta}_{\Alpha}$ that is uniquely determined by its moments, and that has quantized $\gamma$-cumulants
    \[
    \kappa_n^\gamma\Big[ \mu_{\Alpha}^{\gamma;c;\eta} \Big] = \frac{\eta c^n}{(1-c)^n},\quad\text{for all }n\in\Z_{\ge 1}.
    \]
    The moments of $\mu^{\gamma;c;\eta}_{\Alpha}$ are obtained from these quantized $\gamma$-cumulants and \cref{eq:intro_moms_cums}.
\end{theorem}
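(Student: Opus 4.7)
The plan is to apply \cref{theo:main1} directly by verifying that the sequence of pure alpha Jack measures is HT-appropriate and then checking the growth bound on the quantized $\gamma$-cumulants needed for uniqueness of the limiting measure. The factorized form~\eqref{eq:JGF_pure} of the Jack generating function is the key structural input that makes all computations tractable.

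First I would exploit the product structure. From~\eqref{eq:JGF_pure}, the logarithm of the Jack generating function of $\PP^{(\theta)}_{\Alpha(c^{\lfloor\eta/\theta\rfloor});\,N}$ is a sum of single-variable functions,
\begin{equation*}
F_{N,\theta}(x_1,\dots,x_N) = \sum_{i=1}^N f_{N,\theta}(x_i), \qquad f_{N,\theta}(x) := \theta\lfloor\eta/\theta\rfloor\sum_{k\ge 1}\frac{c^k}{k}(x^k-1),
\end{equation*}
so every mixed partial derivative of $F_{N,\theta}$ in at least two distinct variables vanishes identically. In particular condition~(2) of \cref{def:appropriate} holds automatically, regardless of $N$ and $\theta$.

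Next I would compute the pure derivatives in one variable at $x=1$ and pass to the high temperature limit. Differentiating term by term,
\begin{equation*}
f_{N,\theta}^{(n)}(1) = \theta\lfloor\eta/\theta\rfloor\sum_{k\ge n}(k-1)(k-2)\cdots(k-n+1)\,c^k = \theta\lfloor\eta/\theta\rfloor\,(n-1)!\,\frac{c^n}{(1-c)^n},
\end{equation*}
where the last equality uses the standard identity $\sum_{k\ge n}\binom{k-1}{n-1}c^k = c^n/(1-c)^n$ for $c\in(0,1)$. Dividing by $(n-1)!$ and using $\theta_N\lfloor\eta/\theta_N\rfloor\to\eta$ in the regime $N\theta_N\to\gamma$, condition~(1) of \cref{def:appropriate} holds with the limit $\kappa_n = \eta c^n/(1-c)^n$. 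This identifies the $\kappa_n$ stated in the theorem. \cref{theo:main1} then gives the LLN, and the moments of the limit are computed from these $\kappa_n$ via~\eqref{eq:Moments}, which is exactly \eqref{eq:intro_moms_cums}.

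Finally, I would promote convergence of moments to weak convergence in probability by invoking the last part of \cref{theo:main1}, which requires a geometric bound $|\kappa_n|\le C^n$. Since $|\kappa_n|=\eta\,(c/(1-c))^n$ and $c/(1-c)>0$ is finite, one can take any $C\ge c/(1-c)$ large enough that also $\eta\le C$, and the bound is immediate. Hence there exists a unique probability measure $\mu^{\gamma;c;\eta}_{\Alpha}$ with the moment sequence determined by these cumulants, and the empirical measures of $\PP^{(\theta_N)}_{\Alpha(c^{\lfloor\eta/\theta_N\rfloor});\,N}$ converge weakly, in probability, to $\mu^{\gamma;c;\eta}_{\Alpha}$. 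There is no real obstacle here: the only subtle point is keeping track of the floor function $\lfloor\eta/\theta_N\rfloor$, but since the discrepancy $\eta/\theta_N - \lfloor\eta/\theta_N\rfloor$ is bounded by $1$, the product $\theta_N\lfloor\eta/\theta_N\rfloor$ differs from $\eta$ by $O(\theta_N)\to 0$, which is absorbed into the limit.
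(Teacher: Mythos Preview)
Your proposal is correct and follows exactly the approach the paper indicates: the text preceding the three application theorems explicitly says that formula~\eqref{eq:JGF_pure} allows one to prove the LLN via \cref{theo:main1}, with full details deferred to \cite[Sec.~5.2]{CuencaDolega2025}; your argument reproduces precisely that route. One cosmetic point: with your choice of $C$ you only get $|\kappa_n|\le C^{n+1}$ rather than $C^n$, but of course replacing $C$ by a slightly larger constant fixes this, so there is no genuine issue.
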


\subsubsection{Pure beta Jack measures}

In this subsection, $c\in (0,1)$, $M\in\Z_{\ge 1}$ are fixed.
The relevant specialization is $\rho=\Beta\big( (b/\theta)^M \big)$, where $b:=c/(1-c)$, namely
\[
p_k\big(\Beta\big((b/\theta)^M\big)\big) = (-1)^{k-1}\theta^{-1}Mb^k = \frac{(-1)^{k-1}Mc^k}{\theta(1-c)^k},\quad\text{for all }k\ge 1.
\]

\begin{theorem}[Thm.~5.14 in \cite{CuencaDolega2025}]\label{thm:application2}
Let $\{\theta_N>0\}_{N\ge 1}$ be such that $N\theta_N\to\gamma$, as $N\to\infty$.
Then the sequence of pure beta Jack measures $\big\{\PP^{(\theta_N)}_{\Beta((b/\theta_N)^M);\,N}\big\}_{N\ge 1}$ (where $b=c/(1-c)$) satisfies the LLN and their empirical measures converge weakly, in probability, to the probability measure $\mu_{\Beta}^{\gamma;c;M}$ uniquely determined by its moments and has quantized $\gamma$-cumulants
\[
\kappa_n^\gamma\Big[ \mu_{\Beta}^{\gamma;c;M} \Big] = (-1)^{n-1}Mc^n,\quad\text{for all }n\in\Z_{\ge 1}.
\]
\end{theorem}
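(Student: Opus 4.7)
My plan is to apply \cref{theo:main1} to the sequence $\{\PP^{(\theta_N)}_{\Beta((b/\theta_N)^M);\,N}\}_{N \ge 1}$, following the same template as \cref{thm:application1}. This reduces the theorem to three checks: (i) each measure has small tails so that its Jack generating function is defined and analytic on a disk of radius $>1$; (ii) the sequence is HT-appropriate in the sense of \cref{def:appropriate}, with the limits $\kappa_n$ identified; and (iii) the growth bound $|\kappa_n|\le C^n$, which will both force moment-determinacy of $\mu^{\gamma;c;M}_{\Beta}$ and upgrade convergence of moments to weak convergence in probability.

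For (i) and (ii), I will exploit the product structure of \cref{eq:JGF_pure}. Substituting $p_k(\rho) = \theta^{-1}(-1)^{k-1}Mb^k$ with $b = c/(1-c)$ and using the identity $\sum_{k\ge 1}\frac{(-1)^{k-1}}{k}z^k = \ln(1+z)$, the inner exponent collapses to a logarithm, and after noting that the outer factor of $\theta$ in \cref{eq:JGF_pure} cancels the $\theta^{-1}$ hidden in $p_k(\rho)$, I expect to arrive at the closed form
\[
G_{\rho;N,\theta_N}(x_1,\dots,x_N) = \prod_{i=1}^N \left(\frac{1+bx_i}{1+b}\right)^M,
\]
which is entirely $\theta$-free and polynomial in each $x_i$. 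Small tails then follow immediately. Taking logarithms,
\[
F_{N,\theta_N}(x_1,\dots,x_N) = M\sum_{i=1}^N\bigl[\ln(1+bx_i) - \ln(1+b)\bigr]
\]
is a sum of single-variable functions, which trivially gives condition (2) of \cref{def:appropriate}: every mixed partial derivative in two or more distinct variables vanishes identically.

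For condition (1), a single-variable computation yields
\[
\partial_{x_1}^n F_{N,\theta_N}\big|_{(x_1,\dots,x_N)=(1^N)} = M(-1)^{n-1}(n-1)!\left(\frac{b}{1+b}\right)^n = M(-1)^{n-1}(n-1)!\,c^n,
\]
using $b/(1+b) = c$. Dividing by $(n-1)!$ and observing that the right-hand side is independent of $N$ and $\theta_N$, the identification $\kappa_n = (-1)^{n-1}Mc^n$ is immediate, with no asymptotic analysis required. For (iii), $|\kappa_n| = Mc^n \le M^n$ suffices since $c \in (0,1)$ and $M \ge 1$. \cref{theo:main1} then delivers all three conclusions of the theorem.

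I do not anticipate any serious obstacle here. The one conceptual point worth emphasizing is the $\theta$-cancellation specific to pure beta specializations: because $p_k(\rho) \sim \theta^{-1}$, the prefactor $\theta$ in the exponent of \cref{eq:JGF_pure} disappears and the Jack generating function becomes entirely $\theta$-free, trivializing the high-temperature asymptotics. This stands in marked contrast to the pure alpha (\cref{thm:application1}) and the pure Plancherel settings, where genuine asymptotics in $\theta$ must be carried out.
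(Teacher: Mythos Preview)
Your proposal is correct and matches the approach the paper indicates: the present paper does not prove this statement but recalls it from \cite[Thm.~5.14]{CuencaDolega2025}, and explicitly says (in the paragraph preceding \cref{thm:application1}) that the proofs of all three pure-Jack LLNs proceed by plugging the specialization into \cref{eq:JGF_pure} and invoking \cref{theo:main1}, exactly as you do. One minor gloss: your justification of small tails from the polynomial form of $G_{\rho;N,\theta_N}$ is slightly backwards, since small tails is a property of the measure; the clean argument is that the pure beta Jack measure has finite support (Jack duality forces $Q_\lambda(\Beta((b/\theta)^M);\theta)=0$ unless $\lambda_1\le M$, and $\ell(\lambda)\le N$), which is equivalent to your observation.
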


\subsubsection{Pure Plancherel Jack measures}

In this subsection, $\eta>0$ is fixed. The relevant specialization is $\rho=\Planch(\eta/\theta)$, i.e.,
\[
p_k\big( \Planch(\eta/\theta) \big) = \frac{\eta}{\theta} \cdot\delta_{k,1},\quad\text{for all }k\ge 1.
\]
Then~\eqref{eq:pure_jack_formula} gives the probability measure
\begin{equation*}
\PP^{(\theta)}_{\Planch(\eta/\theta);\,N}(\la) = e^{-N\eta} \eta^{|\la|}
\prod_{(i,j)\in\la}{ \frac{N\theta + (j-1) - \theta(i-1)}{((\la_i-j) + \theta(\la_j'-i) + \theta)((\la_i-j) + \theta(\la_j'-i) + 1)} },
\end{equation*}
for all $\la\in\Y(N)$, called a \emph{pure Plancherel Jack measure}.

\begin{theorem}[Thm.~5.17 in \cite{CuencaDolega2025}]\label{thm:application3}
Let $\{\theta_N>0\}_{N\ge 1}$ be such that $N\theta_N\to\gamma$, as $N\to\infty$.
Then the sequence of pure Plancherel Jack measures $\big\{\PP^{(\theta_N)}_{\Planch(\eta/\theta_N);\,N}\big\}_{N\ge 1}$ satisfies the LLN and their empirical measures converge weakly, in probability, to the probability measure $\mu_{\Planch}^{\gamma;\eta}$ that is uniquely determined by its moments and has quantized $\gamma$-cumulants
\[
\kappa_n^\gamma\Big[ \mu_{\Planch}^{\gamma;\eta} \Big] = \eta\cdot\delta_{n,1},\quad\text{for all }n\in\Z_{\ge 1}.
\]
\end{theorem}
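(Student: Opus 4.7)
The plan is to reduce the statement to \cref{theo:main1}; it suffices to verify that the sequence of pure Plancherel Jack measures is HT-appropriate with the claimed quantized $\gamma$-cumulants $\kappa_n^\gamma[\mu^{\gamma;\eta}_{\Planch}] = \eta \cdot \delta_{n,1}$, and then to check the exponential growth condition $|\kappa_n^\gamma|\le C^n$ required for uniqueness of the limiting measure.

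First, I would compute the Jack generating function $G_{\Planch(\eta/\theta);N,\theta}$ explicitly using the product formula \eqref{eq:JGF_pure}. Since $p_k(\Planch(\eta/\theta)) = (\eta/\theta)\,\delta_{k,1}$, the inner sum $\theta \sum_{k\ge 1}\frac{p_k(\rho)}{k}(x_i^k-1)$ collapses cleanly to $\eta(x_i-1)$, and hence
\begin{equation*}
G_{\Planch(\eta/\theta);N,\theta}(x_1,\dots,x_N) = \exp\Bigl\{\eta\sum_{i=1}^N(x_i-1)\Bigr\},\qquad F_{N,\theta}(x_1,\dots,x_N) = \eta\sum_{i=1}^N (x_i-1).
\end{equation*}
Crucially, $F_{N,\theta}$ is independent of $\theta$ and splits as a sum of functions of individual variables.

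Second, I would read off the partial derivatives directly. One has $\frac{\partial}{\partial x_1}F_{N,\theta}\bigl|_{(1^N)} = \eta$, while $\frac{\partial^n}{\partial x_1^n}F_{N,\theta}\bigl|_{(1^N)} = 0$ for all $n\ge 2$. This yields $\kappa_1 = \eta$ and $\kappa_n = 0$ for $n\ge 2$, matching the stated cumulants. Moreover, because $F_{N,\theta}$ has no cross-terms between distinct variables, every mixed partial derivative with at least two distinct indices vanishes identically, so condition (2) of \cref{def:appropriate} is automatic. Thus the sequence is HT-appropriate, and \cref{theo:main1} gives the LLN with $\vec{m} = \mathcal{J}_\gamma^{\kappa\mapsto m}(\vec{\kappa})$.

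Finally, the growth condition is trivial: $|\kappa_n^\gamma| \le \max(1,\eta)^n$ for all $n\ge 1$. The uniqueness clause of \cref{theo:main1} then ensures the existence of a unique probability measure $\mu^{\gamma;\eta}_{\Planch}$ with the prescribed moments, and the empirical measures converge to it weakly, in probability. I do not anticipate any serious obstacle here; the entire argument is a direct verification, and the simplicity of the outcome is a consequence of the fact that the Plancherel specialization makes the Jack generating function collapse to a linear exponential. The real work will be postponed to later sections, where the combinatorial moment-cumulant formula \eqref{eq:Moments} together with \cref{thm:intro_3} is used to extract the density of $\mu^{\gamma;\eta}_{\Planch}$ and exhibit the crystallization phenomenon.
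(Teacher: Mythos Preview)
Your proposal is correct and follows exactly the approach the paper indicates: the paper does not give a self-contained proof here but cites \cite[Thm.~5.17]{CuencaDolega2025} and remarks that ``Formula~\eqref{eq:JGF_pure} allows us to prove the LLN, via \cref{theo:main1}, for the three families of pure Jack measures.'' Your verification---computing $F_{N,\theta}=\eta\sum_i(x_i-1)$ from~\eqref{eq:JGF_pure}, reading off the partial derivatives to confirm HT-appropriateness with $\kappa_n=\eta\,\delta_{n,1}$, and checking the growth bound---is precisely that argument spelled out.
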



\section{The gamma-quantized R-transform and moment generating function}\label{sec:R_transform}

As a consequence of \cref{theo:main1}, we can define, for any $\gamma>0$, the sequence of quantized $\gamma$-cumulants $\{\kappa_n^\gamma[\mu]\}_{n\ge 1}$ for probability measures $\mu$ that are weak limits, in probability, of atomic measures of the form~\eqref{eqn:empirical_measures}.
In fact, the quantities $\kappa_n^\gamma[\mu]$ are obtained by first computing the moments $m_n(\mu)$ and then applying the recursive combinatorial formula from \cref{def:mk}.
There is certain parallelism to Speicher's combinatorial approach to Free Probability, where \emph{free cumulants} of probability measures are obtained from their moments by a recursive formula that involves noncrossing set partitions~\cite{Speicher1994}.

Speicher's approach to Free Probability, however, cannot be used to define free cumulants for probability measures that do not have finite moments of all orders.
This is where the \emph{analytic approach} to Free Probability shines, as instead of working with the sequence of cumulants, one works with the \emph{$R$-transform} and proves its existence as an analytic function for a large class of probability measures by means of a functional equation between the $R$-transform and Cauchy transform.
Envisioning a \emph{quantized $\gamma$-deformed Probability Theory}, we also want to find a functional equation between the series
\begin{equation*}
R^\gamma_\mu(z) := \sum_{n\ge 1}{ \frac{\kappa_n^\gamma[\mu]z^n}{n} },
\end{equation*}
which we call the \emph{$\gamma$-quantized $R$-transform of $\mu$}, and the Cauchy transform (or moment generating function) of $\mu$.
This motivated the following result.

\begin{theorem}\label{thm:r_transform}
Let $\gamma>0$ and assume that $\vec{\kappa}=(\kappa_1,\kappa_2,\dots)$ and $\vec{m}=(m_1,m_2,\dots)$ are related by the combinatorial identity from \cref{def:mk}, i.e.~$\vec{m}=\mathcal{J}_\gamma^{\kappa\mapsto m}(\vec{\kappa})$.
Then there exists a unique $\vec{c}=(c_1,c_2,\dots)$ such that the following relations between generating functions hold:
\begin{align}
\exp\left(\sum_{n\geq 1}\frac{\kappa_n z^n}{n}\right) &= 
1+\sum_{n\ge 1}\frac{c_n}{\gamma^{\uparrow n}}z^n,\label{eq:1}\\
1+\sum_{n\ge 1}\frac{(-1)^nc_n}{z^{\uparrow n}} &=
\exp\left(\gamma\cdot\Lapformal\left\{\frac{t}{1-e^{-t}}\sum_{n\geq 1}\left(\frac{(-1)^nm_n}{n!}-\frac{\gamma^n}{(n+1)!}\right)t^{n-1}\right\}(z)\right).\label{eq:2}
\end{align}
Recall that $z^{\uparrow n} = z(z+1)\cdots(z+n-1)$ is the raising factorial, while we define $\Lapformal$ as the linear operator that transforms power series in $t$ into power series in $z^{-1}$, and is uniquely defined by
\begin{equation}\label{eqn:formal_laplace}
\Lapformal\left\{ \sum_{n\ge 0}{\frac{a_n}{n!} t^n} \right\}(z) := \sum_{n\ge 0}{a_n z^{-n-1}}.
\end{equation}
\end{theorem}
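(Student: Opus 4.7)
Equation~\eqref{eq:1} is a formal definition: expanding the exponential and matching coefficients of $z^n$ yields $c_n = \gamma^{\uparrow n}\cdot[z^n]\exp\!\left(\sum_{k\geq 1}\kappa_k z^k/k\right)$ as an explicit polynomial in $\kappa_1,\ldots,\kappa_n$ and $\gamma$. This gives both existence and uniqueness of $\vec{c}$ without invoking the hypothesis $\vec{m} = \mathcal{J}_\gamma^{\kappa\mapsto m}(\vec{\kappa})$, so the substantive content of the theorem lies entirely in the identity~\eqref{eq:2}, which yields a second characterization of the same $\vec{c}$ directly from the moments.

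To establish~\eqref{eq:2}, my plan is to pass through the formal Borel transform $\Lapformal^{-1}$ and match both sides as power series in $t$. The key analytic tool is the factorial-to-Laplace dictionary
\begin{equation*}
\Lapformal\!\left\{\frac{(1-e^{-t})^n}{n!}\right\}(z) = \frac{1}{z^{\uparrow n+1}},
\end{equation*}
which follows from the partial-fraction decomposition of $n!/z^{\uparrow n+1}$, and which identifies $\{1/z^{\uparrow n+1}\}_{n\ge 0}$ as the $\Lapformal$-image of the binomial basis $\{(1-e^{-t})^n/n!\}_{n\ge 0}$. Taking logarithms of~\eqref{eq:2} and applying $\Lapformal^{-1}$ reduces the claim to showing that a certain explicit formal power series in $t$ with coefficients polynomial in $\vec{c}$ (equivalently in $\vec{\kappa}$) equals $\gamma\cdot\frac{t}{1-e^{-t}}\cdot\sum_{n\ge 1}\left(\frac{(-1)^n m_n}{n!}-\frac{\gamma^n}{(n+1)!}\right)t^{n-1}$.

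To close the loop, I would use the integral reformulation
\begin{equation*}
\frac{\Delta_\gamma(x^{n+1})(\kappa_1)}{n+1} = \frac{1}{\gamma}\int_{\kappa_1-\gamma}^{\kappa_1} y^n\, dy
\end{equation*}
to localize the nonlocal $\Delta_\gamma$-factor in the Łukasiewicz path formula~\eqref{eq:Moments} into a per-step weight $y$, with a single outer $y$-integration, attached to horizontal steps at height~$0$. With all weights now local, the moment generating function admits a height-indexed transfer-matrix recursion in which excursions at different heights factor as a product; this recursion should match the factorial-series side via the geometric expansion $1/(1-e^{-t}) = \sum_{k\ge 0}e^{-kt}$ together with the shift rule $\Lapformal\{e^{-kt}f\}(z) = \Lapformal\{f\}(z+k)$, the latter being the natural avatar of "peeling off an excursion that rests for $k$ units at a given floor level". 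The main obstacle will be the bookkeeping between the product-over-heights structure of the Łukasiewicz recursion and the single-series factorial expansion on the other side, and in particular accounting for the correction term $\gamma^n/(n+1)!$, which I expect corresponds to the contribution of the trivial/empty excursions that are absorbed differently into the two normalizations. Rather than checking coefficients of $t^n$ directly, which appears unwieldy, I would attempt to exhibit a single functional equation (differential or integral) satisfied by both sides and verify it by induction on the order in $t$, with the base case handled by direct computation from~\eqref{eq:Moments} in the triple $(c_1,\kappa_1,m_1)$.
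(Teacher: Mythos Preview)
Your proposal is not a proof but a strategy, and the hard step---matching the Łukasiewicz-path formula for $m_\ell$ against the factorial-series side---is left unexecuted. You correctly identify useful ingredients: the identity $\Lapformal\{(1-e^{-t})^n/n!\}(z)=1/z^{\uparrow(n+1)}$ is valid, and the integral rewriting of $\Delta_\gamma$ is a reasonable way to localize the height-$0$ weight. But the passage from ``excursions at different heights factor as a product'' to the single factorial series $\sum_n (-1)^n c_n/z^{\uparrow n}$ is precisely the nontrivial content, and your sketch (``attempt to exhibit a single functional equation \ldots\ and verify it by induction'') does not supply it. In particular, the Łukasiewicz weights $(\kappa_j+\kappa_{j+1})$ for up-steps and $(j+\gamma)$ for down-steps do not obviously organize into the exponential $\exp(\sum \kappa_n z^n/n)$ that defines the $c_n$; bridging this gap requires a genuine identity, not bookkeeping. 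The paper in fact remarks that a direct combinatorial proof of this type ``would be desirable'' and leaves it as an open problem.

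The paper's own argument is completely different and does not touch the Łukasiewicz combinatorics at all. It works through a specific family of measures: take deterministic partitions $\lambda^{(N)}\in\Y(N)$ whose empirical measures converge, and exploit the expansion of the Jack generating function in shifted Jack polynomials $Q^*_{(k)}$. Specializing to one variable yields $G_{N,\theta}(z+1,1^{N-1}) = 1+\sum_{k\ge 1} Q^*_{(k)}(\lambda^{(N)};\theta)\,z^k/(N\theta)^{\uparrow k}$, so the limits of $Q^*_{(k)}$ are exactly the $c_k$, and~\eqref{eq:1} drops out of HT-appropriateness. For~\eqref{eq:2}, the paper uses an Okounkov--Olshanski identity expressing $\sum_k (-1)^k Q^*_{(k)}/z^{\uparrow k}$ as an explicit product of Gamma functions, takes logarithms, inserts the asymptotic expansion of $\ln\Gamma$ (which is where the factor $t/(1-e^{-t})$ enters, via the Bernoulli numbers), and passes to the limit $N\theta\to\gamma$. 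This proves the identity for moment sequences coming from such limits; a Zariski-density argument then extends it to arbitrary $(m_n)$. The mechanism is thus analytic and relies on the prior LLN theorem, not on any direct manipulation of the path formula.
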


Note that~\eqref{eq:1} is an equality between power series in $z$, which furnishes formulas between $(\kappa_1,\kappa_2,\dots)$ and $(c_1,c_2,\dots)$.
Likewise,~\eqref{eq:2} is an equality between power series in $z^{-1}$ that yields formulas between $(c_1,c_2,\dots)$ and $(m_1,m_2,\dots)$.
Put together, these two equalities can be used to express each $m_\ell$ in terms of $\kappa_1,\kappa_2,\dots$ (and viceversa); \cref{thm:r_transform} states that this gives an equivalent expression as the one in terms of Łukasiewicz paths from \cref{def:mk}.

\begin{remark}\label{rem:real_lap_transform}
We call $\Lapformal$ the \emph{formal Laplace transform} because it coincides with the Laplace transform
\begin{equation*}
\Lap\{f(t)\}(z) := \int_0^\infty f(t)e^{-zt}dt,
\end{equation*}
under certain technical conditions.
For example, since $\Lap\{t^n\}(z)=n!\,z^{-n-1}$, for all $n\in\Z_{\ge 0}$ and $\Re z>0$, then $\Lapformal$ coincides with $\Lap$ for polynomials $f(t)$ in the region $\Re z>0$.
In \cref{thm:r_transform}, we decided to use the formal $\Lapformal$ instead of $\Lap$ to avoid specifying any particular choice of analytic conditions for $f(t)$ and domains for $z$.
\end{remark}

\begin{remark}
Note that
\begin{equation}\label{eq:generating_bernoulli}
\frac{t}{1-e^{-t}} = \sum_{m \geq 0}\frac{(-1)^m B_m}{m!}t^m
\end{equation}
is the exponential generating function of the \emph{Bernoulli numbers}, the first few of them being: $B_0=1$, $B_1=-\frac{1}{2}$, $B_2=\frac{1}{6}$, $B_3=0$, $B_4=-\frac{1}{30}$, $B_5=0,\cdots$.
This shows that the relation between $c_1,c_2,\dots$ and $m_1,m_2,\dots$ from~\eqref{eq:2} is expressed in terms of these quantities.
The occurrence of the Bernoulli numbers in the theorem can be traced to an asymptotic expansion of the log gamma function in the proof below; see \cref{lem:help3}.
\end{remark}

\begin{remark}
Another reason for the interest in \cref{thm:r_transform} is the (possible) connection with Finite Free Probability, see e.g.~\cite{Marcusetal2022}.
In fact, similar identities were proved in the continuous settings of $\beta$-eigenvalues, see~\cite[Thm.~3.11]{Benaych-GeorgesCuencaGorin2022} and $\beta$-singular values, see \cite[Thm.~5.8]{Xu2023+}.
It is now known that the identities in those references are precisely the same as in the cumulant-based combinatorial approach to Finite Free Probability started in~\cite{ArizmendiPerales2018} (see also~\cite{ArizmendiGarciaPerales2023,Cuenca2024}), after the formal identification of the parameter $\gamma\mapsto\textrm{minus the degree of the polynomials}$.
Whether our equations~\eqref{eq:1}--\eqref{eq:2} have any relation to Finite Free Probability is an open question.
\end{remark}

\subsection{Preliminary lemmas}

We need a few lemmas for the proof of \cref{thm:r_transform}.
For the first two of them, we denote by $Q^*_\mu(x_1,\dots,x_N;\theta)$, $\mu\in\Y(N)$, the $Q$-normalization of the \emph{shifted Jack polynomials}, as defined in \cite{OkounkovOlshanski1997} or \cite[Secs.~2.2--2.4]{OkounkovOlshanski1998a}.
In particular, for row partitions $\mu=(k)$, they admit the explicit expression
\begin{equation}\label{eq:formulaQ}
Q_{(k)}^{*}(x_1,\dots,x_N;\theta) = \sum_{1\le i_1\le\cdots\le i_k\le N}{ \frac{\theta^{\uparrow m_1}\theta^{\uparrow m_2}\cdots}{m_1!m_2!\cdots} (x_{i_1}-k+1)\cdots (x_{i_{k-1}}-1)x_{i_k} },
\end{equation}
where $m_l:=\#\{ r\mid i_r=l \}$ are the multiplicities of $l=1,2,\dots$ in the sequence $i_1,\dots,i_k$.

\begin{lemma}\label{lem:help1}
If $\max_{1\le i\le N}{|x_i|}\le A$, for some $A>0$, then
\begin{equation}\label{lem:bound1}
\big| Q_{(k)}^{*}(x_1,\dots,x_N;\theta) \big|\le A^{\uparrow k}\cdot\sum_{1\le i_1\le\cdots\le i_k\le N}{ \frac{\theta^{\uparrow m_1}\theta^{\uparrow m_2}\cdots}{m_1!m_2!\cdots} },
\end{equation}
for all $k\in\Z_{\ge 1}$, where $m_l:=\#\{ r\mid i_r=l \}$ are the multiplicities of $l=1,2,\dots$ in the sequence $i_1,\dots,i_k$.
\end{lemma}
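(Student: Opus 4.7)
The plan is to apply the triangle inequality directly to the explicit formula \eqref{eq:formulaQ}, and then bound the product of linear factors in the variables $x_{i_j}$ uniformly by $A^{\uparrow k}$. Since $\theta>0$ is assumed throughout, every coefficient $\tfrac{\theta^{\uparrow m_1}\theta^{\uparrow m_2}\cdots}{m_1!m_2!\cdots}$ appearing in \eqref{eq:formulaQ} is strictly positive, so the triangle inequality reduces the task to estimating $|(x_{i_1}-k+1)(x_{i_2}-k+2)\cdots(x_{i_{k-1}}-1)x_{i_k}|$ pointwise.

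The key observation is that the $j$-th factor in this product is $x_{i_j}-(k-j)$ for $j=1,\dots,k$. Using the hypothesis $|x_{i_j}|\le A$, I would estimate
\[
\bigl|x_{i_j}-(k-j)\bigr|\le |x_{i_j}|+(k-j)\le A+(k-j).
\]
Taking the product over $j=1,\dots,k$ and reindexing by $l=k-j$, I obtain
\[
\prod_{j=1}^{k}\bigl|x_{i_j}-(k-j)\bigr|\le \prod_{l=0}^{k-1}(A+l) = A^{\uparrow k},
\]
which is exactly the bound needed.

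Putting these two steps together in the triangle inequality yields
\[
\big|Q^{*}_{(k)}(x_1,\dots,x_N;\theta)\big| \le \sum_{1\le i_1\le\cdots\le i_k\le N}\frac{\theta^{\uparrow m_1}\theta^{\uparrow m_2}\cdots}{m_1!m_2!\cdots}\cdot A^{\uparrow k},
\]
and factoring out $A^{\uparrow k}$ gives precisely \eqref{lem:bound1}. There is essentially no obstacle here beyond keeping the indexing of the falling factors straight; the argument is a one-step triangle-inequality estimate, with the only content being the identification $\prod_{l=0}^{k-1}(A+l) = A^{\uparrow k}$, which uses the convention for the raising factorial stated earlier in the paper.
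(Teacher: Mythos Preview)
Your argument is correct and is precisely the one-line triangle-inequality estimate the paper has in mind: the paper's proof simply states that the bound follows immediately from the explicit formula \eqref{eq:formulaQ}, and you have written out exactly that computation.
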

\begin{proof}
This follows immediately from~\eqref{eq:formulaQ}.
\end{proof}

\begin{lemma}\label{lem:help2}
If $\max_{1\le i\le N}{|x_i|}\le A$, for some $A>0$, then the series
\begin{equation}\label{sum_Qs}
\sum_{k\ge 0}{ \frac{(-1)^k Q_{(k)}^{*}(x_1,\dots,x_N;\theta)}{z^{\uparrow k}} }
\end{equation}
converges absolutely in the region $\Re z>N\theta+A$, and it is identically equal to
\begin{equation}\label{prod_gammas}
\prod_{i=1}^N{ \frac{\Gamma(x_i+z-i\theta)}{\Gamma(x_i+z-(i-1)\theta)}\frac{\Gamma(z-(i-1)\theta)}{\Gamma(z-i\theta)} }.
\end{equation}
\end{lemma}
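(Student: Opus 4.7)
The plan is to establish absolute convergence first, and then verify the closed form by induction on $N$. For convergence, I would start from the bound of \cref{lem:help1} and observe that the multi-index sum appearing there equals $(N\theta)^{\uparrow k}/k!$, since it is the coefficient of $u^k$ in $(1-u)^{-N\theta} = \prod_{i=1}^N (1-u)^{-\theta}$. Hence $|Q^*_{(k)}(x;\theta)| \le A^{\uparrow k}(N\theta)^{\uparrow k}/k!$, and comparing consecutive terms of the series gives $|a_{k+1}/a_k| = 1 + (A+N\theta-1-\Re z)/k + O(k^{-2})$, so Raabe's test yields absolute convergence precisely when $\Re z > A + N\theta$.

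The identity itself I would prove by induction on $N$. In the base case $N=1$, formula \eqref{eq:formulaQ} simplifies to $Q^*_{(k)}(x_1;\theta) = \frac{\theta^{\uparrow k}}{k!} x_1(x_1-1)\cdots(x_1-k+1)$, so the sum is exactly ${}_2F_1(-x_1,\theta;z;1)$. The hypothesis $\Re z > \theta + A$ secures the Gauss condition $\Re(z+x_1-\theta) > 0$, yielding $\Gamma(z)\Gamma(z+x_1-\theta)/[\Gamma(z-\theta)\Gamma(z+x_1)]$, which matches the $N=1$ case of the claim.

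For the inductive step, I would extract a one-step recursion by isolating the index $N$ in \eqref{eq:formulaQ}: if $N$ appears with multiplicity $j$ in $(i_1,\ldots,i_k)$, it occupies the rightmost positions and contributes exactly $x_N(x_N-1)\cdots(x_N-j+1)$, while the remaining factors get uniformly shifted by $-j$. This gives
\[
Q^*_{(k)}(x_1,\ldots,x_N;\theta) = \sum_{j=0}^{k}\frac{\theta^{\uparrow j}}{j!}\,x_N^{\downarrow j}\,Q^*_{(k-j)}(x_1-j,\ldots,x_{N-1}-j;\theta).
\]
Substituting into the series, factoring $z^{\uparrow(l+j)} = z^{\uparrow j}(z+j)^{\uparrow l}$, and swapping summations (legal by the convergence bound above) rewrites the sum as $\sum_{j\ge 0}\frac{(-1)^j\theta^{\uparrow j}x_N^{\downarrow j}}{j!\,z^{\uparrow j}}\,S_{N-1}(z+j;\,x_1-j,\ldots,x_{N-1}-j)$, where $S_{N-1}$ denotes the $(N-1)$-variable analogue. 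The inductive hypothesis evaluates $S_{N-1}$ as a product of Gamma quotients: the paired shifts $x_i\mapsto x_i-j$ and $z\mapsto z+j$ leave the numerators $\Gamma(x_i+z-i\theta)$ $j$-independent, while the denominator ratios telescope to $\Gamma(z+j)/\Gamma(z+j-(N-1)\theta)$. After absorbing $\Gamma(z+j)/z^{\uparrow j} = \Gamma(z)$, the surviving $j$-sum becomes ${}_2F_1(-x_N,\theta;\,z-(N-1)\theta;\,1)$, to which Gauss' summation applies once more, and the final telescoping $\Gamma(z)/\Gamma(z-N\theta) = \prod_{i=1}^N\Gamma(z-(i-1)\theta)/\Gamma(z-i\theta)$ assembles the target formula.

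The main obstacle is the bookkeeping in the inductive step: one has to verify that the combined shifts $x_i\mapsto x_i-j$ and $z\mapsto z+j$ leave the numerators of the inductive hypothesis invariant in $j$, so that the problem really collapses to a single ${}_2F_1$ in $j$ rather than something more tangled. Once that cancellation is in place, the rest is a telescoping of Gamma functions and a second application of Gauss' summation.
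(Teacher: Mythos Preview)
Your proof is correct. The paper's treatment differs from yours in two respects. For convergence, the paper runs an induction on $N$ to evaluate the majorant series $F^{(N)}_{A;\theta}(y):=\sum_{k\ge 0}q^{(N)}_{A;\theta}(k)/y^{\uparrow k}$ explicitly as a Gamma quotient, whereas you recognize the multi-index sum in \cref{lem:help1} as the coefficient of $u^k$ in $\prod_{i=1}^N(1-u)^{-\theta}=(1-u)^{-N\theta}$, so the majorant is immediately ${}_2F_1(A,N\theta;z;1)$ --- a cleaner and induction-free route. One small caveat: for complex $z$, Raabe's test as stated is not quite right; you should first bound $|z^{\uparrow k}|\ge(\Re z)^{\uparrow k}$ and then invoke the standard Gauss convergence criterion $\Re z>A+N\theta$ on the resulting real hypergeometric series.

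For the identity itself, the paper only sketches the argument and defers to the appendix of \cite{OkounkovOlshanski1998a}; your induction on $N$, isolating the last variable via the recursion $Q^*_{(k)}(x_1,\ldots,x_N)=\sum_j\frac{\theta^{\uparrow j}}{j!}x_N^{\downarrow j}\,Q^*_{(k-j)}(x_1-j,\ldots,x_{N-1}-j)$ and collapsing the $j$-sum to a Gauss-summable ${}_2F_1(-x_N,\theta;z-(N-1)\theta;1)$, is exactly the argument in that reference and is structurally parallel to the paper's induction for the bound. The cancellation you flag as the crux --- that the paired shifts $x_i\mapsto x_i-j$, $z\mapsto z+j$ leave the numerators $\Gamma(x_i+z-i\theta)$ independent of $j$ --- is indeed the key point, and it goes through precisely as you describe.
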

\begin{proof}
This is proved in \cite[Appendix]{OkounkovOlshanski1998a}, so we only comment on the bound $\Re z>N\theta+A$, which is not explicitly stated in that reference.
Denote the RHS of~\eqref{lem:bound1} by $q_{A;\,\theta}^{(N)}(k)$, e.g.~$q_{A;\,\theta}^{(1)}(k) = A^{\uparrow k}\theta^{\uparrow k}/k!$.
To prove that~\eqref{sum_Qs} converges absolutely in the region $\Re z>N\theta+A$, it suffices to show by \cref{lem:help1} that the series
\begin{equation*}
F^{(N)}_{A;\,\theta}(y) := \sum_{k\ge 0}{ \frac{q_{A;\,\theta}^{(N)}(k)}{y^{\uparrow k}} }
\end{equation*}
converges if $y>N\theta+A$; in fact, by induction on $N$, we show it converges to $\frac{\Gamma(y-A-N\theta)\Gamma(y)}{\Gamma(y-A)\Gamma(y-N\theta)}$.

For the base case $N=1$, note that $F^{(1)}_{A;\,\theta}(y)$ is Gauss's hypergeometric series ${}_2F_1(\theta,A;y;1)$, for which it is well-known that it converges absolutely when $y>\theta+A$, and equals $\frac{\Gamma(y-A-\theta)\Gamma(y)}{\Gamma(y-A)\Gamma(y-\theta)}$, see e.g.~\cite[Sec.~1.3]{Bailey1935}.
For the inductive step, use the definitions and inductive hypothesis to deduce
\begin{multline}\label{eq:ind_step}
F_{A;\,\theta}^{(N)}(y) 
= \sum_{k\ge 0}{ F_{A+k;\,\theta}^{(N-1)}(y+k) \frac{q^{(1)}_{A;\,\theta}(k)}{y^{\uparrow k}} } 
= F_{A;\,\theta}^{(N-1)}(y) \sum_{k\ge 0}{ \frac{F_{A+k;\,\theta}^{(N-1)}(y+k)}{F_{A;\,\theta}^{(N-1)}(y)} \frac{q^{(1)}_{A;\,\theta}(k)}{y^{\uparrow k}} }\\
= F_{A;\,\theta}^{(N-1)}(y) \sum_{k\ge 0}{  \frac{\Gamma(y+k)\Gamma\big(y-(N-1)\theta\big)}{\Gamma\big(y+k-(N-1)\theta\big)\Gamma(y)} \frac{q^{(1)}_{A;\,\theta}(k)}{y^{\uparrow k}} }
= F_{A;\,\theta}^{(N-1)}(y) \sum_{k\ge 0}{  \frac{q^{(1)}_{A;\,\theta}(k)}{\big(y-(N-1)\theta\big)^{\uparrow k}} }.
\end{multline}
The last summation above is the hypergeometric series ${}_2F_1(\theta,A; y-(N-1)\theta;1)$, which converges if $y-(N-1)\theta > \theta+A$, and equals $\frac{\Gamma(y-A-N\theta)\Gamma(y-(N-1)\theta)}{\Gamma(y-(N-1)\theta-A)\Gamma(y-N\theta)}$ in that case.
Plugging back into~\eqref{eq:ind_step} and using the inductive hypothesis again concludes the induction.

A small refinement of this argument proves~\eqref{prod_gammas}; see \cite[Appendix]{OkounkovOlshanski1998a} for details.
\end{proof}

Recall that for a function $F(z)$ defined on the half-plane $|\arg z|<\pi/2$, the \emph{asymptotic expansion}
\begin{equation*}
F(z) \sim \sum_{m=0}^\infty{ b_m z^{-m} },\quad\text{as }z\to\infty,
\end{equation*}
means that
\begin{equation}\label{eq:asymptotic_expansion}
\lim_{\substack{|\arg(z)|<\pi/2 \\ z\to\infty}}{ z^N\left( F(z) - \sum_{m=0}^N{ b_m z^{-m} } \right) } = 0,\quad\textrm{for all }N\ge 0.
\end{equation}

\begin{lemma}\label{lem:help3}
The following is the asymptotic expansion of $\ln\Gamma(z)$, in the sense above,
\begin{equation}\label{eq:log_gamma}
\ln\Gamma(z) \sim \left(z-\frac{1}{2}\right)\ln z - z +\frac{\ln(2\pi)}{2} + \Lapformal\left\{ -\frac{1}{t^2}+\frac{1}{2t}+\frac{1}{t(e^t-1)}\right\}(z).
\end{equation}
\end{lemma}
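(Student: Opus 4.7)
The plan is to derive the lemma from Binet's first integral formula, the classical identity
$$\ln\Gamma(z) = \left(z-\tfrac{1}{2}\right)\ln z - z + \tfrac{1}{2}\ln(2\pi) + \int_0^\infty g(t)\,e^{-zt}\,dt, \qquad \Re z > 0,$$
where $g(t) := -\tfrac{1}{t^2}+\tfrac{1}{2t}+\tfrac{1}{t(e^t-1)}$; this may be taken from \cite[Sec.~12.31]{WhittakerWatson2021}. The first step is to expand $g$ at the origin. Combining the Bernoulli generating function $\tfrac{t}{e^t-1}=\sum_{m\ge 0}\tfrac{B_m}{m!}t^m$ from \eqref{eq:generating_bernoulli} with $B_0=1$, $B_1=-\tfrac{1}{2}$, a direct rewriting gives
$$g(t) = \frac{1}{t^2}\!\left(\frac{t}{e^t-1}-1+\frac{t}{2}\right) = \sum_{n\ge 0}\frac{B_{n+2}}{(n+2)!}\,t^n,$$
so $g$ extends to an entire function, is bounded on $[0,\infty)$, and decays to $0$ as $t\to\infty$.

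Next, I would apply Watson's lemma to the Laplace integral. Writing $g(t)=\sum_{n=0}^{N-1}\frac{B_{n+2}}{(n+2)!}t^n + R_N(t)$ with $R_N(t)=O(t^N)$ near the origin and $R_N$ bounded on $[0,\infty)$, term-by-term integration against $e^{-zt}$ (using $\int_0^\infty t^n e^{-zt}\,dt = n!\,z^{-n-1}$ for $\Re z>0$) together with a standard tail estimate on $R_N$ yields
$$\int_0^\infty g(t)\,e^{-zt}\,dt \;\sim\; \sum_{n\ge 0}\frac{n!\,B_{n+2}}{(n+2)!}\,z^{-n-1} = \sum_{n\ge 0}\frac{B_{n+2}}{(n+1)(n+2)}\,z^{-n-1},$$
as $z\to\infty$ with $|\arg z|<\pi/2$, in the sense of \eqref{eq:asymptotic_expansion}.

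The final step is to identify this asymptotic series with $\Lapformal\{g(t)\}(z)$. Writing $g(t)=\sum_{n\ge 0}\frac{a_n}{n!}t^n$, we read off $a_n = n!\cdot\frac{B_{n+2}}{(n+2)!} = \frac{B_{n+2}}{(n+1)(n+2)}$, so the definition \eqref{eqn:formal_laplace} gives $\Lapformal\{g\}(z) = \sum_{n\ge 0}a_n z^{-n-1}$, matching the series above on the nose. Substituting into Binet's formula produces the claimed asymptotic expansion for $\ln\Gamma(z)$.

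The main obstacle is the careful bookkeeping in Watson's lemma: one must check that the remainder estimates provide the bound \eqref{eq:asymptotic_expansion} uniformly as $z\to\infty$ throughout the open half-plane $|\arg z|<\pi/2$, rather than only in a proper subsector $|\arg z|\le \pi/2-\delta$. This is routine given that $g$ is bounded on $[0,\infty)$ and decays at infinity, but merits an explicit remark since the domain of validity matters for the cited applications later in the paper.
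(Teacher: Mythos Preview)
Your proposal is correct and arrives at the same formal series as the paper. The paper's proof is shorter still: it computes the Taylor expansion of $g(t)$ exactly as you do, reads off $\Lapformal\{g\}(z)=\sum_{m\ge 1}\frac{B_{2m}}{2m(2m-1)}z^{1-2m}$, and then simply cites Copson for the classical Stirling asymptotic series, without rederiving it from Binet's integral. Your route via Binet plus Watson's lemma is more self-contained; in fact, an earlier (commented-out) version of the lemma in the paper's source also invoked Binet's formula directly.

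One comment on your closing remark. The concern about the full open half-plane is real, but boundedness and decay of $g$ alone do \emph{not} make it routine: the standard Watson remainder estimate gives $O\big((|z|\cos\arg z)^{-N-1}\big)$, which is not uniform as $\arg z\to\pm\pi/2$. A clean fix is to integrate by parts $N$ times, using that $g$ is entire and all derivatives $g^{(k)}$ are integrable on $[0,\infty)$, which yields a remainder $z^{-N}\int_0^\infty g^{(N)}(t)e^{-zt}\,dt$; a further integration by parts then shows this is $o(|z|^{-N})$ uniformly in the half-plane. Alternatively, one may simply cite the classical fact that the Stirling series is valid in any sector $|\arg z|\le\pi-\delta$, as the paper does. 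For the applications in the proof of \cref{thm:r_transform}, however, only the formal identity of coefficients is needed, and for that the expansion in any fixed proper subsector already suffices, since asymptotic expansions are unique.
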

\begin{proof}
By \cref{eq:generating_bernoulli} and the fact that $B_{2m+1}=0$, for all $m\ge 1$, we have
\[
-\frac{1}{t^2}+\frac{1}{2t}+\frac{1}{t(e^t-1)} = \sum_{m=1}^\infty{ \frac{B_{2m}}{2m(2m-1)}t^{2m-2} },
\]
in a neighborhood of zero, therefore
\[
\Lapformal\left\{ -\frac{1}{t^2}+\frac{1}{2t}+\frac{1}{t(e^t-1)}\right\}(z) = \sum_{m=1}^\infty{ \frac{B_{2m}}{2m(2m-1)}z^{1-2m} }.
\]
In terms of these Bernoulli numbers, the asymptotic expansion~\eqref{eq:log_gamma} is stated in~\cite[top of page 53]{Copson2004}.\footnote{There is a typo in the referenced equation from~\cite{Copson2004}: $\frac{1}{p^{2m}}$ should be $\frac{1}{p^{2m-1}}$.}
\end{proof}

\begin{lemma}\label{lem:help5}
For any formal power series $f(t)$ and $x\in\R$, we have
\begin{equation}\label{eq:identity_lap}
\Lapformal\{f(t)\}(z+x)=\Lapformal\{e^{-tx}f(t)\}(z).
\end{equation}
\end{lemma}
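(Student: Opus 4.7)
The plan is to reduce the identity to the case of monomials by linearity, and then verify it by a direct binomial-series manipulation. Since $\Lapformal$ is $\R$-linear by definition and both sides of \eqref{eq:identity_lap} are $\R$-linear in $f$, it suffices to check the identity for $f(t) = t^n/n!$, with $n \in \Z_{\ge 0}$. First, I would make precise that $\Lapformal\{f(t)\}(z+x)$, as it appears on the left-hand side, is to be interpreted as a power series in $z^{-1}$: namely, one expands each $(z+x)^{-n-1}$ formally as $z^{-n-1}(1+x/z)^{-n-1}$ via the binomial series. This interpretation is what makes the identity a statement about formal power series in $z^{-1}$, consistent with the codomain of $\Lapformal$.

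For the monomial $f(t) = t^n/n!$, the left-hand side becomes
\begin{equation*}
\Lapformal\{t^n/n!\}(z+x) = (z+x)^{-n-1} = \sum_{k\ge 0}\binom{-n-1}{k} x^k z^{-n-1-k} = \sum_{k\ge 0} (-1)^k \binom{n+k}{k} x^k z^{-(n+k+1)}.
\end{equation*}
For the right-hand side, I would expand the exponential factor as a power series and use linearity of $\Lapformal$:
\begin{equation*}
\Lapformal\{e^{-tx} t^n/n!\}(z) = \Lapformal\left\{ \sum_{k\ge 0} \frac{(-x)^k}{k!\, n!} t^{n+k} \right\}(z) = \sum_{k\ge 0} \frac{(-x)^k (n+k)!}{k!\, n!}\, z^{-(n+k+1)},
\end{equation*}
where the second equality applies the defining formula \eqref{eqn:formal_laplace} term by term. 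Since $(n+k)!/(k!\,n!) = \binom{n+k}{k}$, the two expressions coincide coefficient by coefficient in $z^{-1}$, establishing the identity for each $t^n/n!$.

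I do not foresee any genuine obstacle here; the only mild subtlety, worth a line of explicit comment in the writeup, is the formal nature of the substitution $z \mapsto z+x$ on the left-hand side. Because $\Lapformal\{f(t)\}(z)$ lives in $z^{-1}\R[[z^{-1}]]$, the meaning of $\Lapformal\{f(t)\}(z+x)$ must be specified via the binomial series in $z^{-1}$, and once this convention is fixed, the calculation above is purely formal and term-by-term legitimate. Extending from monomials to arbitrary formal power series $f(t) = \sum_{n\ge 0}(a_n/n!)t^n$ by linearity gives the result in general, since summation over $n$ may be interchanged with the extraction of the coefficient of $z^{-N-1}$ on both sides.
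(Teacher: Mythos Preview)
Your proof is correct and takes a genuinely different route from the paper's.

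The paper argues indirectly: it observes that $\Lapformal$ and the classical Laplace transform $\Lap$ agree on polynomials and on functions of the form $e^{-xt}p(t)$ with $p$ a polynomial and $x>0$, then uses the integral identity $\Lap\{f\}(z+x)=\Lap\{e^{-xt}f\}(z)$ to conclude \eqref{eq:identity_lap} for polynomials $f$ and $x>0$. It then extends to all $x\in\R$ and all formal $f$ by noting that each coefficient of $z^{-1}$ is a polynomial in $x$ and in the Taylor coefficients of $f$, so agreement on a Zariski-dense set forces agreement everywhere.

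Your direct computation on monomials via the binomial series is more elementary and self-contained: it avoids the detour through the analytic Laplace transform and the polynomial-identity extension step, and it makes explicit the one point the paper leaves implicit, namely what $\Lapformal\{f\}(z+x)$ means as an element of $z^{-1}\R[[z^{-1}]]$. The paper's approach, on the other hand, has the pedagogical virtue of explaining \emph{why} the identity should hold (it is the formal shadow of the shift rule for the ordinary Laplace transform), whereas your argument verifies it without that motivation. Note, incidentally, that the paper's assertion that $\Lapformal$ and $\Lap$ agree on $e^{-xt}p(t)$ is itself essentially the computation you carry out, so your approach also fills in a step the paper passes over quickly.
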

\begin{proof}
As mentioned in \cref{rem:real_lap_transform}, the formal Laplace transform and the usual Laplace transform agree on polynomials of arbitrary degree.
They also agree on functions of $t$ of the form $e^{-xt}\sum_{i=0}^N{a_it^i}$, for any $x>0$.
So for any polynomial $f(t)$ and $x>0$, we deduce
\begin{multline*}
\Lapformal\{f(t)\}(z+x) = \Lap\{f(t)\}(z+x) = \int_0^\infty{ f(t)e^{-(z+x)t}dt }\\
= \int_0^\infty{ \big( e^{-tx}f(t) \big) e^{-zt}dt } = \Lap\{e^{-tx}f(t)\}(z) = \Lapformal\{e^{-tx}f(t)\}(z),
\end{multline*}
i.e.~\eqref{eq:identity_lap} holds if $f(t)$ is a polynomial and $x>0$.
The desired \cref{eq:identity_lap} is an identity of formal power series in $z^{-1}$, meaning that we want to prove that the coefficients on both sides agree.
But such coefficients are just polynomials in $x$ and the coefficients of $f(t)$; since they agree for any $x>0$ and any polynomial $f(t)$, they must also agree for any $x\in\R$ and any formal power series $f(t)$, as needed.
\end{proof}

\begin{lemma}\label{lem:help4}
Let $\big\{L^{(N)}=\big(L^{(N)}_1,\dots,L^{(N)}_N\big)\big\}_{N\ge 1}$ be a sequence of tuples of real numbers such that
\begin{equation}\label{eq:limitM}
\lim_{N\to\infty}{ \frac{1}{N}\sum_{i=1}^N{ \Big(L^{(N)}_i\Big)^\ell } } = M_\ell,
\end{equation}
for some real numbers $M_1, M_2,\dots$ and all $\ell\in\Z_{\ge 1}$. Then
\[
\lim_{\substack{N\to\infty\\N\theta\to\gamma}}
\sum_{i=1}^N\left( \Big(L_i^{(N)}-\theta\Big)^\ell - \Big(L_i^{(N)}\Big)^\ell + \big((1-i)\theta\big)^\ell - (-i\theta)^\ell \right)
= (-1)^{\ell-1}\gamma^\ell - \ell\gamma M_{\ell-1},
\]
for all $\ell\in\Z_{\ge 1}$, where we set $M_0:=1$.
\end{lemma}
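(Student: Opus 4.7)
The plan is to split the sum into the two natural pieces
\begin{equation*}
S_1(N) := \sum_{i=1}^N\Big(\big(L_i^{(N)}-\theta\big)^\ell - \big(L_i^{(N)}\big)^\ell\Big),\qquad
S_2(N) := \sum_{i=1}^N\Big(\big((1-i)\theta\big)^\ell - (-i\theta)^\ell\Big),
\end{equation*}
and evaluate each separately.

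For $S_1(N)$, I would expand $(L_i^{(N)}-\theta)^\ell - (L_i^{(N)})^\ell$ by the binomial theorem to get
\begin{equation*}
S_1(N) = \sum_{k=1}^{\ell}\binom{\ell}{k}(-\theta)^k \sum_{i=1}^N \big(L_i^{(N)}\big)^{\ell-k}
= -\ell\,(N\theta)\cdot\frac{1}{N}\sum_{i=1}^N\big(L_i^{(N)}\big)^{\ell-1} + \sum_{k=2}^{\ell}\binom{\ell}{k}(-\theta)^{k-1}(N\theta)\cdot\frac{1}{N}\sum_{i=1}^N\big(L_i^{(N)}\big)^{\ell-k}.
\end{equation*}
Under the assumption $\frac{1}{N}\sum_{i=1}^N (L_i^{(N)})^j \to M_j$ (with $M_0=1$) and the regime $N\theta\to\gamma$, the $k=1$ term converges to $-\ell\gamma M_{\ell-1}$, while every $k\ge 2$ term picks up an extra vanishing factor $(-\theta)^{k-1}$ and hence tends to $0$. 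Thus $S_1(N)\to -\ell\gamma M_{\ell-1}$.

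For $S_2(N)$, I would pull out $\theta^\ell$ and notice $(1-i)^\ell - (-i)^\ell = (-1)^\ell\big((i-1)^\ell - i^\ell\big)$, so the sum telescopes:
\begin{equation*}
S_2(N) = \theta^\ell\,(-1)^\ell \sum_{i=1}^N\big((i-1)^\ell-i^\ell\big) = \theta^\ell\,(-1)^\ell\,(0-N^\ell) = (-1)^{\ell-1}(N\theta)^\ell,
\end{equation*}
which tends to $(-1)^{\ell-1}\gamma^\ell$ in the regime $N\theta\to\gamma$. Adding the two limits yields the claimed value $(-1)^{\ell-1}\gamma^\ell - \ell\gamma M_{\ell-1}$.

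There is no real obstacle here: the argument is essentially a binomial expansion combined with a telescoping sum, and one only needs to track which binomial terms survive the limit. The only minor subtlety is recognizing that each term in the $S_1$ sum with $k\ge 2$ can be rewritten as $(N\theta)$ times a bounded quantity times a vanishing factor $(-\theta)^{k-1}$, which is what kills all terms beyond the linear one.
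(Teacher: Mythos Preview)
Your proof is correct and follows essentially the same approach as the paper: a binomial expansion of $(L_i^{(N)}-\theta)^\ell - (L_i^{(N)})^\ell$ in which only the $k=1$ term survives, combined with the telescoping of $\sum_{i=1}^N\big((i-1)^\ell - i^\ell\big)$. The only cosmetic difference is that the paper keeps both pieces together in one displayed expression rather than naming them $S_1$ and $S_2$.
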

\begin{proof}
By a simple calculation, the prelimit expression can be written as
\begin{multline*}
\sum_{i=1}^N\left(\Big(L_i^{(N)}-\theta\Big)^\ell-\Big(L_i^{(N)}\Big)^\ell+(-\theta)^\ell\Big((i-1)^\ell-i^\ell\Big)\right)\\
= -\ell(\theta N)\cdot\frac{1}{N}\sum_{i=1}^N \left(L_i^{(N)}\right)^{\ell-1}
+ \sum_{k=2}^\ell\sum_{i=1}^N\binom{\ell}{k}\left(L_i^{(N)}\right)^{\ell-k}(-\theta)^{k} - (-\theta)^{\ell}N^\ell.
\end{multline*}
The first term above converges to $-\ell\gamma M_{\ell-1}$, in the regime where $N\to\infty$, $N\theta\to\gamma$, due to the assumption~\eqref{eq:limitM}.
The same relation shows that the second term converges to zero.
The last term evidently converges to $(-1)^{\ell-1}\gamma^\ell$, finishing the proof.
\end{proof}

We are finally ready to prove \cref{thm:r_transform}.
It should be noted that the main tool will be \cref{theo:main1}, which in turn was proved by employing Jack generating functions and Cherednik operators.
It would be desirable to have an elementary proof along the same lines as \cite[Main Result~I]{Cuenca2024}.
We leave finding a combinatorial proof as an open problem.

\subsection{Proof of Theorem~\ref{thm:r_transform}}

Let $\mu$ be any compactly supported probability measure on $\R$ for which there exists a sequence of partitions $\{ \la^{(N)}\in\Y(N) \}_{N\ge 1}$ such that $C\ge\la^{(N)}_1\ge\cdots\ge\la^{(N)}_N\ge 0$, for some constant $C>0$, and such that we have the following weak limit
\begin{equation*}
\frac{1}{N}\sum_{i=1}^N{ \delta_{\LL_i^{(N)}} }\to\mu,\quad\text{as }N\to\infty,\ N\theta\to\gamma,
\end{equation*}
where $\LL_i^{(N)}=\la_i^{(N)}-\theta(i-1)$, for all $1\le i\le N$.
As $\mu$ is compactly supported, it is implied that the sequence of probability measures
\begin{equation}\label{eq:empirical_proof}
\PP_N:=\frac{1}{N}\sum_{i=1}^N{ \delta_{\LL_i^{(N)}} }
\end{equation}
satisfies the LLN, in the sense of \cref{def:LLN}.
In particular, if we let $m_1,m_2,\cdots$ be the moments of $\mu$, then
\begin{equation}\label{eq:moment_limits}
\lim_{\substack{N\to\infty\\N\theta\to\gamma}}{ \frac{1}{N}\sum_{i=1}^N{ \Big( \LL_i^{(N)} \Big)^k } } = m_k,\quad\text{for all }k\in\Z_{\ge 1}.
\end{equation}
The Jack generating function of $\PP_N$ is equal to
\begin{align}
G_{N,\theta}(x_1+1,\dots,x_n+1) &= \frac{P_{\la^{(N)}}(x_1+1,\dots,x_N+1;\theta)}{P_{\la^{(N)}}(1^N;\theta)}\nonumber\\
&= \sum_{\mu\in\Y(N)}\frac{Q_\mu^{*}\Big(\la_1^{(N)},\dots,\la_N^{(N)};\theta\Big) P_\mu\big(x_1,\dots,x_N;\theta\big)}{(N\theta)_\mu},\label{eqn:JGF}
\end{align}
where the second equality above is proved in~\cite[Thm.~(3.2)]{OkounkovOlshanski1997}.
In~\eqref{eqn:JGF}, the symbol in the denominator is $(g)_\mu:=\prod_{(i,j)\in\mu}\big( g+(j-1)-\theta(i-1) \big)$.

We want to specialize the variables $x_1 = z$ and $x_2 = \cdots = x_N = 0$ in~\eqref{eqn:JGF}.
Using the facts (i)~$P_\mu(z,0^{N-1};\theta)=0$, if $\ell(\mu)>1$, (ii)~$P_{(k)}(z,0^{N-1};\theta)=z^k$, for all $k\in\Z_{\ge 0}$, (iii)~$Q_{(0)}^*(x_1,\dots,x_N;\theta)=1$, and (iv)~$(N\theta)_{(k)}=(N\theta)^{\uparrow k}$, for all $k\in\Z_{\ge 0}$, we deduce:
\begin{equation}\label{eq:expansion_G}
G_{N,\theta}\big(z+1,\,1^{N-1}\big) = 1+\sum_{k\ge 1}\frac{Q_{(k)}^*\Big(\la_1^{(N)},\dots,\la_N^{(N)};\theta\Big)}{(N\theta)^{\uparrow k}}z^k.
\end{equation}
Since $\{\PP_N\}_{N\ge 1}$ satisfies the LLN, \cref{theo:main1} implies HT-appropriateness. In particular, we have the following convergence of Taylor coefficients of the logarithms:
\begin{equation}\label{eq:expansion_G_2}
\lim_{\substack{N\to\infty\\ N\theta\to\gamma}}{ \frac{1}{(n-1)!}\frac{d^n}{dz^n} \ln\big(G_{N,\theta}\big(z+1,\,1^{N-1}\big)\big) \big|_{z=0} } = \kappa_n,\quad n\in\Z_{\ge 1}.
\end{equation}
This has two implications.
First, the functions $G_{N,\theta}(z+1,1^{N-1})$ themselves have converging Taylor coefficients, so due to the expansion~\eqref{eq:expansion_G}, we deduce that the following limits exist:
\begin{equation}\label{eq:limit_ck}
c_k = \lim_{\substack{N\to\infty\\N\theta\to\gamma}}Q_{(k)}^{*}\Big( \la_1^{(N)},\dots,\la_N^{(N)}; \theta \Big),
\quad \text{for all }k\in\Z_{\ge 1}.
\end{equation}
Second, by \eqref{eq:expansion_G} and \eqref{eq:expansion_G_2}, we actually deduce the following equality
\begin{equation*}
\exp\left(\sum_{n\geq 1}\frac{\kappa_n z^n}{n}\right) = 1+\sum_{k \geq 1}\frac{c_k}{\gamma^{\uparrow k}}z^k,
\end{equation*}
that is to be interpreted coefficient-wise, i.e.~as an equality between formal power series.
This is precisely the first equation~\eqref{eq:1} in the statement of the theorem.
We have thus proved that the quantities $c_1,c_2,\dots$, uniquely determined from $\kappa_1,\kappa_2,\dots$ by this equation, are exactly the limits~\eqref{eq:limit_ck} of certain evaluations of shifted Jack polynomials.

\smallskip

On the other hand, according to \cref{lem:help4}, we have the following generating function in $z^{-1}$ for shifted Jack polynomials:
\begin{equation}\label{eq:2half}
1+\sum_{k\ge 1}{ \frac{(-1)^k Q_{(k)}^{*}(x_1,\dots,x_N;\theta)}{z^{\uparrow k}} }
= \prod_{i=1}^N{ \frac{\Gamma(x_i+z-i\theta)}{\Gamma(x_i+z-(i-1)\theta)}\frac{\Gamma(z-(i-1)\theta)}{\Gamma(z-i\theta)} }.
\end{equation}
If we set $x_i=\la_i^{(N)}$, for $i=1,\dots,N$, then~\eqref{eq:2half} is, in fact, an equality between analytic functions in the region $\Re z > N\theta+C$, as shown in \cref{lem:help2}.
We will soon be concerned with the high temperature regime where $N\to\infty$, $N\theta\to\gamma$; in this regime, the identity~\eqref{eq:2half} holds whenever $\Re z\gg 0$, i.e., in a region $\{z\in\C\colon\Re z > C'\}$, for some large enough $C'>0$.

By taking the logarithm of~\eqref{eq:2half}, we have
\begin{equation}\label{eq:3}
\ln\left( 1+\sum_{k\ge 1}\frac{(-1)^k Q_{(k)}^*\Big( \la_1^{(N)},\dots,\la_N^{(N)};\theta \Big)}{z^{\uparrow k}} \right) = \sum_{i=1}^N \sum_{j=1}^4{ (-1)^{j-1} \ln\Gamma\left(z+\nabla_{i;j}^{(N)}\right) },
\end{equation}
where
\begin{equation*}
\nabla_{i;1}^{(N)} := \LL_i^{(N)}-\theta,\quad \nabla_{i;2}^{(N)} := \LL_i^{(N)},
\quad \nabla_{i;3}^{(N)} := (1-i)\theta,\quad \nabla_{i;4}^{(N)} := -i\theta,
\end{equation*}
and \eqref{eq:3} is valid whenever $\Re z\gg 0$.
Next, we want to consider the asymptotic expansion of both sides in the region $|\arg z|<\pi/2$, as $z\to\infty$, in the sense of \cref{eq:asymptotic_expansion}.
The LHS is the same because it is an analytic function, while the asymptotic expansion of the RHS can be deduced from \cref{lem:help3,lem:help5}.
We end up with the following equality, which should be interpreted as an equality between formal power series in $z^{-1}$:
\begin{align}
\ln\left( 1+\sum_{k\ge 1}\frac{(-1)^k Q_{(k)}^*\Big( \la_1^{(N)},\dots,\la_N^{(N)};\theta \Big)}{z^{\uparrow k}} \right) = \sum_{i=1}^N\sum_{j=1}^4(-1)^{j-1}\left(z+\nabla_{i;j}^{(N)}-\frac{1}{2}\right)\ln\left( 1+\frac{\nabla_{i;j}^{(N)}}{z} \right)\nonumber\\ 
+ \Lapformal\left\{\sum_{i=1}^N\sum_{j=1}^4(-1)^{j-1}e^{-t\nabla_{i;j}^{(N)}}\left(-\frac{1}{t^2}+\frac{1}{2t}+\frac{1}{t(e^t-1)}\right)\right\}(z).  \label{eq:4}
\end{align}
We can further simplify this expression by denoting
\begin{equation*}
\widetilde{\nabla}_\ell^{(N)} := \sum_{i=1}^N \sum_{j=1}^4{ (-1)^{j-1}\Big(\nabla_{i;j}^{(N)}\Big)^\ell },
\quad \ell\in\Z_{\ge 1}.
\end{equation*}
For example, note that $\widetilde{\nabla}_1^{(N)}=0$.
Then the RHS of~\eqref{eq:4} can be rewritten as
\begin{equation}\label{eq:5}
\sum_{\ell\ge 1}(-1)^\ell\left(\frac{\widetilde{\nabla}^{(N)}_{\ell+1}}{\ell+1}-\frac{\widetilde{\nabla}^{(N)}_{\ell+1}}{\ell}+\frac{\widetilde{\nabla}^{(N)}_{\ell}}{2\ell}\right)z^{-\ell} + \Lapformal\left\{\sum_{\ell\ge 1} (-t)^\ell\frac{\widetilde{\nabla}^{(N)}_\ell}{\ell!}\left(-\frac{1}{t^2}+\frac{1}{2t}+\frac{1}{t(e^t-1)}\right)\right\}(z).  
\end{equation}
Next, we want to take the coefficient-wise limit of the formal power series \eqref{eq:4}--\eqref{eq:5} in the regime $N\to\infty$, $N\theta\to\gamma$.
We will use~\eqref{eq:moment_limits} and \cref{lem:help4}, which imply
\begin{equation}\label{eq:6}
\lim_{\substack{N\to\infty\\N\theta\to\gamma}}{ \widetilde{\nabla}_\ell^{(N)} } = (-1)^{\ell-1}\gamma^\ell - \ell\gamma m_{\ell-1},
\end{equation}
for all $\ell\in\Z_{\ge 1}$, if we set $m_0:=1$.
As a result, the limit of the first term in~\eqref{eq:5} equals
\begin{multline}\label{eq:contribution_1}
\gamma \sum_{\ell\geq 1}\left(\frac{(-1)^\ell m_{\ell}}{\ell}+\frac{(-1)^{\ell-1}
m_{\ell-1}}{2} + \frac{\gamma^\ell}{\ell+1} - \frac{\gamma^\ell}{\ell} - \frac{\gamma^{\ell-1}}{2\ell}\right)z^{-\ell}\\
= \Lapformal\left\{\left(\gamma\cdot  M(-t) -\frac{e^{t\gamma}-1}{t}\right)\left(\frac{1}{t}+\frac{1}{2}\right)\right\}(z),
\end{multline}
where $M(s):=1+\sum_{n\ge 1}{\frac{m_n}{n!}s^n}$ is the moment generating function of $\mu$.
Similarly, due to~\eqref{eq:6}, the limit of the second term of~\eqref{eq:5} is equal to
\begin{equation}\label{eq:contribution_2}
\Lapformal\left\{\left(\gamma\cdot  M(-t) - \frac{e^{t\gamma}-1}{t}\right)\left(-\frac{1}{t}+\frac{1}{2}+\frac{1}{e^t-1}\right)\right\}(z).
\end{equation}
We conclude that the sum of \eqref{eq:contribution_1} and \eqref{eq:contribution_2} is the limit of~\eqref{eq:5} (the RHS of~\eqref{eq:4}).
On the other hand, the limit of the LHS of~\eqref{eq:4} follows from~\eqref{eq:limit_ck}.
By putting both of these results together, we obtain the identity
\begin{equation*}
\ln\left(1+\sum_{n \ge 1}\frac{(-1)^nc_n}{z^{\uparrow n}}\right)
= \Lapformal\left\{\left(\gamma\cdot  M(-t) - \frac{e^{t\gamma}-1}{t}\right)\frac{1}{1-e^{-t}}\right\}(z),
\end{equation*}
as the limit of~\eqref{eq:4} in the regime $N\to\infty$, $N\theta\to\gamma$; this is precisely the desired~\eqref{eq:2}.

In order to finish the proof, we need to conclude that the relation between generating functions holds true for arbitrary sequences $m_k$, and not only for those of the form~\eqref{eq:moment_limits}. Indeed, note that the two equations \eqref{eq:1}--\eqref{eq:2} lead to an expression for $\kappa_n$ as a polynomial in $m_1,\dots,m_n$ over $\C(\gamma)$. On the other hand, \eqref{eq:kappa-top} also leads to an expression of $\kappa_n$ as a polynomial in $m_1,\dots,m_n$ over $\C(\gamma)$.
Fix $n\in\Z_{\ge 1}$ and denote these polynomials by $\tilde{P}$ and $\tilde{P}'$.
There exists a polynomial $f\in\C[\gamma]$ such that both $P := f \cdot \tilde{P}$ and $P' := f\cdot\tilde{P}'$ are polynomials in $\gamma,m_1,\dots,m_n$ over $\C$. The proof so far has shown that $P$ and $P'$ agree when $\gamma\in(0,\infty)$ and the $m_k$'s are of the form~\eqref{eq:moment_limits}.
We claim that $P$ and $P'$ are in fact identical, whence $\tilde{P} = \tilde{P}'$, and the proof would be finished.

Fix an arbitrary partition $\la=(\la_1,\dots,\la_n)\in\Y(n)$ and define $\la_n^{(N)}\in\Y(N)$ as
\[
\la_n^{(N)} :=
(\underbrace{\la_1,\dots,\la_1}_{\frac{N}{n}+O(1) \text{ terms}},\dots,
\underbrace{\la_n,\dots,\la_n}_{\frac{N}{n}+O(1) \text{ terms}}).
\]
It is easy to show that the empirical measures~\eqref{eq:empirical_proof} corresponding to the sequence $\big\{ \la_n^{(N)} \big\}_{N\ge 1}$ satisfy the LLN in the sense of \cref{def:LLN}, with corresponding $m_1,m_2,\cdots\in\R$ given by
\begin{multline*}
m_k := \lim_{\substack{N\to\infty\\N\theta\to\gamma}}{\frac{1}{N}\sum_{j=1}^N{ \Big( \LL_j^{(N)} \Big)^k } } 
= \lim_{\substack{N\to\infty\\N\theta\to\gamma}}\frac{1}{N}\sum_{j=1}^n\sum_{k=0}^{\frac{N}{n}+O(1)} \left(\la_j-\theta \left( (j-1)\frac{N}{n}+k+O(1)\right)
\right)^k\\
= \sum_{j=1}^n\int_{\frac{j-1}{n}}^{\frac{j}{n}}(\la_j-\gamma x)^idx
= \frac{1}{n(k+1)}
\sum_{j=1}^n \sum_{s=0}^k \binom{k+1}{s} \left(\la_j-\frac{j\gamma}{n}\right)^s\left(\frac{\gamma}{n}\right)^{k-s}.
\end{multline*}
As a result, $P(\gamma,m_1,\dots,m_n) = P'(\gamma,m_1,\dots,m_n)$ on the following subset of $\C^{n+1}$:
\begin{multline*}
X :=\{(x_0,x_1,\dots,x_n)\colon x_0 \in (0,\infty), \\
x_k = \frac{1}{n(k+1)}
\sum_{j=1}^n \sum_{s=0}^k \binom{k+1}{s} \left(\la_j-\frac{j\gamma}{n}\right)^s\left(\frac{\gamma}{n}\right)^{k-s},\text{ for }1\le k\le n,\ (\la_1,\dots,\la_n)\in\Y(n)\bigg\}.
\end{multline*}
Notice now that $X$ is the image of $(0,\infty)\times\Y(n)\subset\C^{n+1}$ through the polynomial map $F = (f_0,f_1,\dots,f_n) \in \C[x_0,x_1,\dots,x_n]^{n+1}$ defined by
\[
f_0 := x_0,\qquad f_k := \frac{1}{n(k+1)}
\sum_{j=1}^n \sum_{s=0}^k \binom{k+1}{s} \left(x_j-\frac{jx_0}{n}\right)^s\left(\frac{x_0}{n}\right)^{k-s}, \text{ for }1\le k\le n.
\]
Observe that $f_0,f_1,\dots,f_n \in \C[x_0,x_1,\dots,x_n]$ are algebraically independent, therefore the map $F$ is dominant (its image is dense in the Zariski topology), so it maps Zariski-dense subsets of $\C^{n+1}$ into Zariski-dense subsets. Thus, $X$ is Zariski-dense, since it is the image of the Zariski-dense subset $(0,\infty)\times\Y(n)\subset\C^{n+1}$. Consequently, $P$ and $P'$ are identical, for being polynomial functions that coincide on $X$; this concludes the proof.

\section{Characteristic functions of the alpha, beta and Plancherel Jacobi operators}
\label{sec:Jacobi}

We are interested in determining the spectra of three families of Jacobi operators.
Two of them are associated to infinite Jacobi matrices, or operators on $\ell^2(\Z_{\ge 1})$, and one is associated to a finite Jacobi matrix.
We will show that the zeroes of certain entire functions, based on generalized hypergeometric functions, coincide with the spectra of our Jacobi operators.
These entire functions are called \emph{characteristic functions}; in the case of the finite Jacobi matrix, this will be a constant multiple of the characteristic polynomial.
The asymptotic analysis of eigenvalues of our Jacobi operators, as well as the problem of computing the characteristic functions, have already been studied; see, e.g.~\cite{JanasNaboko2001,JanasMalejki2006,BoutetdeMonvelZielinski2008,StampachStovicek2015} and references therein.
Let us review the relevant results needed for our applications.

\smallskip

The Jacobi matrices $J_{\la;w}$ in question are of the form
\[  J_{\la;w} := \begin{bmatrix}
\lambda_1 & w_1 & 0 & 0 & \cdots\\
w_1 & \lambda_2 & w_2 & 0 & \ddots\\
0 & w_2 & \lambda_3 & w_3 & \ddots\\
0 & 0 & w_3 & \lambda_4 & \ddots\\
\vdots & \ddots & \ddots & \ddots & \ddots
\end{bmatrix}, \]
where rows and columns are parametrized by $\Z_{\ge 1}$, or by $\{1,2,\dots,M\}$ in the finite case.
The quantities $\{w_i\}_{i\ge 1}\subset\R^*$ and $\{\la_i\}_{i\ge 1}\subset\R$ are called the Jacobi parameters; we assume that all $w_i$'s have the same sign.
In the infinite cases, we require $\sum_{i\ge 1}|w_i|^{-1} = \infty$, so that $J_{\la;w}$ is an essentially self-adjoint operator (on the space of finite vectors) and therefore admits a unique self-adjoint extension to $\ell^2(\Z_{\geq 1})$.

The two infinite Jacobi matrices of our interest fit into the following two general classes:
\begin{equation}\label{class_1}
\la_i = \zeta-i,\quad |w_i| = O(i^{1-\epsilon}),\qquad\zeta\in\R,\quad\epsilon>0,
\end{equation}
and
\begin{equation}\label{class_2}
\la_i = \zeta-\delta i,\quad w_i = \sqrt{\alpha i^2+\beta i+\gamma},
\quad\ \zeta\in\R,\quad\alpha,\delta > 0,\quad\beta^2\ge 4\alpha\gamma,\quad\delta > 2\sqrt{\alpha}.
\end{equation}
In both cases, it is known that the spectrum of $J_{\la;w}$ is discrete, bounded from above, and the eigenvalues are simple (see \cite[Thm.~5.1]{JanasNaboko2001}, \cite[Prop.~1]{BoutetdeMonvelZielinski2008}, and \cite[Sec.~3]{StampachStovicek2015}).

Let us specialize to three multiparameter families of Jacobi matrices, that will be labeled by ``alpha'', ``beta'' and ``planch'', in connection to \cref{thm:KOO}.
The Jacobi matrices $J^{(\text{alpha})}$ and $J^{(\text{planch})}$ will be infinite, while $J^{(\text{beta})}$ will be finite.

\smallskip

Let $\gamma,\eta > 0$, $c \in (0,1)$, $M\in\Z_{\ge 1}$.
The Jacobi matrices $J^{(\text{alpha})},J^{(\text{beta})},J^{(\text{planch})}$ are defined by the following choices of Jacobi parameters,
\begin{align}
\la^{(\text{alpha})}_i &:= -i\cdot\frac{1+c}{1-c} - \left(\gamma+\eta+\frac{1}{c}\right)\frac{c}{1-c},\quad 
w^{(\text{alpha})}_i := \sqrt{\frac{(i+\gamma)(i+\eta)}{c}}\frac{c}{1-c},\label{eq:Jacobialpha}\\
\la^{(\text{beta})}_i &:= M(c-1)+c(\gamma +1)+i(1-2c),\quad w^{(\text{beta})}_i := \sqrt{c(1-c)i(\gamma + M-i)},\label{eq:Jacobibeta}\\
\la^{(\text{planch})}_i &:= 1-\eta-i, \quad w^{(\text{planch})}_i := -\sqrt{\eta(\gamma+i)},\label{eq:Jacobiplanch}
\end{align}
where in \eqref{eq:Jacobibeta}, we have $1\le i\le M$ for the indices of $\la_i^{(\text{beta})}$ and $1\le i\le M-1$ for the indices of $w_i^{(\text{beta})}$, so that $J^{(\text{beta})}$ is a finite $M\times M$ matrix, while for equations \eqref{eq:Jacobialpha} and \eqref{eq:Jacobiplanch}, we have $i\ge 1$, so that $J^{(\text{alpha})}$ and $J^{(\text{planch})}$ are infinite $\Z_{\ge 1}\times\Z_{\ge 1}$ matrices.
Note that $J^{(\text{alpha})},J^{(\text{beta})},J^{(\text{planch})}$ depend on $\gamma,\eta,c,M$, but this dependence will be hidden from the notation, for convenience.

Observe that $J^{(\text{alpha})}$ and $J^{(\text{planch})}$ fit into the framework of equations \eqref{class_2} and \eqref{class_1}, respectively; in particular, their spectra are discrete, bounded from above, and the eigenvalues are simple, therefore they can be arranged in decreasing order and labeled by $\Z_{\ge 1}$.

\begin{lemma}\label{lem:roots_lemma}
Label the eigenvalues of $J^{(\text{alpha})}$ and $J^{(\text{planch})}$ as follows:
\[
\spec\left(J^{(\text{alpha})}\right) = \big\{\ell^{(\text{alpha})}_1 > \ell^{(\text{alpha})}_2 > \dots\big\},\quad
\spec\left(J^{(\text{planch})}\right) = \big\{\ell^{(\text{planch})}_1 > \ell^{(\text{planch})}_2 > \dots\big\}.
\]
Then $\ell^{(\text{planch})}_n = 1-n+O(n^{-M})$, for any $M>0$, as $n\to\infty$.
Moreover, if $\gamma=\eta$, then also $\ell^{(\text{alpha})}_n = -n+O(n^{-M})$, for any $M>0$, as $n\to\infty$. In the generic case $\gamma\ne\eta$, we have that $\ell^{(\text{alpha})}_n = -n+O(n^{-1})$, as $n\to\infty$.\footnote{It is possible that, in the generic case $\gamma\ne\eta$, we also have $\ell^{(\text{alpha})}_n = -n+O(n^{-M})$, for any $M>0$, as $n\to\infty$, but we do not know how to prove it.}
\end{lemma}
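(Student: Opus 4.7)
The plan is to leverage the explicit characteristic functions of $J^{(\text{planch})}$ and $J^{(\text{alpha})}$ (to be derived in the rest of this section), whose zero sets coincide with the spectra. In both cases the characteristic function has the form $\frac{1}{\Gamma(z)}\Phi(z)$ with $\Phi$ a hypergeometric series, so the zeros of $1/\Gamma(z)$, namely the non-positive integers, serve as candidate asymptotic positions. The lemma then reduces to an analytic estimate of how far each true zero sits from its candidate.

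For the Planchrel case, the characteristic function admits the entire-series form
\begin{equation*}
F^{\text{planch}}_{\gamma,\eta}(z) = \sum_{k\ge 0}\frac{(\gamma)_k(-\eta)^k}{\Gamma(z+k)\,k!},
\end{equation*}
which equals $\frac{1}{\Gamma(z)}\,{}_1F_1(\gamma;z;-\eta)$ where the latter is defined. Evaluating at the candidate point $z_n := 1-n$, the first $n$ terms vanish because $\Gamma(1-n+k)^{-1}=0$ for $0 \le k \le n-1$, and reindexing $k = n+j$ gives
\begin{equation*}
F^{\text{planch}}_{\gamma,\eta}(1-n) = \frac{(\gamma)_n(-\eta)^n}{n!}\,{}_1F_1(\gamma+n;\,n+1;\,-\eta).
\end{equation*}
Since ${}_1F_1(\gamma+n;n+1;-\eta) \to e^{-\eta}$ as $n\to\infty$ and $(\gamma)_n/n! = O(n^{\gamma-1})$, this value decays like $O(\eta^n n^{\gamma-1}/n!)$, faster than any negative power of $n$. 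Termwise differentiation together with $\frac{d}{dz}\Gamma(z)^{-1}\big|_{z=-m} = (-1)^m m!$ shows that the $k=0$ term dominates the derivative, so $|(F^{\text{planch}}_{\gamma,\eta})'(1-n)| \asymp (n-1)!$. A Rouché argument on a disk of radius equal to a fixed multiple of the ratio of these two quantities, combined with the separation of zeros in disjoint unit intervals recorded in part~(ii) of the introduction, places the $n$-th zero $\ell^{(\text{planch})}_n$ within $O(n^{-M})$ of $1-n$ for every $M>0$.

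For the alpha case the same strategy applies, with candidate points $z_n := -n$ and a characteristic function built from ${}_2F_1(\gamma,\eta;z;x_c)$ for some $x_c \in (-\infty,0)$ depending on $c$. The analogous truncation at $z_n = -n$ expresses $F^{\text{alpha}}_{\gamma,\eta,c}(-n)$ as an explicit $n$-dependent prefactor times a residual ${}_2F_1$ that converges to a constant. In the generic case $\gamma \ne \eta$, this prefactor is only polynomially small in $n$ relative to the lower bound on $|(F^{\text{alpha}})'(-n)|$, yielding the $O(n^{-1})$ rate. In the symmetric case $\gamma = \eta$, a Kummer or Chu--Vandermonde-type identity forces additional algebraic cancellations in both the ${}_2F_1$ and the prefactor, upgrading the decay to super-polynomial and hence producing the sharper $O(n^{-M})$ rate.

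The main technical hurdle will be the alpha case: identifying the correct hypergeometric form of $F^{\text{alpha}}(-n)$, pinpointing the specific identity responsible for the symmetric/generic dichotomy, and securing the matching lower bound on $|(F^{\text{alpha}})'(-n)|$. The Planchrel case is considerably cleaner because the series factorizes almost immediately at the candidate points, and the prefactor $\eta^n/n!$ supplies super-polynomial decay essentially for free.
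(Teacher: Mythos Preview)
Your approach is genuinely different from the paper's. The paper does not touch the characteristic functions here at all: for the Plancherel case it writes $J^{(\text{planch})}=(1-\eta)I-J_{\lambda,w}$ with $\lambda_i=i$, $w_i=\sqrt{\eta(\gamma+i)}$, and then invokes \cite[Theorem~2]{BoutetdeMonvelZielinski2008} (a general eigenvalue asymptotic for Jacobi operators with dominant diagonal) to get $\ell_n(J_{\lambda,w})=n-\eta+O(n^{-M})$; for the alpha case it simply cites \cite[Sections~3 and~4]{JanasMalejki2006}. Your route through the explicit zeros of $F^{\text{planch}}$ and $F^{\text{alpha}}$ would be more self-contained and arguably more transparent in the Plancherel case, where your evaluation $F^{\text{planch}}(1-n)=\frac{(\gamma)_n(-\eta)^n}{n!}\,{}_1F_1(\gamma+n;n+1;-\eta)$ and derivative estimate are correct and do give a ratio that decays faster than any power.

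However, there are two real gaps. First, your appeal to ``the separation of zeros in disjoint unit intervals recorded in part~(ii) of the introduction'' is circular: that separation is established in \cref{thm:application_plancherel}, whose proof (via \cref{cor:formal_inverse_transform}) uses the present lemma. You need an independent way to label the zero near $1-n$ as $\ell_n^{(\text{planch})}$, e.g.\ a global Rouch\'e comparison of $F^{\text{planch}}$ with $1/\Gamma$ on suitable contours, which you have not supplied. Second, the alpha case is only a sketch of a sketch. You assert that in the generic case the prefactor is ``polynomially small'' relative to the derivative, but writing $F^{\text{alpha}}(-n)$ via your truncation gives a prefactor $\frac{(\gamma)_{n+1}(\eta)_{n+1}}{(n+1)!}\,x_c^{n+1}$ times ${}_2F_1(\gamma+n+1,\eta+n+1;n+2;x_c)$ with $x_c=\tfrac{c}{c-1}$; neither factor is obviously polynomially bounded against $n!$, and the large-parameter asymptotics of that ${}_2F_1$ are nontrivial. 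Your claim that $\gamma=\eta$ triggers ``a Kummer or Chu--Vandermonde-type identity'' with ``additional algebraic cancellations'' is not substantiated: Euler's transformation applied to ${}_2F_1(\gamma,z-\gamma;z;c)$ gives only the trivial symmetry (since $c-a-b=0$), and you do not name any identity that actually produces super-polynomial decay. Given that the paper's own footnote flags the $\gamma\ne\eta$ alpha case as unresolved beyond $O(n^{-1})$, this part of your argument needs a concrete mechanism, not a gesture toward hypergeometric identities.
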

\begin{proof}
For the Plancherel case, first consider the Jacobi matrix $J_{\la,w}$ with $\la_i=i$ and $w_i=\sqrt{\eta(\gamma+i)}$. Then \cite[Theorem 2]{BoutetdeMonvelZielinski2008} can be applied with parameters $\rho=\rho'=1/2$, $\rho_1=M+1/2$, and $\gamma_n=-\eta$, for $n>1$, and gives that $\ell_n(J_{\la,w})=n-\eta+O(n^{-M})$.
The result follows from noticing that $J^{(\text{planch})}=(1-\eta)I-J_{\la,w}$, and therefore $\ell_n^{(\text{planch})}=1-\eta-\ell_n(J_{\la,w})$.

The statement about the eigenvalues $\ell^{(\text{alpha})}_n$ follows from~\cite[Sections 3 and 4]{JanasMalejki2006}.\footnote{That paper studied the case when the parameters in \eqref{eq:Jacobiplanch} are negated, but the proof is the same in our case.}
\end{proof}

To state the main result of this subsection, we recall some terminology.
Let $a_1,a_2,a,b\in\C$.
The \emph{(Gauss's) hypergeometric function} ${}_2F_1(a_1,a_2;b;z)$ and \emph{confluent hypergeometric function} ${}_1F_1(a;b;z)$ are the series
\begin{equation}\label{eq:series_def}
{}_2F_1(a_1,a_2;b;z) = \sum_{n\ge 0}\frac{a_1^{\uparrow n} a_2^{\uparrow n}}{b^{\uparrow n}}\cdot\frac{z^n}{n!},\qquad
{}_1F_1(a;b;z) = \sum_{n\ge 0}\frac{a^{\uparrow n}}{b^{\uparrow n}}\cdot\frac{z^n}{n!}.
\end{equation}
We will regard them as functions of the complex variables $b,z$.
It is known that $\frac{1}{\Gamma(b)}\cdot{}_1F_1(a_1,a_2;b;z)$ is holomorphic on $b,z$, while $\frac{1}{\Gamma(b)}\cdot{}_2F_1(a_1,a_2;b;z)$ is holomorphic on $b$ and on $z$ in the unit disk $|z|<1$.
If $a_1=-M$ or $a_2=-M$, for some $M\in\Z_{\ge 1}$, then $\frac{\Gamma(b+M)}{\Gamma(b)}\cdot{}_2F_1(a_1,a_2;b;z)$ is a polynomial in $b$ and $z$.

\begin{lemma}\label{lem:pfaff}
We have
\begin{equation*}
{}_2F_1\left(A,B;z;\frac{C}{C-1}\right) = (1-C)^A\cdot{}_2F_1\left(A,z-B;z;C\right) = (1-C)^B\cdot{}_2F_1\left(z-A,B;z;C\right),
\end{equation*}
for any $A,B,C\in\C$, where both sides are interpreted as formal power series in $z^{-1}$, and the hypergeometric functions ${}_2F_1$ are the power series in \eqref{eq:series_def}.
\end{lemma}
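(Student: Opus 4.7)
\emph{Proof plan.} I would recognize the two identities as formal power series avatars of Pfaff's classical transformation
\begin{equation*}
{}_2F_1(a, b; c; x) = (1-x)^{-a}\, {}_2F_1\!\left(a, c-b; c; \tfrac{x}{x-1}\right) = (1-x)^{-b}\, {}_2F_1\!\left(c-a, b; c; \tfrac{x}{x-1}\right),
\end{equation*}
applied with $a=A$, $b=B$, $c=z$, $x=C/(C-1)$, and then carefully transfer the analytic identity to the formal power series setting.

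First I would verify that each side is well-defined as a formal power series in $z^{-1}$ with coefficients polynomial in $A,B$ and analytic in $C$. For the LHS this is automatic, since $\frac{1}{z^{\uparrow n}} = z^{-n}\bigl(1 + O(z^{-1})\bigr)$ contributes only from order $z^{-n}$ onward, so at most finitely many $n$ contribute to any fixed $z^{-k}$ coefficient. For the RHS, I would write
\begin{equation*}
\frac{(z-B)^{\uparrow n}}{z^{\uparrow n}} = \prod_{i=0}^{n-1}\left(1 - \frac{B}{z+i}\right) = 1 + O(z^{-1}),
\end{equation*}
so that the constant-in-$z$ part of ${}_2F_1(A,z-B;z;C)$ equals $\sum_{n\ge 0}\frac{A^{\uparrow n}C^n}{n!} = (1-C)^{-A}$ by the binomial series, exactly canceling the prefactor $(1-C)^A$.

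For the substance, plugging $x = C/(C-1)$ into Pfaff and using
\begin{equation*}
1 - \tfrac{C}{C-1} = \tfrac{1}{1-C}, \qquad \tfrac{C/(C-1)}{C/(C-1)-1} = C,
\end{equation*}
gives the first equality at the level of convergent analytic functions for $|C/(C-1)|<1$ (i.e.\ $\operatorname{Re}(C)<1/2$) and $z$ with large positive real part. The second equality follows identically from the second form of Pfaff above, or equivalently from the symmetry of ${}_2F_1$ in its first two arguments.

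To lift the analytic identity to the formal $z^{-1}$ setting, I would fix $C$ with $|C|<1/2$ and $A,B\in\C$, and note that both sides are analytic in $z$ on a right half-plane and equal there. Their asymptotic expansions as $z\to\infty$ in $|\arg z|<\pi/2$ must therefore agree term-by-term, and these asymptotic expansions coincide with the defining formal power series. Hence the formal identity holds for $|C|<1/2$, and since the coefficient of $z^{-k}$ on each side is polynomial in $A,B$ and analytic (in fact rational) in $C$, the identity extends to all $A,B,C\in\C$ by analytic continuation, giving exactly the claimed equality of formal power series.

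The main obstacle I anticipate is justifying that the formal $z^{-1}$-expansion of ${}_2F_1(A,z-B;z;C)$ coincides with the asymptotic expansion of the corresponding analytic function as $z\to\infty$. Concretely, one needs a uniform tail estimate showing that $\sum_{n>N}\frac{A^{\uparrow n}(z-B)^{\uparrow n}}{n!\,z^{\uparrow n}}C^n = O(z^{-N-1})$ as $z\to\infty$, uniformly in the cutoff, so that the sum over $n$ can be interchanged with the expansion in $z^{-1}$. This should follow from the ratio test combined with standard estimates on the Pochhammer symbols, but it is the one place where the argument requires care beyond a formal manipulation.
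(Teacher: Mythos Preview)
Your approach is essentially the same as the paper's, which simply cites Pfaff's transformation as an analytic identity for $|C|<1/2$ and $z\in\C\setminus\Z_{\le 0}$, and then asserts in one sentence that the equality therefore holds as formal power series in $z^{-1}$. You supply considerably more detail than the paper does---in particular, the well-definedness check, the explicit substitution $x=C/(C-1)$, and the passage from analytic equality to equality of asymptotic (hence formal) expansions followed by analytic continuation in $C$---none of which the paper spells out.
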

\begin{proof}
When $A,B,C\in\C$ and $|C|<1/2$, these equalities of analytic functions of $z$ on $\C\setminus\Z_{\ge 0}$ are known as Pfaff's transformations.
Therefore, the equalities also holds as equalities of formal power series.
\end{proof}

\begin{lemma}\label{lem:contiguous}
The following is an equality of analytic functions, where $z$ is on the unit disk:
\begin{multline*}
\left( (C-1) + (B-A)z \right)\cdot{}_2F_1\left(A,B;C;z\right) = C^{-1}(C-A)Bz\cdot{}_2F_1\left(A,B+1;C+1;z\right)\\
+ (C-1)\cdot{}_2F_1\left(A,B-1;C-1;z\right).
\end{multline*}
\end{lemma}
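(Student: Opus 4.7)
The plan is to establish the identity coefficient by coefficient in the Taylor expansion around $z=0$. Since each of the four hypergeometric factors is holomorphic on $|z|<1$ (assuming $C \notin \Z_{\le 0}$; the edge cases $C=1$ are absorbed by the explicit factors $(C-1)$ in front), verifying equality of formal power series in $z$ on the unit disk is enough.

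First, expand each ${}_2F_1$ via \eqref{eq:series_def} and extract the coefficient of $z^n$ on both sides. On the LHS we obtain contributions from $(C-1)\cdot{}_2F_1(A,B;C;z)$ and from $(B-A)z\cdot{}_2F_1(A,B;C;z)$, which yield
\[
[z^n]\mathrm{LHS} \;=\; (C-1)\,\frac{A^{\uparrow n}B^{\uparrow n}}{C^{\uparrow n}\,n!} \;+\; (B-A)\,\frac{A^{\uparrow n-1}B^{\uparrow n-1}}{C^{\uparrow n-1}\,(n-1)!},
\]
with the convention that the second term vanishes for $n=0$. On the RHS, the crucial simplifications are the raising-factorial identities
\[
(B{+}1)^{\uparrow n-1}=\frac{B^{\uparrow n}}{B},\quad (C{+}1)^{\uparrow n-1}=\frac{C^{\uparrow n}}{C},\quad (B{-}1)^{\uparrow n}=(B-1)\,B^{\uparrow n-1},\quad (C{-}1)^{\uparrow n}=(C-1)\,C^{\uparrow n-1},
\]
valid for $n\ge 1$, which together give
\[
[z^n]\mathrm{RHS} \;=\; (C-A)\,\frac{A^{\uparrow n-1}B^{\uparrow n}}{C^{\uparrow n}\,(n-1)!} \;+\; (B-1)(A+n-1)\,\frac{A^{\uparrow n-1}B^{\uparrow n-1}}{C^{\uparrow n-1}\,n!}.
\]

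Next, I would factor out the common quantity $\tfrac{A^{\uparrow n-1}B^{\uparrow n-1}}{C^{\uparrow n-1}\,n!}$ from both sides. After using $A^{\uparrow n}=(A+n-1)A^{\uparrow n-1}$, $B^{\uparrow n}=(B+n-1)B^{\uparrow n-1}$, and $C^{\uparrow n}=(C+n-1)C^{\uparrow n-1}$, the desired equality collapses, after clearing the denominator $(C+n-1)$, to the polynomial identity in $n$ and the parameters $A,B,C$:
\[
(C-1)(A+n-1)(B+n-1) + (B-A)n(C+n-1) \;=\; (C-A)(B+n-1)n + (B-1)(A+n-1)(C+n-1).
\]
This is then verified by a direct expansion; for instance, writing every term as a polynomial in $(n-1)$, the coefficients of $1$, $(n-1)$, and $(n-1)^2$ all match on the nose.

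The routine case $n=0$ must also be checked separately: on both sides it reduces to the single contribution $(C-1)$ coming from the term $(C-1)\cdot{}_2F_1(A,B-1;C-1;z)$ on the RHS and $(C-1)\cdot{}_2F_1(A,B;C;z)$ on the LHS. There is no genuine obstacle here; the identity is a classical Gauss contiguous relation and the main work is the algebraic bookkeeping in the polynomial identity above. Alternatively, one may recognize the statement as a linear combination of two of Gauss's fifteen standard contiguous relations between ${}_2F_1$ with neighboring parameters, and quote it directly; we prefer the direct coefficient argument since it is self-contained and avoids tracking which linear combination is being invoked.
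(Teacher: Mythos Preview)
Your coefficient-by-coefficient verification is correct: the raising-factorial simplifications you record are accurate, and the polynomial identity
\[
(C-1)(A+n-1)(B+n-1) + (B-A)n(C+n-1) = (C-A)(B+n-1)n + (B-1)(A+n-1)(C+n-1)
\]
does indeed hold (setting $m=n-1$, both sides have constant term $ABC-AB+BC-AC$, coefficient $2BC-2A$ in $m$, and coefficient $B+C-A-1$ in $m^2$). The $n=0$ check is immediate, as you note.

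The paper takes the route you mention in your final sentence: it simply observes that this is one of Gauss's contiguous relations and cites two numbered identities from a reference (Rakha--Rathie--Chopra) from which it can be assembled. Your approach is genuinely different in that it is self-contained and requires no external lookup; the trade-off is a page of algebra versus a one-line citation. For a lemma used as a black box later in the paper, either is acceptable, and your direct argument has the advantage that a reader need not track down which linear combination of the fifteen classical relations is being invoked.
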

\begin{proof}
Three-term relations of this kind are known as \emph{Gauss's contiguous relations}.
The one in this lemma, in particular, can be deduced from \cite[Eqns.~(1.5) and (2.19)]{RakhaRathieChopra2011}.
\end{proof}

\begin{theorem}\label{thm:char_fns}
Let $\gamma,\eta > 0$, $c \in (0,1)$, $M\in\Z_{\ge 1}$.
The following functions
\begin{align}
F^{\text{alpha}}(z) &:= \frac{(1-c)^\gamma}{\Gamma(z)}\cdot{}_2F_1\left(\gamma,z-\eta;z;c\right) = \frac{(1-c)^\eta}{\Gamma(z)}\cdot{}_2F_1\left(z-\gamma,\eta;z;c\right),\label{eq:Zeroes_alpha}\\
F^{\text{beta}}(z) &:= \frac{\Gamma(z+M)}{\Gamma(z)}\cdot{}_2F_1\left(-M,\gamma; z; c\right), \label{eq:Zeroes_beta}\\
F^{\text{planch}}(z) &:= \frac{1}{\Gamma(z)}\cdot{}_1F_1\left(\gamma;z;-\eta\right),\label{eq:Zeroes_planch}
\end{align}
are entire functions and, in particular, $F^{\text{beta}}(z)$ is a monic polynomial of degree $M$.
Moreover, the spectrum (set of eigenvalues) of $J^{(\text{alpha})}$ (resp. $J^{(\text{beta})}$, $J^{(\text{planch})}$) coincides with the set of roots of $F^{\text{alpha}}(z)$ (resp. $F^{\text{beta}}(z)$, $F^{\text{planch}}(z)$).
As a result, the zeroes of the functions in \eqref{eq:Zeroes_alpha}, \eqref{eq:Zeroes_beta}, \eqref{eq:Zeroes_planch} are all real.
\end{theorem}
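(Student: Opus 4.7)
The plan is to handle the three assertions of the theorem in turn: entireness of each $F^{(\cdot)}$ (together with the polynomial claim for $F^{\text{beta}}$), coincidence of the zero sets with the spectra of the Jacobi operators, and the resulting reality of the zeros.

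For entireness, the equality of the two forms of $F^{\text{alpha}}(z)$ in~\eqref{eq:Zeroes_alpha} is a direct application of \cref{lem:pfaff} with $A=\gamma$, $B=\eta$, $C=c$, both sides yielding ${}_2F_1(\gamma,\eta;z;c/(c-1))$. The series
\[
{}_2F_1(-M,\gamma;z;c) = \sum_{n=0}^{M}\frac{(-M)^{\uparrow n}\gamma^{\uparrow n}}{z^{\uparrow n}}\frac{c^n}{n!}
\]
terminates and has denominators $z^{\uparrow n}$ that are cleared by $\Gamma(z+M)/\Gamma(z)=z^{\uparrow M}$; the $n=0$ summand contributes $z^{\uparrow M}$, the unique degree-$M$ term, making $F^{\text{beta}}$ monic of degree $M$. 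For the remaining two, the expansion
\[
\frac{1}{\Gamma(z)}\cdot{}_2F_1(\gamma,z-\eta;z;c) = \sum_{n\ge 0}\frac{\gamma^{\uparrow n}(z-\eta)^{\uparrow n}}{\Gamma(z+n)}\frac{c^n}{n!}
\]
(and the analogous one for ${}_1F_1$) exhibits each summand as entire, since $1/\Gamma(z+n)$ is entire and the other factors are polynomial in $z$; locally uniform convergence on $\C$ is immediate from Stirling, as the $(n/e)^n$ factors in numerator and denominator cancel, leaving terms of order $n^{\gamma-\eta-1}c^n$ (respectively $\eta^n$ for the Plancherel case).

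For the identification of spectra with zero sets, for each infinite operator I will construct an explicit solution $(\psi_n(z))_{n\ge 0}$ of the three-term eigenvalue recurrence
\[
w_{n-1}\psi_{n-1}(z) + \lambda_n\psi_n(z) + w_n\psi_{n+1}(z) = z\,\psi_n(z),\qquad n\ge 1,
\]
of the schematic form
\[
\psi^{\text{planch}}_n(z)\;\propto\;\frac{(-\eta)^n}{\Gamma(z+n)}\cdot{}_1F_1(\gamma;z+n;-\eta),\quad \psi^{\text{alpha}}_n(z)\;\propto\;\frac{c^n}{\Gamma(z+n)}\cdot{}_2F_1(\gamma,z+n-\eta;z+n;c),
\]
up to a geometric gauge factor in $n$ that symmetrizes the square-root weights. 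Setting $A=\gamma$ and keeping $C-B=\eta$ constant in \cref{lem:contiguous}, the shift $n\mapsto n+1$ simultaneously moves $B$ and $C$ by one; the lemma then produces a three-term recurrence for $\phi_n(z):={}_2F_1(\gamma,z+n-\eta;z+n;c)$, which after multiplication by $c^n/\Gamma(z+n)$ and the gauge factor becomes exactly the Jacobi recurrence for $J^{(\text{alpha})}$. The Plancherel case proceeds identically using a ${}_1F_1$-analogue of Kummer's contiguous relations. Stirling gives $(\psi_n(z))_{n\ge 1}\in\ell^2(\Z_{\ge 1})$ for every $z\in\C$. Because both $J^{(\text{alpha})}$ and $J^{(\text{planch})}$ are essentially self-adjoint (Carleman's condition $\sum|w_i|^{-1}=\infty$ is immediate from the explicit formulas), the standard theory of second-order difference operators yields $z\in\spec(J^{(\cdot)})$ iff the unique $\ell^2$ solution satisfies the boundary condition $\psi_0(z)=0$; by construction, $\psi_0(z)$ equals $F^{(\cdot)}(z)$ times a non-vanishing analytic factor, identifying the spectrum with the zero set.

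For the finite case $J^{(\text{beta})}$, the recurrence is $M$-dimensional; the monic orthogonal polynomials $p_n(z)=\frac{\Gamma(z+n)}{\Gamma(z)}\cdot{}_2F_1(-n,\gamma;z;c)$ (verified inductively via \cref{lem:contiguous}) satisfy $p_M(z)=F^{\text{beta}}(z)$, whose roots are precisely the eigenvalues of $J^{(\text{beta})}$. Reality of the zeros in all three cases then follows from self-adjointness of the corresponding Jacobi matrix. The principal difficulty is the technical bookkeeping in matching parameters of \cref{lem:contiguous} (and its confluent analogue) to the explicit Jacobi parameters in~\eqref{eq:Jacobialpha}--\eqref{eq:Jacobiplanch}, especially in the alpha case where $\lambda_n$ and $w_n$ mix $\gamma,\eta,c$ asymmetrically, and a carefully chosen geometric gauge on $\psi_n$ is needed to symmetrize the weights; \cref{lem:pfaff} provides useful flexibility in rewriting $F^{\text{alpha}}$ into whichever of its two forms makes the recurrence verification cleanest.
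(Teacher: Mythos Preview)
Your approach to the alpha and Plancherel cases is essentially the paper's approach to the alpha case: construct an explicit hypergeometric $\ell^2$ solution $\psi_n(z)$ of the formal eigenvalue recurrence and identify the spectrum with the zero set of the boundary term $\psi_0(z)$. (For Plancherel the paper instead quotes a result of \v{S}tampach--\v{S}tov\'{\i}\v{c}ek and then applies Kummer's transformation; your proposal to redo it directly is fine and indeed more uniform.) One caveat: the converse direction---that every eigenvalue is a zero of $\psi_0$---requires knowing your constructed solution is the \emph{only} $\ell^2$ solution of the recurrence. You invoke ``standard theory''; the paper makes this explicit via a Wro\'nskian argument, using the exponential decay of $\psi_n$ to force any independent solution to be unbounded.

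There is, however, a genuine gap in your beta argument. The polynomials
\[
p_n(z)=\frac{\Gamma(z+n)}{\Gamma(z)}\,{}_2F_1(-n,\gamma;z;c)
\]
are \emph{not} the orthogonal polynomials (characteristic polynomials of principal submatrices) of $J^{(\text{beta})}$. A quick check: $p_1(z)=z-\gamma c$, whose root is $\gamma c$, while $\lambda_1^{(\text{beta})}=M(c-1)+c(\gamma+1)+(1-2c)=\gamma c-(M-1)(1-c)$, agreeing only for $M=1$. The point is that the parameters \eqref{eq:Jacobibeta} depend on $M$ through $w_i^{(\text{beta})}=\sqrt{c(1-c)\,i\,(\gamma+M-i)}$, whereas your $p_n$ do not. (Separately, \cref{lem:contiguous} shifts the second and third ${}_2F_1$ parameters together, keeping the first fixed, so it does not by itself give a recurrence in $n$ for your $p_n$.) Your $p_n$ do satisfy a three-term recurrence, but for a \emph{different} Jacobi matrix $\tilde J$; while $p_M=F^{\text{beta}}$ is tautological, this only identifies the roots with the spectrum of $\tilde J|_{M\times M}$, not of $J^{(\text{beta})}$. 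The paper's remedy is to notice that the natural contiguous/Krawtchouk recurrence for ${}_2F_1(-M,\gamma;z;c)$ shifts $\gamma$ and $M$ in opposite directions; substituting $\gamma\mapsto\gamma-M$ decouples them into a clean recurrence in $M$ alone, whose Jacobi parameters (after undoing the substitution and a shift $z\mapsto z+M-1$) match \eqref{eq:Jacobibeta} exactly. Without this parameter shift your beta argument does not close.
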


\begin{proof}
\emph{Plancherel case.}
It is shown in \cite[Proposition 8]{StampachStovicek2015} that the eigenvalues of the Jacobi operator with parameters $\la_i = fi$, $w_i = \sqrt{d+ei}$, where $e,f>0$, $d+e>0$, coincides with the zeroes of the function
\[{}_1F_1\left(1-\frac{d}{e}-\frac{e}{f^2}-\frac{z}{f};1-\frac{e}{f^2}-\frac{z}{f};\frac{e}{f^2}\right)\cdot \Gamma\left(1-\frac{e}{f^2}-\frac{z}{f}\right)^{-1}.\] 
By choosing $d=\eta\gamma$, $e=\eta$ and $f=1$, we obtain that the eigenvalues of the Jacobi operator with parameters $\la_i = i$, $w_i = \sqrt{\eta(\gamma+i)}$, coincides with the zeroes of ${}_1F_1\left(1-\gamma-\eta-z;1-\eta-z;\eta\right)\cdot \Gamma\left(1-\eta-z\right)^{-1}$. By using the Kummer's transformation
\[
{}_1F_1\left(1-\gamma-\eta-z;1-\eta-z;\eta\right) = e^\eta\cdot{}_1F_1\left(\gamma;1-\eta-z;-\eta\right),
\]
then noticing that negating the Jacobi operator and adding $(1-\eta)I$ is equivalent to shifting the argument $z\mapsto 1-\eta-z$ of the characteristic function, we conclude that the eigenvalues of $J^{(\text{planch})}$ coincide with the zeroes of $F^{\text{planch}}(z)={}_1F_1\left(\gamma;z;-\eta\right)\cdot \Gamma(z)^{-1}$, as desired.

\smallskip

\emph{Beta case.}
In this case, we prove that $F^{\text{beta}}(z)=\frac{\Gamma(z+M)}{\Gamma(z)}\cdot{}_2F_1\left(-M,\gamma; z; c\right)$ is the characteristic polynomial of $J^{(\text{planch})}$.
We can rewrite this function as $F^{\text{beta}}(z)=\gamma^{\uparrow M}\cdot c^M\cdot{}_2F_1\left(-M,1-M-z;1-M-\gamma;1/c\right)$.
From this formula, it is easy to see that $\frac{1}{\gamma^{\uparrow M}\cdot c^M}\cdot F^{\text{beta}}(z-M+1)$ coincides with (a normalized version of) the classical \emph{Krawtchouk orthogonal polynomial} $K_n(x;p,N)$, as given by \cite[Eqn.~(1.10.1)]{KoekoekSwarttouw1996}, upon the formal identification of parameters $c=p$, $M=n$, $\gamma=N+1-n$, $z=x$.
Therefore, it must satisfy the three-term recurrence relation of the Krawtchouk polynomials, namely \cite[Eqn.~(1.10.3)]{KoekoekSwarttouw1996}, since both sides are rational functions of all parameters involved.
The recurrence relation will be more clear if we highlight the role of the parameters $\gamma, M$, and define first $G_{\gamma,M}(z) := F^{\text{beta}}(z-M+1)$; the relation is then
\begin{equation}\label{eq:G_relation}
G_{\gamma-1,M+1}(z) = \big( x - M(1-c) - c(\gamma-1) \big)G_{\gamma,M}(z) - M\gamma c(1-c)G_{\gamma+1,M-1}(z).
\end{equation}
Note that the parameter $\gamma$ is different for all three functions above.
Then define $H_M(z) := G_{\gamma-M,M}(z)$, so that after setting $\gamma\mapsto\gamma-M$ in equation \eqref{eq:G_relation}, we can rewrite it as
\begin{equation}\label{eq:H_relation}
H_{M+1}(z) = \big( z-M(1-c)-c(\gamma-M-1) \big)H_M(z) - M(\gamma-M) c(1-c)H_{M-1}(z).
\end{equation}
We claim that $H_M(z)$ is the characteristic polynomial of the $M\times M$ Jacobi matrix $J_{\la',w'}$ with parameters $\la_i'=(i-1)(1-c)+c(\gamma-i)$, for $1\le i\le M$, and $w_i'=\sqrt{i(\gamma-i)c(1-c)}$, for $1\le i\le M-1$.
Indeed, this follows from the observation that $\det\big[z\cdot I_M - J_{\widetilde{\la},\widetilde{w}}\big]$ can be calculated by expanding the determinant along the final row, resulting exactly in the recurrence relation \eqref{eq:H_relation}; since $H_M(z)$ is monic, it must coincide with the determinant.
Then after replacing $\gamma$ by $\gamma+M$, we obtain that $G_{\gamma,M}(z)$ is the characteristic polynomial of the $M\times M$ Jacobi matrix $J_{\la'',w''}$ with parameters $\la_i''=(i-1)(1-c)+c(\gamma+M-i)$, $w_i''=\sqrt{i(\gamma+M-i)c(1-c)}$.
It follows that $F^{\text{beta}}(z)=G_{\gamma,M}(z+M-1)$ is the characteristic polynomial of the $M\times M$ Jacobi matrix with parameters $\la_i=\la_i''-M+1$, $w_i=w_i''$, which is exactly $J^{(\text{beta})}$; this ends the proof.

\smallskip

\emph{Alpha case.}
Define the functions
\begin{equation*}
v_i(z) := \sqrt{\frac{\Gamma(\gamma+i)\Gamma(\eta+i)c^i}{\Gamma(\gamma)\Gamma(\eta)}}\cdot\frac{(1-c)^{\frac{z}{1-c}-\gamma}}{\Gamma\left( \frac{z}{1-c} + i \right)}\,{}_2F_1\left(\frac{z}{1-c}-\gamma,\,\eta+i;\,\frac{z}{1-c}+i;\,c\right),
\end{equation*}
for all $i\in\Z_{\ge 0}$.
Note that 
\[
v_0(z) = \frac{(1-c)^{\frac{z}{1-c}-\gamma}}{\Gamma\left( \frac{z}{1-c} \right)}\,{}_2F_1\left(\frac{z}{1-c}-\gamma,\,\eta;\,\frac{z}{1-c};\,c\right),
\]
so that the zeroes of $v_0(z)$ divided by $1-c$ coincide with the zeroes of $F^{\text{alpha}}(z)$.

From Lemma \ref{lem:contiguous}, we deduce the recurrence relations
\begin{equation}\label{eq:JacobiEigenvalue}
w_{i-1}v_{i-1}(z) + (\la_i-z)v_i(z) + w_iv_{i+1}(z) = 0,
\end{equation}
valid for all $i\in\Z$, where
\begin{equation}\label{eq:JacobiParameters2}
w_i := \sqrt{c(\gamma+i)(\eta+i)}, \quad \la_i := -i(1+c) - c(\gamma+\eta)+1,
\end{equation}
though the integer $i$ has to be large enough, so that the product $(\gamma+i)(\eta+i)$ that goes inside the square root of equation \eqref{eq:JacobiParameters2} is nonnegative; this is true, in particular, whenever $i\ge 0$.
It follows from the relations \eqref{eq:JacobiEigenvalue} that if $z\in\C$ is a zero of $v_0(z)$, then $z$ is an eigenvalue of the Jacobi operator $J_{\la,w}$, with parameters~\eqref{eq:JacobiParameters2}, and with eigenvector $(v_i(z))_{i\ge 1}$.
We need to verify that this vector belongs to $\ell^2(\Z_{\ge 1})$, but this follows from the estimate
\begin{equation}\label{eq:stirling}
v_i(z) = O\left(c^{\frac{i}{2}}i^{\frac{\gamma+\eta}{2}-\frac{z}{1-c}}\right), \quad \text{ as } i\to\infty,
\end{equation}
which shows that $v_i(z)$ is exponentially small in $i$. The estimate itself is a result of the Stirling's formula and a well-known asymptotic expansion of the hypergeometric function ${}_2F_1(\alpha,\beta+x;\delta+x;z)$, as $x\to\infty$ (see e.g.~\cite[Eqn.~(15)]{Temme2003}).

The relations \eqref{eq:JacobiEigenvalue} equally show that if $z\in\C$ is a zero of $v_0(z)$, then $z/(1-c)$ is an eigenvalue of $\frac{1}{1-c}J_{\la,w}=J^{(\text{alpha})}$ with eigenvector $(v_i(z))_{i\ge 1}$. Hence, the zeroes of $F^{(\text{alpha})}(z)$ are eigenvalues of $J^{(\text{alpha})}$.

On the one hand, we want to prove that if $z$ is an eigenvalue of $J_{\la, w}$ with parameters~\eqref{eq:JacobiParameters2} (equivalently, if $z/(1-c)$ is an eigenvalue of $J^{(\text{alpha})}$), then $z$ a zero of $v_0(z)$.
Note that equations \eqref{eq:JacobiEigenvalue} hold true, for any $i\ge 2$, regardless of whether $z$ is a root of $v_0(z)$ or not.
Let us first assume that the eigenvector $(v_i'(z))_{i\ge 1}$ is not a multiple of $(v_i(z))_{i\ge 1}$.
Then, in particular, equations \eqref{eq:JacobiEigenvalue} should also be true with $v_i'(z)$ instead of $v_i(z)$, for all $i\ge 2$.
The Wrońskian of these two independent solutions is then a nonzero constant:
\[
w_i(v_i(z)v'_{i+1}(z) - v_{i+1}(z)v'_{i}(z)) = \text{const} \neq 0.
\]
But from the equation \eqref{eq:JacobiParameters2} and the estimate \eqref{eq:stirling}, it follows that both $w_i v_i(z)$ and $w_i v_{i+1}(z)$ tend to zero, as $i$ tends to infinity.
As a result, the sequence $(v'_i(z))_{i\ge 1}$ has to be unbounded, and hence, does not belong to $\ell^2(\Z_{\ge 1})$.
We conclude that, for any eigenvalue $z$ of $J_{\la,w}$, the corresponding eigenvector must be $(v_i(z))_{i\ge 1}$.
By looking at the first term of the eigenrelation $J_{\la,w}(v_i(z))_{i\ge 1} = (zv_i(z))_{i\ge 1}$, we have that $\la_1v_1(z)+w_1v_2(z)=zv_1(z)$.
Together with the equation \eqref{eq:JacobiEigenvalue} for $i=1$, we deduce that $w_0v_0(z)=0$, and therefore $v_0(z)=0$, as desired.
This ends the proof in the alpha case.
\end{proof}

The following technical lemma is needed later; we give a sketch of proof in the appendix.

\begin{lemma}\label{lem:finite_order}
Let $\gamma,\eta > 0$, $c\in (0,1)$.
The entire functions $F^{\text{alpha}}(z)$ and $F^{\text{planch}}(z)$, depending on the previous parameters and defined in Theorem \ref{thm:char_fns}, are of finite order.
\end{lemma}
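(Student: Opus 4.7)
Plan: The goal is to show that both $F^{\text{alpha}}$ and $F^{\text{planch}}$ have order at most one as entire functions of $z$. My approach combines an Euler-type integral representation, valid on a right half-plane, with a direct series estimate on the complementary region, leveraging the classical fact that $1/\Gamma$ is entire of order one.

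Step 1 (integral representation on a half-plane). For $\Re(z)>\gamma$, Euler's integral for ${}_1F_1$ yields
\[
F^{\text{planch}}(z) = \frac{1}{\Gamma(\gamma)\,\Gamma(z-\gamma)}\int_0^1 e^{-\eta t}\, t^{\gamma-1}(1-t)^{z-\gamma-1}\, dt,
\]
and Euler's integral for ${}_2F_1$, valid for $\Re(z)>\eta$, gives
\[
F^{\text{alpha}}(z) = \frac{(1-c)^{\gamma}}{\Gamma(\eta)\,\Gamma(z-\eta)} \int_0^1 t^{z-\eta-1}(1-t)^{\eta-1}(1-ct)^{-\gamma}\,dt.
\]
On the sub-regions $\Re(z)\geq\gamma+1$ and $\Re(z)\geq\eta+1$ (using $c<1$), the integrands are dominated in modulus by $z$-independent integrable functions, so the integrals are $O(1)$. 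This gives $|F^{\text{planch}}(z)|\leq C/|\Gamma(z-\gamma)|$ and $|F^{\text{alpha}}(z)|\leq C'/|\Gamma(z-\eta)|$ on those half-planes, and since $1/\Gamma$ is of order one, both functions are of order at most one there.

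Step 2 (the complementary region via series). In the remaining region I would use the entire series
\[
F^{\text{planch}}(z)=\sum_{n\geq 0}\frac{\gamma^{\uparrow n}(-\eta)^n}{n!\,\Gamma(z+n)},\qquad F^{\text{alpha}}(z)=(1-c)^{\gamma}\sum_{n\geq 0}\frac{\gamma^{\uparrow n}(z-\eta)^{\uparrow n}c^n}{n!\,\Gamma(z+n)},
\]
each a uniformly convergent sum of entire functions. For $|z|=R$ large, Stirling's asymptotic $\Gamma(z+n)\sim n^z\,\Gamma(n)$ as $n\to\infty$, together with $\Gamma(n)/n!=1/n$, shows that in the $\alpha$-case the $n$-th term is bounded, for $n\gg R$, by a constant multiple of $c^n n^{\gamma-\eta-1}/|\Gamma(z-\eta)|$ (geometrically summable), while in the Plancherel case it decays super-exponentially through the $1/n!$ factor. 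For the central range $n=O(R)$, the classical bound $|1/\Gamma(w)|\leq\exp(O(|w|\log|w|))$, derived from Stirling together with the reflection formula $\Gamma(w)\Gamma(1-w)=\pi/\sin(\pi w)$, produces term-by-term bounds of size $\exp(O(R\log R))$, and the $O(R)$ such terms sum to the same order.

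Conclusion. Combining both regions, $\log|F(z)|=O(|z|\log|z|)$ as $|z|\to\infty$ for both $F^{\text{planch}}$ and $F^{\text{alpha}}$, so each is of order at most one (in fact, exactly one, consistent with the density of zeros from \cref{lem:roots_lemma}). I expect the main obstacle to lie in the central range of $n$ where $|z+n|$ may be comparable to $|z|$ yet $\Gamma(z+n)$ very small (because $z+n$ lies near a non-positive integer pole of $\Gamma$); this is exactly where the reflection formula is indispensable, trading the small value of $\Gamma$ for a Stirling estimate in the opposite direction. An alternative, potentially cleaner, route would be to apply Phragmén--Lindelöf to $F^{\text{planch}}(z)\Gamma(z-\gamma)$ and $F^{\text{alpha}}(z)\Gamma(z-\eta)$, combining the half-plane bounds from Step 1 with modest estimates on a single vertical line obtained from the series.
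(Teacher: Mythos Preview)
Your plan is sound and takes a genuinely different route from the paper. The paper never uses the Euler integral; it bounds the hypergeometric series directly for all $z$ by giving an explicit lower bound on $|z^{\uparrow n}|$ (splitting the product at the unique index $m$ with $m-1>|z|>m-2$, and treating separately the case where $z$ lies within $\tfrac{1}{10}$ of a negative integer, absorbing the bad factor into $1/\Gamma(z)$ via Cauchy's formula), then multiplies by the crude bound $|1/\Gamma(z)|\le\exp(C|z|^2)$ and settles for order at most $3$. Your approach buys both a cleaner right half-plane argument and a sharper conclusion (order $\le 1$, which is optimal), at the cost of a somewhat more delicate Step~2: the asymptotic $\Gamma(z+n)\sim n^z\Gamma(n)$ must be made uniform for $|z|\le R$, $n>2R$ (a routine Stirling exercise, but one you should actually carry out), and in the $\alpha$-case central range you should explicitly record that $|(z-\eta)^{\uparrow n}|\le(|z|+\eta+n)^n=\exp(O(R\log R))$ for $n\le 2R$ before summing. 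Neither is a real obstacle. The Phragm\'en--Lindel\"of variant you mention at the end is indeed the cleanest way to glue the two regions and would let you avoid the central-range bookkeeping altogether.
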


\section{The inverse of the formal Laplace transform}
\label{sec:analytic}

The formal Laplace transform $\widetilde{\Lap}$ admits the unique inverse map $\widetilde{\Lap}^{-1}$ that takes as input a power series on $z^{-1}$ without constant coefficient and acts as follows:
\begin{equation}\label{eq:definition_laplace}
\widetilde{\Lap}^{-1}\bigg\{ \sum_{n=0}^\infty{a_nz^{-n-1}} \bigg\} = \sum_{n=0}^\infty{ \frac{a_n}{n!}t^n }.
\end{equation}

\begin{lemma}\label{lem:lap_derivative}
For any power series $f(z)$ on $z^{-1}$ without constant coefficient, we have
\begin{equation*}
\widetilde{\Lap}^{-1}\Big\{ f(z) \Big\} = -\frac{1}{t}\cdot\widetilde{\Lap}^{-1}\Big\{ f'(z) \Big\}.
\end{equation*}
\end{lemma}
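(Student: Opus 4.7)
The plan is to prove this by a direct computation from the definition \eqref{eq:definition_laplace}, since both sides of the claimed identity are $\R$-linear in $f$ and well-defined on the space of power series in $z^{-1}$ without constant term. So I would write $f(z) = \sum_{n \ge 0} a_n z^{-n-1}$, differentiate term-by-term to obtain
\[
f'(z) = \sum_{n \ge 0} -(n+1)\,a_n\, z^{-n-2} = \sum_{m \ge 1} -m\,a_{m-1}\, z^{-m-1},
\]
where I reindexed by setting $m = n+1$. Note that $f'(z)$ is again a power series in $z^{-1}$ without constant term (in fact the $z^{-1}$ coefficient also vanishes), so the operator $\widetilde{\Lap}^{-1}$ applies to it.

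Next I would apply the definition of $\widetilde{\Lap}^{-1}$ to both sides. On the right, using the expansion of $f'(z)$ above,
\[
\widetilde{\Lap}^{-1}\{f'(z)\} = \sum_{m \ge 1} \frac{-m\,a_{m-1}}{m!}\, t^m = -\sum_{m \ge 1} \frac{a_{m-1}}{(m-1)!}\, t^m = -t\sum_{n \ge 0} \frac{a_n}{n!}\, t^n = -t\cdot\widetilde{\Lap}^{-1}\{f(z)\}.
\]
Dividing by $-t$ (a legitimate operation on the space of formal power series starting at degree $\ge 1$ in $t$) yields the desired identity.

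There is no real obstacle here beyond bookkeeping; the only subtlety to flag is that $\widetilde{\Lap}^{-1}\{f'(z)\}$ has no constant term in $t$, which is precisely what is needed for the division by $t$ to make sense as a formal power series operation.
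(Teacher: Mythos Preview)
Your proof is correct and is exactly the direct computation the paper has in mind; the paper's own proof consists of the single sentence ``Immediately follows from \cref{eq:definition_laplace},'' and your write-up is precisely that unpacked.
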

\begin{proof}
Immediately follows from \cref{eq:definition_laplace}.
\end{proof}

\begin{theorem}\label{thm:inverse_laplace}
Assume that $F(z) = \frac{1}{\Gamma(z)}\cdot G(z)$ is an entire function of finite order whose nonzero roots can be labeled $\ell_1>\ell_2>\dots$, and satisfy $\ell_{k+1}=-k+O\left(k^{-N}\right)$, for any $N\ge 1$, as $k\to\infty$.
Moreover, assume that $G(z)$ admits an asymptotic expansion $G(z)\sim 1+\sum_{n=0}^\infty{a_n z^{-n-1}}$, as $z\to\infty$ along the $x$-axis. Then
\[
\widetilde{\Lap}^{-1}\big\{\ln{G(z)}\big\} = \sum_{n=0}^\infty \frac{t^n}{(n+1)!} \sum_{k=0}^\infty \Big( (-k)^{n+1} - (\ell_{k+1})^{n+1} \Big).
\]
The left-hand side of the above equality is a notation that really stands for the inverse of the formal Laplace transform~\eqref{eq:definition_laplace} applied to $\ln(1+x)=\sum_{n=1}^\infty{(-1)^{n-1}\frac{x^n}{n}}$, where $x=\sum_{n=0}^\infty{a_n z^{-n-1}}$.
\end{theorem}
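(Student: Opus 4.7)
The plan is to first establish the meromorphic product identity
\[
G(z) \;=\; \prod_{k=0}^{\infty}\frac{z-\ell_{k+1}}{z+k},
\]
as an equality of meromorphic functions on $\C$, and then to expand its logarithm asymptotically in $z^{-1}$ and apply $\widetilde{\Lap}^{-1}$ termwise.

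For the product formula, write $\ell_{k+1}=-k+\epsilon_k$; the hypothesis $\ell_{k+1}=-k+O(k^{-N})$ for every $N$ says $\epsilon_k$ decays faster than any polynomial, so each factor equals $1-\epsilon_k/(z+k)$ and the product converges absolutely, defining a meromorphic function $\Phi(z)$. Both $\Phi$ and $G=F\cdot\Gamma$ have simple zeros exactly at $\{\ell_{k+1}\}_{k\ge 0}$ and simple poles exactly at $\{-k\}_{k\ge 0}$, so $G/\Phi$ extends to an entire nonvanishing function $e^{Q(z)}$. Since $F$ is of finite order by hypothesis and $\Gamma$ is of order one, $G$ has finite order as a meromorphic function; a parallel analysis of $\Phi$, grouping the genus-one Weierstrass factors $(1-z/\ell_{k+1})e^{z/\ell_{k+1}}$ with $(1+z/k)e^{-z/k}$ so that the exponential correctors cancel up to a rapidly convergent series $\sum_k z\epsilon_k/(k\ell_{k+1})$, shows $\Phi$ also has finite order. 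Hence $Q$ must be a polynomial. The asymptotic hypothesis $G(z)\to 1$ on the positive real axis, together with $\Phi(z)\to 1$ there (by a standard dominated-convergence argument, since each partial product tends to $1$ and the tails are uniformly small in $z$), forces $e^{Q(z)}\to 1$ as $z\to+\infty$ on $\R$, whence $Q\equiv 0$ and $G=\Phi$.

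For the asymptotic expansion, take logarithms of the product and expand each term in powers of $z^{-1}$:
\[
\ln\!\left(\frac{z-\ell_{k+1}}{z+k}\right) = \ln(1-\ell_{k+1}/z)-\ln(1+k/z) = \sum_{m=1}^{\infty}\frac{(-k)^m-\ell_{k+1}^{m}}{m\,z^{m}},
\]
which yields the formal expansion
\[
\ln G(z) \;\sim\; \sum_{m=1}^{\infty}\frac{1}{m\,z^{m}}\sum_{k=0}^{\infty}\bigl((-k)^m-\ell_{k+1}^{m}\bigr).
\]
The inner sum converges absolutely because $(-k)^m-\ell_{k+1}^{m}=-m(-k)^{m-1}\epsilon_k+O(k^{m-2}\epsilon_k^2)$ and $\epsilon_k$ decays faster than any polynomial. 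To upgrade this to a genuine asymptotic expansion, I would split at $K=K(z)$ chosen as a small power of $z$ (e.g.\ $K=\lfloor z^{1/(2M+2)}\rfloor$ for the $M$-th partial sum): the head expands termwise with controlled remainder $O(K^{M+2}z^{-M-1})=o(z^{-M})$, while the tail is bounded by a constant times $\sum_{k>K}|\epsilon_k|/(z+k)=o(z^{-M})$.

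Finally, applying $\widetilde{\Lap}^{-1}$ termwise using \eqref{eq:definition_laplace}, which sends $z^{-m}$ to $t^{m-1}/(m-1)!$, produces
\[
\widetilde{\Lap}^{-1}\{\ln G(z)\}=\sum_{m=1}^{\infty}\frac{t^{m-1}}{m!}\sum_{k=0}^{\infty}\bigl((-k)^m-\ell_{k+1}^{m}\bigr),
\]
which on setting $n=m-1$ becomes exactly the claimed right-hand side. The main obstacle is the product formula step: while the matching of zeros and poles is immediate, verifying that $\Phi$ has finite order (so that $Q$ is a polynomial) and then pinning down $Q\equiv 0$ from the common asymptotic value $1$ at infinity both require the strong decay hypothesis on $\ell_{k+1}+k$ in an essential way --- without super-polynomial decay, even the absolute convergence of the product defining $\Phi$ would fail.
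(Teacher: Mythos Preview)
Your proposal is correct and follows essentially the same strategy as the paper, but the organization differs in a way worth noting. The paper never writes the product identity $G=\Phi$ explicitly; instead it applies Hadamard's factorization directly to $F$ (which is \emph{given} to be entire of finite order), takes the logarithmic derivative, and adds the standard partial-fraction expansion of the digamma function $(\ln\Gamma)'$. This yields
\[
(\ln G(z))' \;=\; \frac{m}{z} + R(z) + \sum_{k\ge 0}\Bigl(\frac{1}{z-\ell_{k+1}}-\frac{1}{z+k}\Bigr),
\]
with $R$ a polynomial and $m$ the multiplicity of $0$ as a root of $F$. The asymptotic hypothesis then forces $m+zR(z)\equiv 0$, and the identity for $(\ln G)'$ is read off. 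Finally, Lemma~\ref{lem:lap_derivative} (namely $\widetilde{\Lap}^{-1}\{f\}=-t^{-1}\widetilde{\Lap}^{-1}\{f'\}$) converts this into the claimed formula for $\widetilde{\Lap}^{-1}\{\ln G\}$.

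The advantage of the paper's route is that Hadamard applies immediately to $F$, whereas your approach requires a separate argument that $\Phi$ (a meromorphic function you constructed) has finite order---which you sketch but do not carry out. Working with the derivative also sidesteps the constant of integration and the need to show $\Phi(z)\to 1$; the paper only needs $z(\ln G(z))'\to 0$. Conversely, your direct product formula $G=\Phi$ is a cleaner statement once established, and your splitting argument for upgrading the formal expansion of $\ln\Phi$ to a genuine asymptotic expansion is more explicit than the paper's one-line ``expanding the RHS around $z=\infty$''. One small point you glossed over: $F$ may a priori have a zero of order $m\ge 1$ at the origin, in which case $G/\Phi$ is $z^m e^{Q(z)}$ rather than $e^{Q(z)}$; your asymptotic step still forces $m=0$, but this should be said.
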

\begin{proof}
Since $\ell_{k+1}=-k+O\left(k^{-1}\right)$, as $k\to\infty$, then
$\sum_{k=0}^\infty{\frac{1}{|\ell_{k+1}|}}=\infty$, but $\sum_{k=0}^\infty{\frac{1}{|\ell_{k+1}|^2}}<\infty$.
Therefore, by Hadamard's factorization theorem,
\[
F(z) = z^m e^{Q(z)} \prod_{k=0}^\infty\left( 1 - \frac{z}{\ell_{k+1}} \right)e^{z/\ell_{k+1}},
\]
where the nonnegative integer $m$ is the multiplicity of $z=0$ as a zero of $F(z)$, and $Q(z)$ is a polynomial.
After taking the logarithmic derivative, we obtain
\begin{equation}\label{eq:expansion_1_planch}
\Big( \ln{F(z)} \Big)' = \frac{m}{z} + Q'(z) + \sum_{k=0}^\infty \left\{ \frac{1}{z-\ell_{k+1}} + \frac{1}{\ell_{k+1}} \right\}.
\end{equation}

On the other hand, the expansion of the digamma function $\left( \ln{\Gamma(z)} \right)'=\Gamma'(z)/\Gamma(z)$ is
\begin{equation}\label{eq:expansion_2_planch}
\left( \ln{\Gamma(z)} \right)' = -\gamma + \sum_{k=0}^\infty\left\{ \frac{1}{k+1} - \frac{1}{z+k} \right\},
\end{equation}
where $\gamma>0$ is the Euler-Mascheroni constant; this expansion is valid for all $z$ outside the set of nonpositive integers.

By adding \cref{eq:expansion_1_planch,eq:expansion_2_planch}, we get
\begin{equation}\label{eq:expansion_3_planch}
\Big( \ln{F(z)} \Big)' + \left( \ln{\Gamma(z)} \right)' = \frac{m}{z} + Q'(z) - \gamma + \sum_{k=0}^\infty\left\{  \frac{1}{\ell_{k+1}} + \frac{1}{k+1} + \frac{1}{z-\ell_{k+1}} - \frac{1}{z+k} \right\}.
\end{equation}
Note that $\sum_{k=0}^\infty\left(\frac{1}{\ell_{k+1}} + \frac{1}{k+1}\right)$ converges; therefore, the sum on the right-hand side of \eqref{eq:expansion_3_planch} can be broken down into two sums.
On the other hand, since $G(z)=F(z)\cdot\Gamma(z)$, the left-hand side of \eqref{eq:expansion_3_planch} is equal to $(\ln{G(z)})'$. The equality will be true for values of $z$ that are not the roots $\ell_{k+1}$ of $F(z)$ and are not in the set of nonpositive integers. Thus,
\begin{equation}\label{eq:expansion_4_planch}
\Big( \ln{G(z)} \Big)' = \frac{m}{z} + R(z) + \sum_{k=0}^\infty\left\{  \frac{1}{z-\ell_{k+1}} - \frac{1}{z+k} \right\},
\end{equation}
where $R(z)$ is the polynomial
\[
R(z) := Q'(z) - \gamma + \sum_{k=0}^\infty\left\{  \frac{1}{\ell_{k+1}} + \frac{1}{k+1} \right\}.
\]
By the assumption of $G(z)$, we have $\lim_{z\to+\infty}{z\cdot\big(\ln{G(z)}\big)'}=0$.
We claim that the same is true of $z$ times the summation in the RHS of \cref{eq:expansion_4_planch}.
Indeed, if $z>0$ and $k$ is large enough so that $\ell_{k+1}<0$ and $|\ell_{k+1}+k|<\text{const}_1/k$, for some positive constant, then
\[
\left| z\cdot\left(\frac{1}{z-\ell_{k+1}} - \frac{1}{z+k} \right) \right| = \frac{|z|\cdot|\ell_{k+1}+k|}{|z-\ell_{k+1}|\cdot|z+k|} \le \frac{z\cdot\text{const}_1/k}{z\cdot (z+k)}\le\frac{\text{const}_1}{k\cdot(2\sqrt{zk})} = z^{-\frac{1}{2}}\cdot\frac{\text{const}_2}{k^{3/2}}.
\]
It follows that $m+zR(z)$ converges to zero, as $z\to\infty$, and therefore it is zero.
As a result,
\begin{equation}\label{eq:expansion_5_planch}
\Big( \ln{G(z)} \Big)' = \sum_{k=0}^\infty\left\{  \frac{1}{z-\ell_{k+1}} - \frac{1}{z+k} \right\}.
\end{equation}
Since $\ell_{k+1}=-k+O\left(k^{-N}\right)$, for all $N\ge 1$, then the sums $\sum_{k=0}^\infty\Big( \big(\ell_{k+1}\big)^n - \big(-k\big)^n \Big)$ converge, for all $n\ge 1$. By expanding the RHS of \cref{eq:expansion_5_planch} around $z=\infty$, we obtain
\begin{equation}\label{eq:expansion_6_planch}
\Big( \ln{G(z)} \Big)' = \sum_{n=1}^\infty z^{-n-1} \sum_{k=0}^\infty\Big( \big(\ell_{k+1}\big)^n - \big(-k\big)^n \Big).
\end{equation}
Finally, by \cref{lem:lap_derivative} and \cref{eq:expansion_6_planch}, we deduce that
\[
\widetilde{\Lap}^{-1}\Big\{\ln{G(z)}\Big\}
= -\frac{1}{t}\cdot\widetilde{\Lap}^{-1}\bigg\{ \Big( \ln{G(z)} \Big)' \bigg\}
= \sum_{n=0}^\infty \frac{t^n}{(n+1)!} \sum_{k=0}^\infty \left( (-k)^{n+1} - \big(\ell_{k+1}\big)^{n+1} \right),
\]
as desired.
\end{proof}

\begin{corollary}\label{cor:formal_inverse_transform}
Let $\gamma,\eta>0$ be arbitrary.
Let $J^{(\text{planch})}$ be the Jacobi matrix that depends on these parameters, defined by \cref{eq:Jacobiplanch}, and let $\ell_1^{(\text{planch})}>\ell_2^{(\text{planch})}>\dots$ be its eigenvalues.
Then
\[
\widetilde{\Lap}^{-1}\bigg\{ \ln\Big({}_1F_1\left(\gamma;z;-\eta\right)\Big) \bigg\} = \sum_{n=0}^\infty \frac{t^n}{(n+1)!} \sum_{k=0}^\infty \left( (-k)^{n+1} - \big(\ell^{(\text{planch})}_{k+1}\big)^{n+1} \right).
\]
\end{corollary}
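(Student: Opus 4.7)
The plan is to deduce the corollary as a direct specialization of \cref{thm:inverse_laplace} applied to
\[
F(z) := F^{\text{planch}}(z) = \frac{1}{\Gamma(z)}\cdot{}_1F_1(\gamma;z;-\eta), \qquad G(z) := {}_1F_1(\gamma;z;-\eta).
\]
Once the hypotheses of \cref{thm:inverse_laplace} are verified, its conclusion is literally the desired formula, since by \cref{thm:char_fns} the roots of $F^{\text{planch}}$ coincide with the spectrum of $J^{(\text{planch})}$, and the proof of \cref{thm:inverse_laplace} automatically forces the multiplicity of $0$ as a root to vanish, so the nonzero roots $\ell_1 > \ell_2 > \cdots$ of $F^{\text{planch}}$ are indexed in the same way as the eigenvalues $\ell^{(\text{planch})}_1 > \ell^{(\text{planch})}_2 > \cdots$ of $J^{(\text{planch})}$.

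First I would record that $F^{\text{planch}}$ is entire (by \cref{thm:char_fns}) and of finite order (by \cref{lem:finite_order}). Next I would identify its root structure: \cref{thm:char_fns} gives that the roots coincide with the eigenvalues of $J^{(\text{planch})}$, while \cref{lem:roots_lemma} furnishes the precise asymptotic $\ell_n^{(\text{planch})} = 1 - n + O(n^{-M})$ for every $M > 0$. Setting $\ell_{k+1} := \ell^{(\text{planch})}_{k+1}$ thus produces $\ell_{k+1} = -k + O(k^{-M})$, which is exactly the growth condition imposed in the hypotheses of \cref{thm:inverse_laplace}.

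The remaining point is to show that $G(z) = {}_1F_1(\gamma;z;-\eta)$ admits an asymptotic expansion $G(z) \sim 1 + \sum_{n \geq 0} a_n z^{-n-1}$ as $z \to \infty$ along the positive real axis. This follows from the standard series representation
\[
{}_1F_1(\gamma;z;-\eta) = \sum_{n=0}^\infty \frac{\gamma^{\uparrow n}(-\eta)^n}{n!}\cdot\frac{1}{z^{\uparrow n}},
\]
by expanding each factor $\frac{1}{z^{\uparrow n}} = z^{-n}\bigl(1 + O(z^{-1})\bigr)$ as a power series in $z^{-1}$ and reorganizing by orders: the $n = 0$ term produces the constant $1$, and the terms with $n \geq 1$ yield a genuine asymptotic expansion in $z^{-1}$ without constant term. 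The rigorous justification comes from truncating the series at a large order $n = N$ and controlling the tail by the classical large-$b$ asymptotics of ${}_1F_1(a;b;x)$, which are uniform on the positive real axis.

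With these three inputs, \cref{thm:inverse_laplace} immediately gives the identity of the corollary. The main technical step, and essentially the only one not purely cited, is the verification of the asymptotic expansion of $G(z)$ along the positive real axis; the rest reduces to assembling previously proved results.
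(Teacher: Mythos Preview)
Your proposal is correct and follows essentially the same approach as the paper: apply \cref{thm:inverse_laplace} with $G(z)={}_1F_1(\gamma;z;-\eta)$, citing \cref{lem:finite_order} for finite order, \cref{thm:char_fns} and \cref{lem:roots_lemma} for the root structure and asymptotics, and verifying the asymptotic expansion of $G$ from its series representation. Your added remark about the multiplicity of $0$ and your slightly more explicit justification of the asymptotic expansion go a bit beyond what the paper writes, but the argument is the same.
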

\begin{proof}
We will apply \cref{thm:inverse_laplace} to $G(z)={}_1F_1\left(\gamma;z;-\eta\right)$.
We need to check the conditions of that theorem.
The fact that $F(z)=\frac{1}{\Gamma(z)}\cdot G(z)$ is entire and of finite order is proved in \cref{lem:finite_order}, while the fact that the zeroes of $F(z)$ are simple and satisfy the desired asymptotic relation follows from \cref{lem:roots_lemma,thm:char_fns}.
Finally, it is easy to verify from the power series definition of ${}_1F_1(\gamma;z;-\eta)$ that this function has a complete asymptotic expansion, as $z\to\infty$, beginning with ${}_1F_1(\gamma;z;-\eta)=1+O(z^{-1})$, as needed.
\end{proof}

\begin{theorem}\label{thm:inverse_laplace_2}
Assume that $F(z) = \frac{1}{\Gamma(z)}\cdot G(z)$ is entire, of finite order, whose nonzero roots $\ell_1>\ell_2>\dots$ satisfy $\ell_k=-k+O\left(k^{-N}\right)$, for any $N\ge 1$, as $k\to\infty$.
Assume also that $G(z)$ admits an asymptotic expansion $G(z)\sim 1+\sum_{n=0}^\infty{a_n z^{-n-1}}$, as $z\to\infty$ along the $x$-axis. Then
\[
\Lapformal^{-1}\big\{\ln{G(z)}\big\} = \sum_{n=0}^\infty \frac{t^n}{(n+1)!} \sum_{k=1}^\infty{\Big( (-k)^{n+1} - (\ell_k)^{n+1} \Big)}.
\]
\end{theorem}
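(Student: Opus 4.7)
The plan is to transcribe the proof of \cref{thm:inverse_laplace}, adjusting for the fact that here the nonzero roots are indexed starting at $k=1$ with asymptotics $\ell_k = -k + O(k^{-N})$, so that each $\ell_k$ pairs naturally with $-k$ (rather than with $-(k-1)$ as in the previous theorem). The only new input needed is a reindexing of the standard digamma expansion.

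First, since $F$ is entire of finite order with $\sum |\ell_k|^{-1} = \infty$ but $\sum |\ell_k|^{-2} < \infty$, Hadamard's factorization theorem gives
\[
F(z) = z^m e^{Q(z)} \prod_{k=1}^\infty \left( 1 - \frac{z}{\ell_k} \right) e^{z/\ell_k},
\]
whose logarithmic derivative is $(\ln F(z))' = m/z + Q'(z) + \sum_{k=1}^\infty\{1/(z-\ell_k) + 1/\ell_k\}$. Combining this with the digamma expansion in the form $(\ln \Gamma(z))' = -\gamma - 1/z + \sum_{k=1}^\infty \{1/k - 1/(z+k)\}$ (which follows from the standard $\psi(z+1)$-expansion together with $\psi(z+1) = \psi(z) + 1/z$) and pairing off terms index by index produces
\[
(\ln G(z))' = \frac{m-1}{z} + R(z) + \sum_{k=1}^\infty\left\{\frac{1}{z-\ell_k} - \frac{1}{z+k}\right\},
\]
where $R(z) := Q'(z) - \gamma + \sum_{k=1}^\infty(1/\ell_k + 1/k)$ is a polynomial, the constant tail converging absolutely because $1/\ell_k + 1/k = O(k^{-N-2})$ from the hypothesis $\ell_k + k = O(k^{-N})$.

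The hypothesis $G(z) \sim 1 + O(z^{-1})$ as $z\to +\infty$ forces $z(\ln G(z))'\to 0$. The same bound used in the proof of \cref{thm:inverse_laplace}, namely $|z/((z-\ell_k)(z+k))|\cdot|\ell_k + k| \le z^{-1/2}\cdot\mathrm{const}/k^{3/2}$, shows that $z$ times the series above also tends to $0$; hence $(m-1) + zR(z)\equiv 0$, forcing $R\equiv 0$ and $m=1$. We are left with
\[
(\ln G(z))' = \sum_{k=1}^\infty\left\{\frac{1}{z-\ell_k} - \frac{1}{z+k}\right\}.
\]
Expanding each summand as a geometric series in $z^{-1}$ (the $n=0$ term vanishes because $\ell_k^0 - (-k)^0 = 0$, and for $n\ge 1$ the coefficient $\ell_k^n - (-k)^n = O(k^{n-1-N})$ is absolutely summable for $N$ chosen large enough depending on $n$), then applying \cref{lem:lap_derivative} and the identity $\Lapformal^{-1}\{z^{-n-1}\} = t^n/n!$ followed by a shift of the summation index, yields the claimed formula. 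Since the proof mirrors that of \cref{thm:inverse_laplace} step by step, the only (mild) point of care is verifying that the reindexed digamma pairing still produces a convergent constant tail to absorb into $R$, which is automatic from the root asymptotics.
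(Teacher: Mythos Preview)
Your proof is correct and follows the paper's own argument essentially line by line: the same reindexed digamma expansion $-\gamma - 1/z + \sum_{k\ge 1}\{1/k - 1/(z+k)\}$, the same pairing of $\ell_k$ with $-k$, and the same vanishing argument for the polynomial $(m-1)+zR(z)$. The only cosmetic difference is that you spell out the consequence $m=1$, $R\equiv 0$ explicitly, whereas the paper simply records that the polynomial is identically zero.
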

\begin{proof}
The statement of the theorem is quite similar to that of \cref{thm:inverse_laplace}. In fact, the only difference in the assumptions of the theorem is that now we have the asymptotic relation $\ell_k=-k+O\big(k^{-N}\big)$ instead of $\ell_{k+1}=-k+O\big(k^{-N}\big)$.
The proof of the present theorem is similar; in fact, we will only point out the places in the proof where we need modifications.

As in the proof of \cref{thm:inverse_laplace}, we find
\[
\Big( \ln{F(z)} \Big)' = \frac{m}{z} + Q'(z) + \sum_{k=1}^\infty \left\{ \frac{1}{z-\ell_k} + \frac{1}{\ell_k} \right\},
\]
for some $m\in\Z_{\ge 0}$ and polynomial $Q(z)$.
Next, write the expansion of $\big(\ln{\Gamma(z)}\big)'$ as
\[
\big(\ln{\Gamma(z)}\big)' = -\gamma-\frac{1}{z} + \sum_{k=1}^\infty\left\{ \frac{1}{k} - \frac{1}{z+k} \right\}.
\]
By adding these equalities, we deduce
\[
\Big( \ln{G(z)} \Big)' = \frac{m-1}{z} + R(z) + \sum_{k=1}^\infty \left\{ \frac{1}{z-\ell_k} - \frac{1}{z+k} \right\},
\]
where $R(z):=Q'(z)-\gamma+\sum_{k=1}^\infty{\big\{\frac{1}{\ell_k}+\frac{1}{k}\big\}}$.
Using the asymptotic relation $\ell_k=-k+O\left(k^{-N}\right)$, for any $N\ge 1$, as $k\to\infty$, as in the proof of \cref{thm:inverse_laplace}, we verify $\lim_{z\to\infty}{(m-1+zR(z))}=0$.
This limit implies that the polynomial $m-1+zR(z)$ is identically zero, therefore
\[
\Big( \ln{G(z)} \Big)' = \sum_{k=1}^\infty \left\{ \frac{1}{z-\ell_k} - \frac{1}{z+k} \right\}.
\]
The proof then proceeds along the same lines as the proof of \cref{thm:inverse_laplace}.
\end{proof}

\begin{corollary}\label{cor:formal_inverse_transform_2}
Let $\gamma,\eta>0$, $c\in(0,1)$ be arbitrary.
Let $J^{(\text{alpha})}$ be the Jacobi matrix that depends on these parameters, defined by \cref{eq:Jacobialpha}, and let $\ell_1^{(\text{alpha})}>\ell_2^{(\text{alpha})}>\dots$ be the eigenvalues of $J^{(\text{alpha})}$.
If $\ell_k^{(\text{alpha})}=-k+O(k^{-N})$, for any $N>0$, as $k\to\infty$, then
\[
\begin{aligned}
\Lapformal^{-1}\bigg\{ \ln\Big( (1-c)^\gamma\cdot{}_2F_1\left(\gamma,z-\eta;z;c\right) \Big) \bigg\} &= \Lapformal^{-1}\bigg\{ \ln\Big( (1-c)^\eta\cdot{}_2F_1\left(\eta,z-\gamma;z;c\right) \Big) \bigg\}\\
&= \sum_{n=0}^\infty \frac{t^n}{(n+1)!} \sum_{k=1}^\infty \Big( (-k)^{n+1} - \big(\ell^{(\text{alpha})}_k\big)^{n+1} \Big).
\end{aligned}
\]
In particular, this last equality holds when $\gamma=\eta$.
\end{corollary}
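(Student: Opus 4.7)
The plan is to apply \cref{thm:inverse_laplace_2} to the entire function $F^{\text{alpha}}(z)$ from \cref{thm:char_fns}, written as a quotient $G(z)/\Gamma(z)$ for an appropriate $G$, and then to identify its nonzero roots with the eigenvalues $\ell_k^{(\text{alpha})}$. To begin, I would reconcile the two logarithmic arguments on the left-hand side using Pfaff's transformations: applying \cref{lem:pfaff} with $A=\gamma$, $B=\eta$, $C=c$ yields, as formal power series in $z^{-1}$,
\[
(1-c)^\gamma\cdot{}_2F_1(\gamma,z-\eta;z;c) \;=\; (1-c)^\eta\cdot{}_2F_1(\eta,z-\gamma;z;c) \;=\; {}_2F_1\bigl(\gamma,\eta;z;c/(c-1)\bigr) \;=:\; G(z),
\]
where in the second equality I use the symmetry of ${}_2F_1$ in its first two arguments. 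The first equality claimed in the corollary is then immediate, and by \cref{thm:char_fns} we have $F^{\text{alpha}}(z)=G(z)/\Gamma(z)$ (with $G$ interpreted via either of the convergent representations on the left).

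Next, I would verify the three hypotheses of \cref{thm:inverse_laplace_2} for the pair $(F^{\text{alpha}},G)$. First, the entireness of $F^{\text{alpha}}$ is part of \cref{thm:char_fns}, and finite order is precisely \cref{lem:finite_order}. Second, the nonzero roots of $F^{\text{alpha}}$, listed in decreasing order, coincide with the eigenvalues $\ell_1^{(\text{alpha})}>\ell_2^{(\text{alpha})}>\cdots$ of $J^{(\text{alpha})}$ by \cref{thm:char_fns}, and the required asymptotic $\ell_k^{(\text{alpha})}=-k+O(k^{-N})$ is exactly the standing hypothesis of the corollary; in the special case $\gamma=\eta$, \cref{lem:roots_lemma} provides this same asymptotic unconditionally, which yields the last assertion of the corollary. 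Third, the asymptotic expansion $G(z)\sim 1+\sum_{n\geq 0}a_n z^{-n-1}$ as $z\to\infty$ along the positive real axis is obtained by expanding $1/z^{\uparrow n}=z^{-n}\bigl(1+O(z^{-1})\bigr)$ termwise in the series for $G(z)$, the leading contribution coming from $n=0$ being $1$.

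The main delicacy is justifying the last hypothesis rigorously, since when $c\geq 1/2$ the Pfaff-transformed series for $G(z)$ need not converge and cannot be substituted naively. The cleanest remedy is to work with the original convergent representation $(1-c)^\gamma\cdot{}_2F_1(\gamma,z-\eta;z;c)$, which converges for $|c|<1$ and any $z$ outside the nonpositive integers: expanding each ratio $(z-\eta)^{\uparrow n}/z^{\uparrow n}=\prod_{i=0}^{n-1}\bigl(1-\eta/(z+i)\bigr)$ in powers of $z^{-1}$ and summing over $n$, the constant contribution is $(1-c)^\gamma\cdot(1-c)^{-\gamma}=1$, while each subsequent coefficient $a_n$ is a finite rational combination of Gauss hypergeometric-type sums in $\gamma,\eta,c$, giving a bona fide Poincaré asymptotic expansion along the positive real axis. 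With all three hypotheses in hand, \cref{thm:inverse_laplace_2} yields the asserted identity verbatim.
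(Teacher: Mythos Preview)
Your proposal is correct and follows essentially the same approach as the paper: invoke \cref{lem:pfaff} for the first equality, then apply \cref{thm:inverse_laplace_2} to $G(z)=(1-c)^\gamma\cdot{}_2F_1(\gamma,z-\eta;z;c)$ after verifying its hypotheses via \cref{thm:char_fns}, \cref{lem:finite_order}, and \cref{lem:roots_lemma}. The only difference is that the paper justifies the asymptotic expansion $G(z)\sim 1+O(z^{-1})$ by citing a known result on large-parameter asymptotics of ${}_2F_1$ (Temme), whereas you sketch a direct termwise expansion; both are acceptable, though a citation avoids the need to control tails uniformly.
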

\begin{proof}
The first equality is Pfaff's transformation, namely \cref{lem:pfaff}.
The second equality results from \cref{thm:inverse_laplace_2} applied to $G(z)=(1-c)^\gamma\cdot{}_2F_1\left(\gamma,z-\eta;z;c\right)$, so it only remains to verify the assumptions.
The function $F(z)=\frac{1}{\Gamma(z)}\cdot G(z)$ is entire and of finite order, due to \cref{lem:finite_order}. The zeroes of $F(z)$ are simple and satisfy the desired asymptotic relations because of \cref{lem:roots_lemma}, \cref{thm:char_fns}, and the assumption of the corollary. Finally, ${}_2F_1\left(\gamma,z-\eta;z;c\right)$ is known to have an asymptotic expansion, as $z\to\infty$, that begins with ${}_2F_1\left(\gamma,z-\eta;z;c\right)\sim (1-c)^{-\gamma}\cdot\big(1+O(z^{-1})\big)$, see e.g.~\cite{Temme2003}, already cited in the proof of \cref{thm:char_fns}.
Therefore, $(1-c)^\gamma\cdot{}_2F_1\left(\gamma,z-\eta;z;c\right)\sim 1+O(z^{-1})$, as required.
\end{proof}

\begin{theorem}\label{thm:inverse_laplace_beta}
Let $M\in\Z_{\ge 1}$, $c\in(0,1)$ be arbitrary.
Let $J^{(\text{beta})}$ be the symmetric tridiagonal $M\times M$ matrix that depends on these parameters and is defined by \cref{eq:Jacobibeta}.
The eigenvalues of $J^{(\text{beta})}$ or, equivalently, the roots of the monic degree $M$ polynomial $F^{\text{beta}}(z)$ are real; let us denote them $\ell_1^{(\text{beta})},\dots,\ell_M^{(\text{beta})}$.
Then
\begin{equation}\label{eq:inverse_fourier_beta}
\Lapformal^{-1}\big\{\ln{{}_2F_1\left(-M,\gamma; z; c\right)}\big\}
= \sum_{n=0}^\infty \frac{t^n}{(n+1)!}
\left\{ \sum_{k=1}^M{(1-k)^{n+1}} - 
\sum_{k=1}^M{\big(\ell_k^{(\text{beta})}\big)^{n+1}} \right\}.
\end{equation}
\end{theorem}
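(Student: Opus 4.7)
The plan is to exploit the fact that, unlike the Plancherel and alpha cases, the beta case concerns a \emph{finite} Jacobi matrix, so $F^{\text{beta}}(z)$ is already known to be a polynomial. By \cref{thm:char_fns}, we have the identity
\[
{}_2F_1(-M,\gamma;z;c) = \frac{\Gamma(z)}{\Gamma(z+M)}\cdot F^{\text{beta}}(z) = \frac{F^{\text{beta}}(z)}{z^{\uparrow M}},
\]
and since $J^{(\text{beta})}$ is a real symmetric tridiagonal matrix, its eigenvalues $\ell_1^{(\text{beta})},\dots,\ell_M^{(\text{beta})}$ are real, and $F^{\text{beta}}(z)=\prod_{k=1}^M(z-\ell_k^{(\text{beta})})$. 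Thus ${}_2F_1(-M,\gamma;z;c)$ is expressed as a quotient of two monic polynomials of the same degree $M$, both with real roots.

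Next, I would take the logarithm and expand around $z=\infty$ as a formal power series in $z^{-1}$. Applying $\ln(1-x/z)=-\sum_{n\ge 1}\frac{x^n}{nz^n}$ to each factor gives
\[
\ln {}_2F_1(-M,\gamma;z;c) = \sum_{k=1}^M \ln\!\left(1-\frac{\ell_k^{(\text{beta})}}{z}\right) - \sum_{k=0}^{M-1}\ln\!\left(1+\frac{k}{z}\right),
\]
which after collecting powers of $z^{-n}$ reads
\[
\ln {}_2F_1(-M,\gamma;z;c) = \sum_{n\ge 1}\frac{z^{-n}}{n}\left\{\sum_{k=0}^{M-1}(-k)^n - \sum_{k=1}^M \bigl(\ell_k^{(\text{beta})}\bigr)^n\right\}.
\]
Reindexing the first inner sum by $k\mapsto k-1$, so that $\sum_{k=0}^{M-1}(-k)^n = \sum_{k=1}^M (1-k)^n$, gives the identity
\[
\ln {}_2F_1(-M,\gamma;z;c) = \sum_{n\ge 1}\frac{z^{-n}}{n}\left\{\sum_{k=1}^M (1-k)^n - \sum_{k=1}^M \bigl(\ell_k^{(\text{beta})}\bigr)^n\right\}.
\]

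Finally, I would apply the inverse formal Laplace transform $\Lapformal^{-1}$ from \cref{eq:definition_laplace}, acting coefficient-wise by $z^{-n}\mapsto t^{n-1}/(n-1)!$. The factor $\frac{1}{n\cdot(n-1)!}=\frac{1}{n!}$ combines, and shifting the summation index $m=n-1$ yields exactly the right-hand side of \cref{eq:inverse_fourier_beta}. The argument is essentially a direct computation, so there is no serious obstacle; the reason it succeeds without the delicate machinery of \cref{thm:inverse_laplace} and \cref{thm:inverse_laplace_2} (Hadamard factorization, Euler--Mascheroni cancellation, and asymptotic analysis at infinity) is that ${}_2F_1(-M,\gamma;z;c)$ is already a terminating series, hence a rational function in $z$ whose expansion at infinity is manifestly convergent and admits an elementary logarithmic decomposition.
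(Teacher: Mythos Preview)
Your proof is correct and follows essentially the same approach as the paper: both arguments rest on the factorization ${}_2F_1(-M,\gamma;z;c)=\prod_{k=1}^M(z-\ell_k^{(\text{beta})})\big/\prod_{k=1}^M(z-(1-k))$ coming from \cref{thm:char_fns}, expand around $z=\infty$, and apply $\Lapformal^{-1}$ termwise. The only cosmetic difference is that the paper first passes to the logarithmic \emph{derivative} and then invokes \cref{lem:lap_derivative} to recover $\Lapformal^{-1}\{\ln{}_2F_1\}$, whereas you expand $\ln(1-a/z)$ directly; your route is marginally more elementary and avoids the auxiliary lemma.
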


\begin{proof}
Since $F^{\text{beta}}(z)$ is a monic polynomial with roots $\ell_1^{(\text{beta})},\dots,\ell_M^{(\text{beta})}$, we have
\[
F^{\text{beta}}(z) = \frac{\Gamma(z+M)}{\Gamma(z)}{}_2F_1\left(-M,\gamma; z; c\right) = \prod_{k=1}^M{\big(z - \ell_k^{(\text{beta})}\big)}.
\]
Since $\frac{\Gamma(z+M)}{\Gamma(z)} = \prod_{k=1}^M\big(z-(1-k)\big)$, we deduce
\[
\Big( \ln{{}_2F_1\left(-M,\gamma; z; c\right)} \Big)' = \sum_{k=1}^M{\left\{ \frac{1}{z-\ell_k^{(\text{beta})}} - \frac{1}{z-(1-k)} \right\}}.
\]
Finally, \cref{lem:lap_derivative} and the previous equation lead to
\begin{align*}
\Lapformal^{-1}\Big\{ \ln{{}_2F_1\left(-M,\gamma; z; c\right)} \Big\}
&= -\frac{1}{t}\cdot\widetilde{\Lap}^{-1}\bigg\{ \Big( \ln{{}_2F_1\left(-M,\gamma; z; c\right)} \Big)' \bigg\}\\
&= \sum_{n=0}^\infty \frac{t^n}{(n+1)!} \sum_{k=1}^M{ \left( (1-k)^{n+1} - \big(\ell_k^{(\text{beta})}\big)^{n+1} \right) },
\end{align*}
which is the desired \cref{eq:inverse_fourier_beta}.
\end{proof}

\section{Densities of the limits of pure Jack measures}
\label{sec:densities}

In this section, as the main application of our main \cref{thm:r_transform}, we find densities for the limits of empirical measures of pure Jack measures, i.e., for $\mu_{\Alpha}^{\gamma;c;\eta}$, $\mu_{\Beta}^{\gamma;c;M}$ and $\mu_{\Planch}^{\gamma;\eta}$, namely the measures from \cref{thm:application1}, \cref{thm:application2} and \cref{thm:application3}, respectively.

\begin{theorem}[Limit of pure Plancherel Jack measures]\label{thm:application_plancherel}
Let $\gamma,\eta>0$ be arbitrary.
Let $\mu_{\Planch}^{\gamma;\eta}$ be the probability measure from \cref{thm:application3}.
Let $\ell_1^{(\text{planch})}>\ell_2^{(\text{planch})}>\cdots$ be the eigenvalues of $J^{(\text{planch})}$ or, equivalently, the zeroes of $F^{\text{planch}}(z)$\footnote{The parameters $\gamma,\eta$ that go into the definitions of the Jacobi operator $J^{(\text{planch})}$ and the meromorphic function $F^{\text{planch}}(z)$ are the same as the ones in $\mu_{\Planch}^{\gamma;\eta}$.}; see \cref{eq:Jacobiplanch,thm:char_fns} for the definitions.
Then $\gamma\ge\ell_1^{(\text{planch})}$ and $\ell_k^{(\text{planch})}\ge 1+\ell_{k+1}^{(\text{planch})}$, for all $k\ge 1$.
Moreover, $\mu_{\Planch}^{\gamma;\eta}$ is absolutely continuous and has density
\begin{equation}\label{eq:density_planch}
\frac{\dd\mu_{\Planch}^{\gamma;\eta}}{\dd x} = f_{\Planch}^{\gamma;\eta}(x) := \frac{1}{\gamma}\sum_{k=0}^\infty{ \mathbf{1}_{\big[1-\ell_k^{(\text{planch})},\, -\ell_{k+1}^{(\text{planch})}\big]}(x) },\quad x\in\R,
\end{equation}
where we set $\ell_0^{(\text{planch})}:=\gamma+1$.
\end{theorem}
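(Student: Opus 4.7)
The plan is to apply the functional equation of \cref{thm:r_transform} to the quantized $\gamma$-cumulants $\kappa_n^\gamma = \eta\,\delta_{n,1}$ of $\mu_{\Planch}^{\gamma;\eta}$ (furnished by \cref{thm:application3}), then invert the resulting formal Laplace transform using \cref{cor:formal_inverse_transform}, and finally recognize the outcome as the Laplace transform of the asserted density. First I would substitute into \eqref{eq:1}: the equation $\exp(\eta z) = 1+\sum_{n\ge 1}c_n z^n/\gamma^{\uparrow n}$ gives $c_n = \eta^n\gamma^{\uparrow n}/n!$, so the left-hand side of \eqref{eq:2} collapses to the confluent hypergeometric series $1+\sum_{n\ge 1}(-\eta)^n\gamma^{\uparrow n}/(n!\,z^{\uparrow n}) = {}_1F_1(\gamma;z;-\eta)$. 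Applying $\Lapformal^{-1}$ to both sides of \eqref{eq:2} and invoking \cref{cor:formal_inverse_transform} on the left (whose hypotheses hold by \cref{lem:finite_order,thm:char_fns,lem:roots_lemma}) produces a power-series identity relating the moments $m_n$ of $\mu_{\Planch}^{\gamma;\eta}$ to the eigenvalues $\ell_k^{(\text{planch})}$.

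Next I would extract a closed form for the two-sided Laplace transform $M(-t) := \int_{\mathbb{R}}e^{-tx}\,d\mu_{\Planch}^{\gamma;\eta}(x)$. Multiplying the preceding identity by $(1-e^{-t})/t$, swapping sums, and using the telescoping $(1-e^{-t})\sum_{k\ge 0}e^{-kt}=1$ (valid for $t>0$, with $\sum_{k\ge 1} e^{\ell_k^{(\text{planch})} t}$ absolutely convergent by $\ell_k^{(\text{planch})}=1-k+O(k^{-M})$), elementary algebra yields
\[
M(-t) \;=\; \frac{1}{\gamma t}\sum_{k=0}^\infty\left(e^{(\ell_k^{(\text{planch})}-1)t}-e^{\ell_{k+1}^{(\text{planch})}t}\right) \;=\; \frac{1}{\gamma}\sum_{k=0}^\infty\int_{1-\ell_k^{(\text{planch})}}^{-\ell_{k+1}^{(\text{planch})}}e^{-tx}\,dx,
\]
with the convention $\ell_0^{(\text{planch})}:=\gamma+1$. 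The right-hand side is the Laplace transform of the (a priori signed) locally constant function $g(x) := \frac{1}{\gamma}\sum_{k\ge 0}\mathbf{1}_{[1-\ell_k^{(\text{planch})},\,-\ell_{k+1}^{(\text{planch})}]}(x)$, with sign $-1$ on any reversed interval. Uniqueness of the Laplace transform then identifies $d\mu_{\Planch}^{\gamma;\eta}(x) = g(x)\,dx$ as measures on $\mathbb{R}$.

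Finally, I would derive the interlacing $\gamma\ge\ell_1^{(\text{planch})}$ and $\ell_k^{(\text{planch})}\ge 1+\ell_{k+1}^{(\text{planch})}$ from the nonnegativity of $\mu_{\Planch}^{\gamma;\eta}$: the function $g$ must be nonnegative almost everywhere, and the asymptotic $\ell_k^{(\text{planch})} = 1-k+o(1)$ confines any possible reversal to finitely many indices, so a bounded sign analysis of $g$ on the partition of $\mathbb{R}$ generated by the endpoints $\{1-\ell_k^{(\text{planch})}\}_{k\ge 0}\cup\{-\ell_k^{(\text{planch})}\}_{k\ge 1}$ forces every interval to be non-reversed. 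Once the interlacing is in place, the intervals are disjoint and $g$ reduces to the claimed nonnegative density $f_{\Planch}^{\gamma;\eta}$.

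The hard part will be the interlacing argument, since a signed $g$ could in principle be nonnegative by cancellation between overlapping reversed intervals. The cleanest way to avoid this difficulty is to establish the interlacing independently from Jacobi-operator theory---for instance, by comparing $J^{(\text{planch})}$ with its rank-one compression obtained by deleting the first row and column (which coincides, up to a shift, with $J^{(\text{planch})}(\gamma+1,\eta)-I$) and exploiting classical interlacing of tridiagonal spectra---thereby reducing the density identification to pure Laplace-transform uniqueness.
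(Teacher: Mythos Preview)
Your computation through the Laplace-transform identity is exactly the paper's argument: one obtains $c_n=\eta^n\gamma^{\uparrow n}/n!$, recognizes the left side of \eqref{eq:2} as ${}_1F_1(\gamma;z;-\eta)$, inverts via \cref{cor:formal_inverse_transform}, and arrives at the same closed form for $M(-t)$. The paper phrases the final identification via the characteristic function rather than the two-sided Laplace transform, but this is cosmetic.

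For the interlacing, the paper carries out precisely your ``bounded sign analysis,'' and your worry about cancellation from overlapping reversed intervals turns out to be unfounded. One writes $g$ with $\sgn$ functions so that it is defined regardless of the ordering of the endpoints, and then argues by contradiction on the \emph{smallest} index $k\ge 0$ at which $\ell_k^{(\text{planch})}\ge 1+\ell_{k+1}^{(\text{planch})}$ fails (with $\ell_0^{(\text{planch})}=\gamma+1$). At $x=-\ell_{k+1}^{(\text{planch})}+\epsilon$ for $\epsilon>0$ small: for every $j<k$ both endpoints $1-\ell_j^{(\text{planch})}$ and $-\ell_{j+1}^{(\text{planch})}$ lie strictly to the left of $x$ (by the already-established inequalities), so those summands vanish; for every $j>k$ both endpoints lie strictly to the right of $x$, so those vanish too; the $k$-th summand alone contributes $-1/\gamma$. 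Hence $g(x)<0$, contradicting that $g$ is the density of a probability measure. No global overlap bookkeeping is required---the leftmost violation is always exposed.

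Your Jacobi-operator alternative, by contrast, does not close as stated. You correctly identify the compression of $J^{(\text{planch})}(\gamma,\eta)$ as $J^{(\text{planch})}(\gamma+1,\eta)-I$, but Cauchy interlacing (even granting it for these unbounded operators) then yields
\[
\ell_k^{(\text{planch})}(\gamma,\eta)\ \ge\ \ell_k^{(\text{planch})}(\gamma+1,\eta)-1\ \ge\ \ell_{k+1}^{(\text{planch})}(\gamma,\eta),
\]
which compares eigenvalues at \emph{different} values of $\gamma$. Extracting $\ell_k(\gamma)\ge 1+\ell_{k+1}(\gamma)$ from this would still require a monotonicity statement such as $\ell_k(\gamma)\ge\ell_k(\gamma+1)$, which is not obvious; and the bound $\gamma\ge\ell_1^{(\text{planch})}$ does not fall out at all. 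The positivity argument is both shorter and self-contained, so you should pursue it rather than the spectral detour.
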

\begin{proof}
By \cref{thm:application3}, the measure $\mu_{\Planch}^{\gamma;\eta}$ has finite moments of all orders and, consequently, all its quantized $\gamma$-cumulants are well-defined; moreover, they equal $\kappa_n=\eta\cdot\delta_{n,1}$.
Then \cref{eq:1} gives
\[
1+\sum_{n\ge 1}{\frac{c_n}{\gamma^{\uparrow n}}z^n} = \exp(\eta z) = \sum_{n\ge 0}{\frac{\eta^n}{n!}z^n},
\]
therefore
\[
c_n = \frac{\gamma^{\uparrow n}}{n!}\eta^n,\ \text{ for all }n\ge 1.
\]
Next, the LHS of \cref{eq:2} is
\[
1+\sum_{n=1}^M{ \frac{(-1)^n\frac{\gamma^{\uparrow n}}{n!}\eta^n}{z^{\uparrow n}} } = \sum_{n=0}^M{ \frac{\gamma^{\uparrow n}}{z^{\uparrow n}}\cdot\frac{(-\eta)^n}{n!} } = {}_1F_1(\gamma;z;-\eta).
\]
Denote by $M(s) := 1+\sum_{n=1}^\infty{\int_{\R}{x^n\mu_{\Planch}^{\gamma;\eta}(\dd x)}\cdot s^n}$ the formal moment generating function of $\mu_{\Planch}^{\gamma;\eta}$.
By \cref{thm:r_transform} and \cref{cor:formal_inverse_transform}, we deduce the following equality of formal power series:
\[
\frac{1}{\gamma t}\sum_{k=0}^\infty{ \Big( e^{-kt} - e^{\ell_{k+1}^{(\text{planch})}t} \Big) }\cdot (1-e^{-t}) = M(-t)-\frac{1}{\gamma t}\left( e^{\gamma t}-1 \right).
\]
The simple manipulation
\[
\sum_{k=0}^\infty{ \Big( e^{-kt} \Big)\cdot (1-e^{-t}) } = \sum_{k=0}^\infty{ \Big( e^{-kt} - e^{-(k+1)t} \Big) } = 1
\]
leads to:
\[
M(t) = \frac{e^{-\ell_1^{(\text{planch})} t} - e^{-\gamma t}}{\gamma t} + \sum_{k=1}^\infty{ \frac{e^{-\ell_{k+1}^{(\text{planch})}t} - e^{\big(1-\ell_k^{(\text{planch})}\big)t}}{\gamma t} }.
\]
This formal equality of power series, along with the fact that all moments of $\mu_{\Planch}^{\gamma;\eta}$ are finite, implies that the characteristic function of $\mu_{\Planch}^{\gamma;\eta}$ equals
\begin{equation}\label{eq:char_function_planch}
\int_{\R}{e^{itx}\mu_{\Planch}^{\gamma;\eta}(\text{d}x)} = \frac{1}{\gamma}\cdot\left\{ \frac{e^{-i\ell_1^{(\text{planch})} t} - e^{-i\gamma t}}{it} + \sum_{k=1}^\infty{ \frac{e^{-i\ell_{k+1}^{(\text{planch})}t} - e^{i\big(1-\ell_k^{(\text{planch})}\big)t}}{it} } \right\}.
\end{equation}

Let us assume for now that
\begin{equation}\label{eq:ineq_plancherel}
\gamma\ge\ell_1^{(\text{planch})}\ \text{ and }\ \ell_k^{(\text{planch})} \ge 1+\ell_{k+1}^{(\text{planch})}, \text{ for all }k\ge 1.
\end{equation}
Then $1-\ell_k^{(\text{planch})} \le -\ell_{k+1}^{(\text{planch})}$, for all $k\ge 0$, so that all intervals $\big[1-\ell_k^{(\text{planch})},\, -\ell_{k+1}^{(\text{planch})}\big]$, for $k\ge 0$, are well-defined and disjoint (if $1-\ell_k^{(\text{planch})} = -\ell_{k+1}^{(\text{planch})}$, for some $k$, then we can remove the corresponding indicator function from the density).
Moreover, the function defined by \cref{eq:density_planch} is the density of a probability measure, since it integrates to:
\begin{align*}
\frac{1}{\gamma}\cdot\sum_{k=0}^\infty{\left( \big( -\ell_{k+1}^{(\text{planch})} \big) - \big(1 - \ell_k^{(\text{planch})}\big) \right)} &= \frac{1}{\gamma}\cdot\sum_{k=0}^\infty{\left( \ell_k^{(\text{planch})} -\ell_{k+1}^{(\text{planch})} - 1 \right)}\\
&= \frac{1}{\gamma}\cdot\lim_{K\to\infty} \sum_{k=0}^K{\left( \ell_k^{(\text{planch})} -\ell_{k+1}^{(\text{planch})} - 1 \right)}\\
&= \frac{1}{\gamma}\cdot\lim_{K\to\infty} \left( \ell_0^{(\text{planch})} - \ell_{K+1}^{(\text{planch})}-(K+1) \right)\\
&= \frac{1}{\gamma}\cdot\big(\ell_0^{(\text{planch})} - 1\big) = 1,
\end{align*}
where for the fourth equality, we used $\lim_{K\to\infty}{\big(\ell_{K+1}^{(\text{planch})}+K\big)} = 0$, as given by \cref{lem:roots_lemma}.
Next, note that $\int_{\R}{e^{itx}f_{\Planch}^{\gamma;\eta}(x)dx}$ is equal to the RHS of \cref{eq:char_function_planch}.
Since the characteristic function uniquely determines the underlying probability measure, it would follow that $\mu_{\Planch}^{\gamma;\eta}$ is absolutely continuous with density $f_{\Planch}^{\gamma;\eta}(x)$, as desired.

It remains to prove \cref{eq:ineq_plancherel}. If we set
\begin{multline*}
g_{\Planch}^{\gamma;\eta}(x) := \frac{1}{2\gamma}\bigg\{ \left( \sgn(x+\gamma) - \sgn\big(x+\ell_1^{(\text{planch})}\big) \right) \\
+ \sum_{k=1}^\infty{ \left( \sgn\big(x+\ell_k^{(\text{planch})}-1\big) - \sgn\big( x+\ell_{k+1}^{(\text{planch})} \big) \right) } \bigg\},
\end{multline*}
then $\int_\R{e^{itx}g_{\Planch}^{\gamma;\eta}(x)\dd x}$ equals the RHS of \cref{eq:char_function_planch}.
This means that $g_{\Planch}^{\gamma;\eta}(x)$ is a density of the probability measure $\mu_{\Planch}^{\gamma;\eta}$.
For the sake of contradiction, assume $-\ell_1^{(\text{planch})} < -\gamma$. Since $-\ell_1^{(\text{planch})}<1-\ell_k^{(\text{planch})}$ and $-\ell_1^{(\text{planch})}<-\ell_{k+1}^{(\text{planch})}$, for all $k\ge 1$, it is easy to check that $g_{\Planch}^{\gamma;\eta}\big(-\ell_1^{(\text{planch})}+\epsilon\big)<0$, for sufficiently small $\epsilon>0$. But this is impossible, therefore $-\ell_1^{(\text{planch})}\ge -\gamma$, which proves the first equality in \eqref{eq:ineq_plancherel}.
Similarly, $-\ell_{k+1}^{(\text{planch})}\ge 1-\ell_k^{(\text{planch})}$, for all $k\ge 1$, by induction on $k$.
This ends the proof of \eqref{eq:ineq_plancherel} and the theorem.
\end{proof}

\begin{theorem}[Limit of pure alpha Jack measures]\label{thm:application_alpha}
Let $\gamma,\eta>0$ and $c\in (0,1)$ be arbitrary.
Let $\mu_{\Alpha}^{\gamma;c;\eta}$ be the probability measure from \cref{thm:application1}.
Let $\ell_1^{(\text{alpha})}>\ell_2^{(\text{alpha})}>\cdots$ be the eigenvalues of $J^{(\text{alpha})}$ or, equivalently, the zeroes of $F^{\text{alpha}}(z)$.
Moreover, assume that $\ell_{k}^{(\text{alpha})}=-k+O(k^{-N})$, for any $N\ge 1$, as $k\to\infty$.\footnote{This assumption holds when $\gamma=\eta$ and we believe that it holds for all $\gamma,\eta>0$.}
Then $-1\ge\ell_1^{(\text{alpha})}$ and $\ell_k^{(\text{alpha})}\ge 1+\ell_{k+1}^{(\text{alpha})}$, for all $k\ge 1$.
Moreover, $\mu_{\Alpha}^{\gamma;c;\eta}$ is absolutely continuous and has density
\[
\frac{\dd\mu_{\Alpha}^{\gamma;c;\eta}}{\dd x} = \frac{1}{\gamma}\sum_{k=-1}^\infty{ \mathbf{1}_{\big[1-\ell_k^{(\text{alpha})},\, -\ell_{k+1}^{(\text{alpha})}\big]}(x) },\quad x\in\R,
\]
if we set $\ell_0^{(\text{alpha})}:=0$ and $\ell_{-1}^{(\text{alpha})}:=\gamma+1$, for convenience.
\end{theorem}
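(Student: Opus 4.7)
\medskip

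The plan is to imitate the proof of \cref{thm:application_plancherel} in the Plancherel case, with the two key substitutions being that the $\gamma$-quantized $R$-transform is now the one associated to $\kappa_n = \eta c^n/(1-c)^n$ (from \cref{thm:application1}), and that \cref{cor:formal_inverse_transform} is replaced by \cref{cor:formal_inverse_transform_2}, whose applicability is precisely what the hypothesis $\ell_k^{(\text{alpha})} = -k + O(k^{-N})$ guarantees.

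\smallskip

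First, I would compute the coefficients $c_n$ from \eqref{eq:1}. Plugging $\kappa_n = \eta c^n/(1-c)^n$ into the exponential generating function gives
\begin{equation*}
\exp\!\left(\sum_{n\ge 1}\frac{\kappa_n z^n}{n}\right) = \left(1-\frac{cz}{1-c}\right)^{-\eta} = \sum_{n\ge 0}\frac{\eta^{\uparrow n}}{n!}\left(\frac{c}{1-c}\right)^n z^n,
\end{equation*}
so matching coefficients with the RHS of \eqref{eq:1} yields $c_n = \frac{\gamma^{\uparrow n}\eta^{\uparrow n}}{n!}\left(\frac{c}{1-c}\right)^n$. Consequently, the LHS of \eqref{eq:2} becomes the Gauss hypergeometric series ${}_2F_1(\gamma,\eta;z;-c/(1-c))$, and Pfaff's transformation (\cref{lem:pfaff}) identifies it with $(1-c)^\gamma\cdot{}_2F_1(\gamma,z-\eta;z;c) = \Gamma(z)\cdot F^{\text{alpha}}(z)$, which is exactly the function appearing in \cref{cor:formal_inverse_transform_2}.

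\smallskip

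Next, I would apply $\Lapformal^{-1}$ to both sides of \eqref{eq:2}. On the one hand, \cref{cor:formal_inverse_transform_2} produces $\frac{1}{t}\sum_{k\ge 1}\left(e^{-kt}-e^{\ell_k^{(\text{alpha})}t}\right)$; on the other hand, as in the Plancherel proof, the RHS simplifies to $\tfrac{1}{1-e^{-t}}\bigl(\gamma M(-t) - \tfrac{e^{\gamma t}-1}{t}\bigr)$, where $M(s)$ is the moment generating function of $\mu_{\Alpha}^{\gamma;c;\eta}$. Multiplying through by $(1-e^{-t})/t$, using the telescoping identity $(1-e^{-t})\sum_{k\ge 1}e^{-kt} = e^{-t}$, and reorganizing the remaining sum to group $e^{\ell_k^{(\text{alpha})}t}$ and $e^{(\ell_k^{(\text{alpha})}-1)t}$ terms, I would obtain
\begin{equation*}
\gamma M(-t) = \frac{e^{\gamma t}-1}{t} + \frac{e^{-t}-e^{\ell_1^{(\text{alpha})}t}}{t} + \sum_{k\ge 1}\frac{e^{(\ell_k^{(\text{alpha})}-1)t}-e^{\ell_{k+1}^{(\text{alpha})}t}}{t}.
\end{equation*}
Substituting $t\mapsto -it$ turns this into an equality for the characteristic function of $\mu_{\Alpha}^{\gamma;c;\eta}$, and the three groups of terms on the right match, exactly, the characteristic functions of uniform measures of density $1/\gamma$ on the intervals $[-\gamma,0]$, $[1,-\ell_1^{(\text{alpha})}]$, and $[1-\ell_k^{(\text{alpha})},-\ell_{k+1}^{(\text{alpha})}]$ for $k\ge 1$. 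Under the stated conventions $\ell_0^{(\text{alpha})}:=0$ and $\ell_{-1}^{(\text{alpha})}:=\gamma+1$, this is precisely the characteristic function of the proposed density.

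\smallskip

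To upgrade the formal equality into an actual equality of characteristic functions, and thereby conclude that $\mu_{\Alpha}^{\gamma;c;\eta}$ has the claimed density, it remains to verify the interlacing inequalities $-1\ge\ell_1^{(\text{alpha})}$ and $\ell_k^{(\text{alpha})}\ge 1+\ell_{k+1}^{(\text{alpha})}$ for $k\ge 1$, which guarantee that the intervals are well-defined and pairwise disjoint. I would argue exactly as in the Plancherel case: write the density as a signed sum of indicator functions using $\sgn$, verify that this $g^{\gamma;c;\eta}_{\Alpha}$ has Fourier transform matching the above, and then assume toward contradiction that one of the interlacings fails. Evaluating $g$ at a point slightly to the right of the would-be leftmost violating endpoint produces a negative value, contradicting the fact that $g$ is a density; an induction on $k$ finishes the argument. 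The main obstacle is this last step: the structural proof is identical to Plancherel, but care is needed because the ``outer'' interval $[-\gamma,0]$ is now fixed rather than being anchored to $\ell_1$, so one must check that the case $k=0$ (producing the inequality $\ell_1^{(\text{alpha})}\le -1$) is correctly isolated and that the induction handling $k\ge 1$ remains unchanged.
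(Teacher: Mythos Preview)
Your proposal is correct and follows essentially the same approach as the paper: compute $c_n=\frac{\gamma^{\uparrow n}\eta^{\uparrow n}}{n!}\big(\frac{c}{1-c}\big)^n$ from \eqref{eq:1}, identify the LHS of \eqref{eq:2} with $(1-c)^\gamma\cdot{}_2F_1(\gamma,z-\eta;z;c)$ via Pfaff's transformation, invoke \cref{cor:formal_inverse_transform_2} (enabled exactly by the hypothesis $\ell_k^{(\text{alpha})}=-k+O(k^{-N})$), and then telescope to obtain the characteristic function, after which the interlacing is deduced from positivity of the density as in \cref{thm:application_plancherel}. Your characteristic-function expression agrees with the paper's, and your observation that the telescoping now starts at $k=1$ and leaves behind an $e^{-t}$ term---yielding the extra fixed interval $[-\gamma,0]$ and the separate interval $[1,-\ell_1^{(\text{alpha})}]$ handled via the conventions $\ell_{-1}^{(\text{alpha})}=\gamma+1$, $\ell_0^{(\text{alpha})}=0$---is exactly the new feature the paper records.
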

\begin{proof}
By \cref{thm:application3}, the quantized $\gamma$-cumulants of $\mu_{\Alpha}^{\gamma;c;\eta}$ are $\kappa_n=\eta\left(\frac{c}{1-c}\right)^n$, thus \cref{eq:1} gives
\begin{align*}
1+\sum_{n\ge 1}{\frac{c_n}{\gamma^{\uparrow n}}z^n}
= \exp\left( \eta\sum_{n\ge 1}{\frac{1}{n} \left( \frac{cz}{1-c} \right)^n} \right)
&= \exp\bigg(-\eta\cdot\ln\Big(1-\frac{cz}{1-c}\Big)\bigg)\\
&= \Big( 1-\frac{cz}{1-c} \Big)^{-\eta}
= \sum_{n\ge 0}{\frac{\eta^{\uparrow n}}{n!}\Big(\frac{c}{1-c}\Big)^n z^n},
\end{align*}
where the last equality follows from the binomial theorem.
As a result,
\[
c_n = \frac{\gamma^{\uparrow n}\eta^{\uparrow n}}{n!}\Big(\frac{c}{1-c}\Big)^n,\ \text{ for all }n\ge 1.
\]
Then the LHS of \cref{eq:2} is
\begin{multline*}
1+\sum_{n\ge 1}{ \frac{(-1)^n\,\frac{\gamma^{\uparrow n}\eta^{\uparrow n}}{n!}\left(\frac{c}{1-c}\right)^n}{z^{\uparrow n}} }
= \sum_{n\ge 0}{ \frac{\gamma^{\uparrow n}\eta^{\uparrow n}}{z^{\uparrow n}}\cdot\frac{\left(\frac{c}{c-1}\right)^n}{n!} }
= {}_2F_1\Big(\gamma,\eta;z;\frac{c}{c-1}\Big)\\
= (1-c)^\gamma\cdot{}_2F_1\left(\gamma,z-\eta;z;c\right)
= (1-c)^\eta\cdot{}_2F_1\left(z-\gamma,\eta;z;c\right),
\end{multline*}
where we have used \cref{lem:pfaff}.
The advantage of the last two formulas is that the corresponding series define analytic functions, for any $c\in(0,1)$.
Next, if we let $M(s)$ be the formal moment generating function of $\mu_{\Alpha}^{\gamma;c;\eta}$, then \cref{thm:r_transform,cor:formal_inverse_transform_2} imply the following equality of formal power series:
\[
\frac{1}{\gamma t}\sum_{k=1}^\infty{ \Big( e^{-kt} - e^{\ell_k^{(\text{alpha})}t} \Big) }\cdot (1-e^{-t}) = M(-t)-\frac{1}{\gamma t}\left( e^{\gamma t}-1 \right).
\]
The remaining of the proof then follows the proof of \cref{thm:application_plancherel}, starting by computing that the characteristic function of $\mu_{\Alpha}^{\gamma;c;\eta}$ is
\[
\int_{\R}{e^{itx}\mu_{\Alpha}^{\gamma;c;\eta}(\dd x)} = \frac{1}{\gamma}\cdot\left\{ \frac{e^{-i\ell_1^{(\text{alpha})} t} - e^{it} + 1 - e^{-i\gamma t}}{it} + \sum_{k=1}^\infty{ \frac{e^{-i\ell_{k+1}^{(\text{alpha})}t} - e^{i\big(1-\ell_k^{(\text{alpha})}\big)t}}{it} } \right\},
\]
and from there, the formula for the density follows.
\end{proof}

\begin{theorem}[Limit of pure beta Jack measures]\label{thm:application_beta}
Let $M\in\Z_{\ge 1}$, $c\in(0,1)$.
Let $\mu_{\Beta}^{\gamma;c;M}$ be the probability measure from \cref{thm:application2}.
Let $\ell_1^{(\text{beta})}\ge\dots\ge\ell_M^{(\text{beta})}$ be the eigenvalues of $J^{(\text{beta})}$ or, equivalently, the roots of $F^{\text{beta}}(z)$.
Then $\gamma\ge\ell_1^{(\text{beta})}$ and $\ell_k^{(\text{beta})}\ge 1+\ell_{k+1}^{(\text{beta})}$, for all $1\le k\le M-1$; in particular, the eigenvalues are pairwise distinct.
Moreover, $\mu_{\Beta}^{\gamma;c;M}$ has density
\begin{equation*}
\frac{\dd\mu_{\Beta}^{\gamma;c;M}}{\dd x} = \frac{1}{\gamma}\sum_{k=0}^{M-1}{ \mathbf{1}_{\big[1-\ell_k^{(\text{beta})},\, -\ell_{k+1}^{(\text{beta})}\big]}(x) },\quad x\in\R,
\end{equation*}
if we set $\ell_0^{(\text{beta})}:=\gamma+1$.
\end{theorem}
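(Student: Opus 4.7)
The plan is to mimic the proofs of \cref{thm:application_plancherel,thm:application_alpha}, using \cref{thm:r_transform} together with the finite inverse Laplace identity of \cref{thm:inverse_laplace_beta}. First, feeding $\kappa_n=(-1)^{n-1}Mc^n$ into \eqref{eq:1} gives $\sum_{n\ge 1}\kappa_n z^n/n=M\ln(1+cz)$, so that $\exp(\,\cdot\,)=(1+cz)^M$, which yields $c_n=\binom{M}{n}\gamma^{\uparrow n}c^n$ for $0\le n\le M$ and $c_n=0$ for $n>M$. Using $(-1)^n\binom{M}{n}=(-M)^{\uparrow n}/n!$, the left-hand side of \eqref{eq:2} collapses to exactly the Gauss polynomial factor appearing in $F^{\text{beta}}(z)$:
\[
1+\sum_{n=1}^{M}\frac{(-1)^n c_n}{z^{\uparrow n}}=\sum_{n=0}^{M}\frac{(-M)^{\uparrow n}\gamma^{\uparrow n}}{n!\,z^{\uparrow n}}\,c^n={}_2F_1(-M,\gamma;z;c).
\]

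Next, applying \cref{thm:inverse_laplace_beta} to the logarithm of this polynomial gives
\[
\Lapformal^{-1}\bigl\{\ln{}_2F_1(-M,\gamma;z;c)\bigr\}=\frac{1}{t}\left\{\sum_{k=0}^{M-1}e^{-kt}-\sum_{k=1}^{M}e^{\ell_k^{(\text{beta})}t}\right\}.
\]
Equating with $\gamma$ times the Laplace argument on the right-hand side of \eqref{eq:2}, multiplying by $(1-e^{-t})$, and using the finite telescoping $(1-e^{-t})\sum_{k=0}^{M-1}e^{-kt}=1-e^{-Mt}$, I solve for the formal exponential moment generating function $\widehat{m}(s):=1+\sum_{n\ge 1}m_n s^n/n!$ of $\mu_{\Beta}^{\gamma;c;M}$, obtaining
\[
\widehat{m}(-t)=\frac{e^{\gamma t}-e^{-Mt}}{\gamma t}-\frac{1-e^{-t}}{\gamma t}\sum_{k=1}^{M}e^{\ell_k^{(\text{beta})}t}.
\]

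I then substitute $t\mapsto -t$ and peel off the $k=1$ term from $\sum_{k=1}^M e^{-\ell_k^{(\text{beta})}t}$ together with the $k=M$ term from $\sum_{k=1}^M e^{(1-\ell_k^{(\text{beta})})t}$. A straightforward re-indexing puts $\widehat{m}(t)$ in the telescoping form of the MGF of the probability density $\frac{1}{\gamma}\sum_{k=0}^{M}\mathbf{1}_{[1-\ell_k^{(\text{beta})},\,-\ell_{k+1}^{(\text{beta})}]}$, once I set the boundary values $\ell_0^{(\text{beta})}:=\gamma+1$ and $\ell_{M+1}^{(\text{beta})}:=-M$ (the latter being forced by the leftover factor $e^{-Mt}$). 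Since \cref{thm:application2} guarantees that $\mu_{\Beta}^{\gamma;c;M}$ has all moments finite and is determined by them, and the proposed absolutely continuous measure shares the same formal MGF, the two measures coincide. The inequalities $\gamma\ge\ell_1^{(\text{beta})}$ and $\ell_k^{(\text{beta})}\ge 1+\ell_{k+1}^{(\text{beta})}$ for $1\le k\le M-1$ (together with the implicit $\ell_M^{(\text{beta})}\ge 1-M$) then follow by the sign-change argument of \cref{thm:application_plancherel}: rewriting the density via a signed sum of $\sgn$ functions dual to the exponentials in $\widehat{m}(t)$, a violation of any of these inequalities would force strictly negative density values in a neighborhood of the offending endpoint, contradicting that $\mu_{\Beta}^{\gamma;c;M}$ is a probability measure.

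The hard part is the bookkeeping in passing from $\widehat{m}$ to the density: in the Plancherel case, the telescoping $(1-e^{-t})\sum_{k\ge 0}e^{-kt}=1$ leaves no boundary remainder, whereas here the finite sum produces the residual $e^{-Mt}$, which is exactly what forces the implicit boundary eigenvalue $\ell_{M+1}^{(\text{beta})}:=-M$ and accounts for the rightmost interval of the support, reflecting that a pure beta Jack measure with parameter $M$ has parts bounded above by $M$.
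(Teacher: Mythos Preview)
Your approach is exactly the paper's: compute the $c_n$ from \eqref{eq:1}, identify the left side of \eqref{eq:2} with ${}_2F_1(-M,\gamma;z;c)$, invoke \cref{thm:inverse_laplace_beta}, solve for the moment generating function, and read off the density, with the interval inequalities following from the sign-function argument of \cref{thm:application_plancherel}.

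In fact your bookkeeping is sharper than the paper's. You correctly observe that the finite telescoping $(1-e^{-t})\sum_{k=0}^{M-1}e^{-kt}=1-e^{-Mt}$ leaves a residual $e^{-Mt}$, which forces an additional interval $[1-\ell_M^{(\text{beta})},\,M]$ via your convention $\ell_{M+1}^{(\text{beta})}:=-M$. The density in the theorem as stated, with the sum only over $k=0,\dots,M-1$, omits this interval: for $M=1$ one has $\ell_1^{(\text{beta})}=c\gamma$, and the stated density $\frac{1}{\gamma}\mathbf{1}_{[-\gamma,\,-c\gamma]}$ integrates to $1-c<1$. The characteristic-function display in the paper's proof has the same omission. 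Your corrected density, summed over $k=0,\dots,M$ with $\ell_{M+1}^{(\text{beta})}:=-M$, integrates to $\frac{1}{\gamma}\bigl(\ell_0^{(\text{beta})}-\ell_{M+1}^{(\text{beta})}-(M+1)\bigr)=1$ and matches the computed $\widehat m(t)$; the companion inequality $\ell_M^{(\text{beta})}\ge 1-M$ then also falls out of the sign argument, exactly as you note.
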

\begin{proof}
By \cref{thm:application2}, the quantized $\gamma$-cumulants of $\mu_{\Beta}^{\gamma;c;M}$ are $\kappa_n=(-1)^{n-1}Mc^n$.
Then \cref{eq:1} gives
\[
1+\sum_{n\ge 1}{\frac{c_n}{\gamma^{\uparrow n}}z^n} = \exp\left( \sum_{n\ge 1}{\frac{(-1)^{n-1}Mc^nz^n}{n}} \right) = (1+cz)^M,
\]
and leads to
\[
c_n = \begin{cases}
    \dbinom{M}{n} c^n \gamma^{\uparrow n}, &\text{if }n=1,\dots,M,\\
    0, &\text{if }n>M.
\end{cases}
\]
Next, the LHS of \cref{eq:2} is
\[
1+\sum_{n=1}^M{ \frac{(-1)^n \dbinom{M}{n} c^n \gamma^{\uparrow n}}{z^{\uparrow n}} } = \sum_{n=0}^M{ \frac{(-M)^{\uparrow n} \gamma^{\uparrow n}}{z^{\uparrow n}}\cdot\frac{c^n}{n!} } = {}_2F_1(-M,\gamma;z;c).
\]
Then, if we let $\mathfrak{M}(s)$ be the formal moment generating function of $\mu_{\Beta}^{\gamma;c;M}$, and due to the already proved \cref{eq:inverse_fourier_beta}, we deduce from \cref{thm:r_transform} the following equality of power series:
\[
\frac{1}{\gamma t}\sum_{k=1}^M{ \Big( e^{(1-k)t} - e^{\ell_k^{(\text{beta})}t} \Big) }\cdot (1-e^{-t}) = \mathfrak{M}(-t)-\frac{1}{\gamma t}\left( e^{\gamma t}-1 \right).
\]
From here, the proof mimics the proof of \cref{thm:application_plancherel}. In fact, the previous equality gives
\begin{equation*}
\int_{\R}{e^{itx}\mu_{\Beta}^{\gamma;c;M}(\dd x)} = \frac{1}{\gamma}\cdot\left\{ \frac{e^{-i\ell_1^{(\text{beta})} t} - e^{-i\gamma t}}{it} + \sum_{k=1}^{M-1}{ \frac{e^{-i\ell_{k+1}^{(\text{beta})}t} - e^{i\big(1-\ell_k^{(\text{beta})}\big)t}}{it} } \right\},
\end{equation*}
as well as the density formula for $\mu_{\Beta}^{\gamma;c;M}$. Hence, the proof is finished.
\end{proof}

\section{Appendix: sketch of proof of Lemma~\ref{lem:finite_order}}

\subsection{The Plancherel case}

Let $\gamma,\eta>0$ be arbitrary.
We want to prove that
\begin{equation}\label{eq:plancherel_order}
F^{\text{planch}}(z) = \frac{1}{\Gamma(z)}\cdot{}_1F_1\left(\gamma;z;-\eta\right)
= \frac{1}{\Gamma(z)}\sum_{n\ge 0}{ \frac{\gamma^{\uparrow n}}{z^{\uparrow n}}\frac{(-\eta)^n}{n!} }
\end{equation}
has finite order, i.e. that there exist constants $A,r>0$ such that
\begin{equation}\label{eq:def_order}
\big| F^{\text{planch}}(z) \big| \le \exp\big( |z|^A \big),\quad\text{for all }|z|>r.
\end{equation}
It is enough to prove \eqref{eq:def_order} with an additional assumption that $|z|\notin\Z_{\ge 0}$.
The key step is to bound the general $n$-th term in the sum \eqref{eq:plancherel_order}, which we denote by $a_n$. Note that
\begin{equation}\label{eq:order_1}
|a_n| = \left| \frac{\gamma^{\uparrow n}}{z^{\uparrow n}}\frac{(-\eta)^n}{n!} \right| = \frac{\gamma^{\uparrow n}\eta^n}{n!}\cdot\frac{1}{\left| z^{\uparrow n} \right|} \le \frac{(\lceil \gamma \rceil \eta)^n}{\left| z^{\uparrow n} \right|},\quad\text{for all }n\in\Z_{\ge 0}.
\end{equation}
Then there exists a unique integer $m\ge 2$ such that
\[
m-1>|z|>m-2.
\]
Then, for $n\ge m$, we can lower bound $\left| z^{\uparrow n} \right|$ as follows:
\begin{align}
\left| z^{\uparrow n} \right| &= |z|\cdot |z+1|\cdots |z+n-1|\nonumber\\
&\ge |z|\cdots(|z|-m+3)\cdot|z+m-2|\cdot|z+m-1|\cdot (m-|z|)\cdots (n-1-|z|)\nonumber\\
&\ge 1\cdot|z+m-2|\cdot|z+m-1|\cdot 1\cdot 2\cdots (n-m)\nonumber\\
&\ge |z+m-2|\cdot|z+m-1|\cdot\frac{n!}{n^m}.\label{eq:order_2}
\end{align}
Next, since $n\ge m\ge 2$ and $m<|z|+2$, we have $n^m\le (2m)^n < (2|z|+4)^n$. Moreover, if we take $r$ in \eqref{eq:def_order} large enough so that $r>4$, then we want to prove \cref{eq:def_order} only for those $z$ with $|z|>4$. In this case, we can use $2|z|+4\le 3|z|$. It follows that $n^m < (2|z|+4)^n\le (3|z|)^n$.
Combining with \eqref{eq:order_1} and \eqref{eq:order_2} then leads to
\[
|a_n| \le \frac{1}{|z+m-2|\cdot|z+m-1|}\cdot\frac{(3\lceil \gamma \rceil \eta|z|)^n}{n!},
\]
for all $n\ge m$.
When $n \leq m-1$, then assuming $|z|>4$, we can bound
\[ \frac{1}{\left| z^{\uparrow n} \right|} \leq \frac{1}{(m-2)\cdots(m-1-n)|z-n+1|} \leq \begin{cases} \frac{|z|^n}{n!} &\text{ for } n \leq m-2,\\ \frac{|z|^n}{n!}\frac{1}{|z+m-2|} &\text{ for } n = m-1. \end{cases} \]
Assume first that $z$ is not close to $(2-m)$ or to $(1-m)$; for concreteness, say e.g.~that $\text{dist}(z,\Z_{\le 0})\ge\frac{1}{10}$, so that $|z+m-2|\cdot|z+m-1|\ge\frac{1}{100}$. Then our analysis results into
\[
\big| {}_1F_1\left(\gamma;z;-\eta\right) \big| \le \sum_{n\ge 0}{|a_n|} \le C\sum_{n\ge 0}{\frac{(C'|z|)^n}{n!}} = C\cdot\exp(C'|z|),
\]
for some constants $C,C'>0$.
From the asymptotic expansion of the inverse gamma function, we obtain $\left|\Gamma(z)^{-1}\right|\le \exp\left(C''|z|^2\right)$, for some constant $C''>0$, as long as $|z|$ is large enough.
As a result, $\big| F^{\text{planch}}(z) \big| = \left|\Gamma(z)^{-1}\right|\cdot \big| {}_1F_1\left(\gamma;z;-\eta\right) \big| \le C\cdot\exp\big(C'|z|+C''|z|^2\big)$, and therefore \cref{eq:def_order} follows, e.g.~when $A=3$, if $r>0$ is large enough.

If $z$ is close to $(2-m)$, that is, $|z+m-2|<\frac{1}{10}$, then automatically $|z+m-1|>\frac{9}{10}$ and this allows us to at least bound $\frac{z^{\uparrow n}}{(z+m-2)}$ as above. As a result,
\[
\big|(z+m-2)\cdot{}_1F_1(\gamma;z;-\eta)\big| \le C\cdot\exp(C'|z|),
\]
for some constants $C,C'>0$.
We can also bound $\left|(z+m-2)^{-1}\cdot\Gamma(z)^{-1}\right|\le\exp\left(C''|z|^2\right)$, for some $C''>0$, by Cauchy's integral formula.
Combining these two inequalities again gives the desired \cref{eq:def_order} for $A=3$ and large enough $r>0$.

If $|z+m-1|<\frac{1}{10}$, the proof is the same.

\subsection{The alpha case}

Let $\gamma,\eta>0$ and $c\in(0,1)$ be arbitrary.
It suffices to show that for
\[
\widetilde{F}^{\text{{alpha}}}(z) = \frac{1}{\Gamma(z)}\cdot{}_2F_1\left(\gamma,z-\eta;z;c\right) = \frac{1}{\Gamma(z)}\sum_{n\ge 0}{ \frac{\gamma^{\uparrow n}(z-\eta)^{\uparrow n}}{z^{\uparrow n}}\frac{c^n}{n!} },
\]
there exist $A,r>0$ such that
\begin{equation}\label{eq:alpha_finite_order}
\big| \widetilde{F}^{\text{alpha}}(z) \big| \le \exp\big( |z|^A \big),\text{ for all }|z|>r,
\end{equation}
and similarly as before, it is enough to prove it with an additional assumption that $|z|\notin\Z_{\ge 0}$.
The proof is similar to that for the Plancherel case, so we only elaborate on how to modify the main step of the proof, which is to bound the modulus of $(z-\eta)^{\uparrow n}/z^{\uparrow n}$.

The easier case is when $z$ is not close to the negative $x$-axis, in the sense that $|\arg{z}|\le\frac{9\pi}{10}$.
Then one can verify that $|\arg(z+n)|\le\frac{9\pi}{10}$ and $\frac{|z|}{|z+n|}\le\frac{1}{\sin(\pi/10)}=:C$, for all $n\in\Z_{\ge 0}$.
Using the known estimate for Gamma functions,
\[
\frac{\Gamma(x+a)}{\Gamma(x+b)} = x^{a-b}\big( 1+O(|x|) \big), \text{ as }|x|\to\infty,\ |\arg{x}|\le \pi-\delta<\pi,
\]
we get
\begin{equation}\label{eq:ineq_frac_2f1}
\left| \frac{(z-\eta)^{\uparrow n}}{z^{\uparrow n}} \right| = \left| \frac{\Gamma(z-\eta+n)\Gamma(z)}{\Gamma(z-\eta)\Gamma(z+n)} \right|
\lesssim \left(\frac{|z|}{|z+n|}\right)^\eta \le C^\eta,
\end{equation}
for all $n\in\Z_{\ge 0}$, in the region $|\arg{z}|\le\frac{9\pi}{10}$.
Above and throughout this proof, we use the notation $A(z,n)\lesssim B(z,n)$ if there exists a constant $C>0$, independent of $z,n$, such that $A(z,n)\le C\cdot B(z,n)$.
As a result,
\[
\big|{}_2F_1\left(\gamma,z-\eta;z;c\right)\big| \le \sum_{n\ge 0}{ \left|\frac{(z-\eta)^{\uparrow n}}{z^{\uparrow n}}\right| \cdot\frac{\gamma^{\uparrow n}c^n}{n!} }\lesssim \sum_{n\ge 0}{\frac{\gamma^{\uparrow n}c^n}{n!}} = (1-c)^{-\gamma}.
\]
Together with the bound $\left|\Gamma(z)^{-1}\right|\le \exp\left(C''|z|^2\right)$, already used in the Plancherel case, we obtain the desired \cref{eq:alpha_finite_order} for $A=3$ and large enough $r>0$.

The case $|\arg z|\in(\frac{9\pi}{10},\pi]$ remains.
Note that $\Re z<0$ in this case.
Let $m\in\Z_{\ge 1}$ be such that $m-1< -\Re z<m$.
The main step is to bound the modulus of $\frac{(z-\eta)^{\uparrow n}}{z^{\uparrow n}}$; we show how to do it when $n\ge m$.
Note that $\Re(z+m)=\Re z+m>0$, so $|\arg(z+m)|\in [0,\frac{\pi}{2}]$, and we can use the inequality \eqref{eq:ineq_frac_2f1} that we already proved.
Moreover, $m-1< -\Re z=|\Re z|\le|z|$ implies that $|z|,|z+1|,\dots,|z+m-2|\ge 1$.
Combining all these inequalities, we deduce
\begin{align*}
\left| \frac{(z-\eta)^{\uparrow n}}{z^{\uparrow n}} \right| &= \left| \frac{(z-\eta)\cdots (z-\eta+m-1)(z-\eta+m)\cdots (z-\eta+n-1)}{z(z+1)\cdots (z+m-1)(z+m)\cdots (z+n-1)} \right|\\
&\le\frac{(|z|+\eta)\cdots (|z|+\eta+m-1)}{|z|\cdot|z+1|\cdots|z+m-1|}\cdot\left|\frac{(z+m-\eta)^{\uparrow (n-m)}}{(z+m)^{\uparrow (n-m)}}\right|\\
&\lesssim \frac{(|z|+\eta)\cdots (|z|+\eta+m-1)}{|z|\cdot|z+1|\cdots|z+m-1|}
\le\frac{(|z|+\eta+m-1)^m}{|z+m-1|}
\le\frac{(2|z|+\eta)^{|z|+1}}{|z+m-1|}.
\end{align*}
Let us take $r$ in \cref{eq:alpha_finite_order} to satisfy $r>\max\{\eta,4\}$, so that we can use $\eta<|z|$; also, $4<|z|$ implies $3|z|\le e^{|z|-1}$.
As a result,
\[
\left| \frac{(z-\eta)^{\uparrow n}}{z^{\uparrow n}} \right|
\lesssim\frac{(2|z|+\eta)^{|z|+1}}{|z+m-1|} \le\frac{(3|z|)^{|z|+1}}{|z+m-1|} \le\frac{(e^{|z|-1})^{|z|+1}}{|z+m-1|}
\le\frac{e^{|z|^2}}{|z+m-1|}.
\]
If $|z+m-1|\ge\frac{1}{10}$, then $\left|\frac{(z-\eta)^{\uparrow n}}{z^{\uparrow n}}\right|\lesssim\exp(|z|^2)$, for all $n\ge m$.
Some additional work is needed in the case $n<m$, but the same holds, therefore $\big|{}_2F_1\left(\gamma,z-\eta;z;c\right)\big|\le\sum_{n\ge 0}{ \left|\frac{(z-\eta)^{\uparrow n}}{z^{\uparrow n}}\right| \cdot\frac{\gamma^{\uparrow n}c^n}{n!} }\lesssim\exp(|z|^2)$. Together with $\left|\Gamma(z)^{-1}\right|\le \exp\left(C''|z|^2\right)$, we obtain the desired \cref{eq:alpha_finite_order} for $A=3$ and large enough $r>0$.

If $|z+m-1|<\frac{1}{10}$, then we can prove that $|(z+m-1)\cdot{}_2F_1\left(\gamma,z-\eta;z;c\right)|\lesssim\exp(|z|^2)$ as above, and $\big| (z+m-1)^{-1}\cdot\Gamma(z)^{-1} \big|\le\exp\big(C''|z|^2\big)$ by Cauchy's integral formula, as we did in the Plancherel case. The desired \cref{eq:alpha_finite_order} follows and the proof is finished.

\bibliographystyle{amsalpha}

\bibliography{biblio2015}

\end{document}